\documentclass[letterpaper]{article} 
\usepackage[preprint]{aaai2027}  
\usepackage[hyphens]{url}  
\usepackage{graphicx} 
\usepackage{natbib}  
\usepackage{caption} 
\title{Temporal Routing in Static Networks: The~Schedule Completion Problem}
\author{
    Michelle Döring,
    Niklas Mohrin,
    George Skretas
}
\affiliations{
    Hasso Plattner Institute, University of Potsdam, Germany \\
    michelle.doering@hpi.de, niklas.mohrin@student.hpi.de, georgios.skretas@hpi.de
}

\usepackage[dvipsnames,table]{xcolor}

\newcommand*{\textcite}{\citet}
\newcommand*{\Textcite}{\Citet}
\newcommand*{\thmcite}{\citet} 

\usepackage{amsmath,amssymb}
\usepackage{csquotes} 
\usepackage{microtype}

\usepackage[shortlabels]{enumitem}

\usepackage{mathtools}
\usepackage{dsfont}        

\usepackage{url} 

\usepackage{graphicx} 

\usepackage{array}     
\usepackage{booktabs}  

\usepackage{subcaption}

\usepackage
[
    ruled,         
    vlined,        
    linesnumbered, 
    nokwfunc,      
]
{algorithm2e} 
\DontPrintSemicolon
\SetProcNameSty{textsc}
\newcommand*{\SetupAlgoInputOutput}{
    \SetKwInOut{AlgoInput}{Input}
    \SetKwInOut{AlgoOutput}{Output}
    \SetKwInOut{AlgoOutputx}{Output\hspace{0.1ex}}
}
\newcommand*{\SetupAlgoPrePost}{
    \SetKwInOut{AlgoPrec}{Precondition}
    \SetKwInOut{AlgoPost}{Postcondition}
}
\SetKw{Continue}{continue}

\usepackage{xspace}   
\usepackage{xparse}
\usepackage[textsize=scriptsize]{todonotes}

\usepackage[small,langfont=caps]{complexity}
\let\R\undefined

\let\G\undefined
\let\E\undefined
\let\SE\undefined

\usepackage{bracealign}
\usepackage[capitalise,noabbrev]{cleveref}

\let\originalleft\left
\let\originalright\right
\renewcommand{\left}{\mathopen{}\mathclose\bgroup\originalleft}
\renewcommand{\right}{\aftergroup\egroup\originalright}

\xspaceaddexceptions{]\}}

\allowdisplaybreaks

\usepackage{amsthm,thmtools}
\declaretheorem[numberwithin=section]{theorem}
\declaretheorem[numberlike=theorem]{corollary, lemma, proposition, conjecture, observation}
\declaretheorem[numberwithin=theorem]{claim}
\declaretheorem[style=definition, numberlike=theorem]{definition}

\declaretheorem{result}

\crefname{observation}{Observation}{Observations}
\Crefname{observation}{Observation}{Observations}

\newcommand*{\N}{\mathds{N}}

\newcommand*{\R}{\mathds{R}}

\DeclarePairedDelimiter\cardinality{\lvert}{\rvert}

\providecommand\with{}
\newcommand\SetSymbol[1][]{\nonscript\,#1\vert \allowbreak \nonscript\,\mathopen{}}
\DeclarePairedDelimiterX\set[1]{\lbrace}{\rbrace}{ \renewcommand\with{\SetSymbol[\delimsize]} #1 }
\DeclarePairedDelimiterX\multiset[1]{\llbracket}{\rrbracket}{ \renewcommand\with{\SetSymbol[\delimsize]} #1 }
\DeclarePairedDelimiterX{\norm}[1]{\lVert}{\rVert}{#1}

\newcommand\varcal[1]{\text{\usefont{OMS}{cmsy}{m}{n}#1}}
\renewcommand{\O}{\varcal{O}}

\makeatletter
\def\moverlay{\mathpalette\mov@rlay}
\def\mov@rlay#1#2{\leavevmode\vtop{%
        \baselineskip\z@skip \lineskiplimit-\maxdimen
        \ialign{\hfil$\m@th#1##$\hfil\cr#2\crcr}}}
\newcommand{\charfusion}[3][\mathord]{
    #1{\ifx#1\mathop\vphantom{#2}\fi
            \mathpalette\mov@rlay{#2\cr#3}
        }
    \ifx#1\mathop\expandafter\displaylimits\fi}
\makeatother

\newclass{\paraNP}{paraNP}

\newcommand*{\OPT}{\mathrm{OPT}}

\newcommand*{\G}{\mathcal G}
\newcommand*{\E}{\mathcal E}
\newcommand*{\lifetime}{\Lambda}
\newcommand*{\lifespan}{\tau}
\newcommand*{\len}{\ell}
\newcommand*{\SE}{\mathrm{SE}}
\newcommand*{\SEflow}{\widehat{\SE}}
\newcommand*{\SEcompressed}[1]{\SEflow_{#1}}
\newcommand*{\PS}{\mathrm{PS}}
\newcommand*{\IO}{\mathrm{IO}}
\newcommand*{\St}{\mathrm{St}}
\newcommand*{\gapsize}{\gamma}
\newcommand*{\vin}{\mathrm{in}}
\newcommand*{\vout}{\mathrm{out}}
\newcommand*{\seg}{\mathrm{seg}}

\newcommand*{\firstt}{\alpha}
\newcommand*{\tnext}{t_{\mathrm{next}}}
\newcommand*{\Concentrate}{\textsc{Concentrate}}
\newcommand*{\Sparsify}{\textsc{Sparsify}}
\newcommand*{\ShiftBackwards}{\textsc{ShiftBackwards}}
\newcommand*{\ShiftForwards}{\textsc{ShiftForwards}}

\newlang{\TEDSC}{TEDSC}
\newlang{\lenTEDSC}{$\ell$-TEDSC}
\newlang{\lifeTEDSC}{$\lifespan$-TEDSC}
\newlang{\MaxFlow}{Max Flow}
\newlang{\QuickestFlow}{Quickest Flow}
\newlang{\EarliestArrivalFlow}{Earliest Arrival Flow}
\newlang{\TSAT}{3-SAT}
\newlang{\BinPacking}{Bin Packing}
\newlang{\UnaryBinPacking}{Unary Bin Packing}
\newlang{\Partition}{Partition}
\newlang{\VDP}{VDP}
\newlang{\lenVDP}{$\ell$-VDP}
\newlang{\EDP}{EDP}
\newlang{\lenEDP}{$\ell$-EDP}
\newlang{\DSHP}{DSHP}
\newlang{\TEDW}{TEDW}
\newlang{\TVDW}{TVDW}
\newlang{\LinePlanning}{Line Planning}
\newlang{\TrainTimetabling}{Train Timetabling}
\newlang{\VehicleScheduling}{Vehicle Scheduling}

\newenvironment{tightcenter}
{\parskip=0pt\par\nopagebreak\centering}
{\par\noindent\ignorespacesafterend}

\usepackage{tikz}
\usetikzlibrary{calc}
\usepackage{framed}

\newlength{\RoundedBoxWidth}
\newsavebox{\GrayRoundedBox}
\newenvironment{GrayBox}[1]%
{\setlength{\RoundedBoxWidth}{\columnwidth-5ex}
    \def\boxheading{#1}
    \begin{lrbox}{\GrayRoundedBox}
        \begin{minipage}{\RoundedBoxWidth}%
            }{%
        \end{minipage}
    \end{lrbox}%
    \begin{tightcenter}%
        \begin{tikzpicture}%
            \node(Text)[draw=black!90,fill=white,rounded corners,%
                inner sep=2ex,text width=\RoundedBoxWidth]%
            {\usebox{\GrayRoundedBox}};
            \coordinate(x) at (current bounding box.north west);
            \node [draw=white,rectangle,inner sep=3pt,anchor=north west,fill=white]
            at ($(x)+(6pt,.7em)$) {\boxheading};
        \end{tikzpicture}
    \end{tightcenter}\vspace{0pt}%
    \ignorespacesafterend
}

\newenvironment{problem}[2][]{\noindent\ignorespaces%
    \FrameSep=8pt%
    \parindent=0pt%
    \vspace*{-.5em}
    \begin{GrayBox}{\textsc{#2}}%
        \newcommand\Prob{{Problem:}}%
        \newcommand\Input{{Input:}}%
        
        \begin{tabular*}{\columnwidth}{@{\hspace{.1em}} >{\itshape} p{1.05cm} p{0.8\columnwidth} @{\hspace{.1em}}}%
            }{
        \end{tabular*}%
    \end{GrayBox}%
    \vspace*{-.5em}
    \ignorespacesafterend
}

\begin{document}
\maketitle

\begin{abstract}
    We introduce the \textsc{Temporally Edge Disjoint Schedule Completion} ($\TEDSC$) problem in which we need to cover a set of temporal edge demands $D$ by routing $k$ temporal walks through a directed static graph while remaining temporally edge disjoint.
This problem combines the temporal aspects of train routing and passenger demands with the static nature of real-world rail networks.
We show how to solve $\TEDSC$ in polynomial time.
Motivated by real-world constraints, we next investigate two restricted variants of $\TEDSC$ in which each walk can travel only for some bounded distance or time~$h$.
For both variants, we present a  $(2-h^{-1})$-approximation algorithm and fully characterize the parameterized landscape with respect to $k$, $h$, and $\cardinality D$.
Surprisingly, if we restrict the underlying train network, the two variants diverge:
The distance variant stays $\W[1]$-hard parameterized by $k$ even on a path of three vertices, whereas the time variant admits a polynomial-time algorithm on every fixed bidirected star graph.

\end{abstract}

\section{Introduction}
\label{sec:intro}
Railway planning is one of the most extensively studied optimization problems in transportation~\cite{bast2016route}.
Three central stages of railway planning are line planning~\cite{schobel2012line}, timetabling~\cite{cacchiani2015tutorial}, and vehicle scheduling~\cite{bunte2009overview}, and they are traditionally solved sequentially~\cite{ceder2002urban}.
Based on passenger demands, the line planning problem selects a set of services (lines) and their operating frequencies.
Given these routes and frequencies, the timetabling problem generates exact arrival and departure times for them.
Finally, using this fixed timetable, the vehicle scheduling problem assigns vehicles to execute these trips.
Motivated by this planning hierarchy, we study a complementary temporal graph abstraction that changes the granularity of the optimization.
Classical railway planning treats complete trips as the elementary planning objects.
In contrast, we consider time-stamped edge traversals as elementary operational requirements and seek to assemble them into a minimum number of feasible temporal walks.
We formalize this optimization problem as follows.

We are given a directed static graph $G$ and a set of required traversals $D$, where each element $(u,v,t) \in D$ specifies that one of the temporal walks in the solution must traverse the edge $(u,v)\in E(G)$ at time step $t\in \N^+$.
We denote by $\G(G, D)$ the temporal graph where every edge of $G$ is available at all time steps until $\lifetime(D) = \max_{(u,v,t) \in D} t$.
The temporal walks should be \emph{temporally edge disjoint} (TED), that is, no two walks should use the same edge at the same time.
We call a set of TED temporal walks a \emph{schedule}.
The goal is to find a schedule $S$ such that for every $(u,v,t)\in D$ there is a walk in $S$ that traverses the temporal edge $(u,v,t)$.
Given $D$ and $k \in \N$, a schedule $S$ that covers all demands (abbreviated as $D \subseteq \E(S)$) and has size $\cardinality S \le k$ is called a \emph{feasible schedule}.
We refer to $D$ as the \emph{draft schedule}, since it specifies elementary traversals instead of complete routes.
The goal is to \emph{complete} this draft schedule by finding a feasible schedule.
We call this the \textsc{Temporally Edge Disjoint Schedule Completion} ($\TEDSC$) problem.
\Cref{fig:tedsc-example-instance} shows an example input and a feasible schedule.

\begin{problem}[]{Temporally Edge Disjoint Schedule Completion}
    \Input & Graph $G$, draft schedule $D$, integer $k$. \\
    \Prob  & Does there exist a schedule $S$ in $\G(G, D)$ with $D \subseteq \E(S)$ and $\cardinality{S} \le k$?
\end{problem}

\begin{figure}[h]
    \centering
    \includegraphics[width=0.9\columnwidth, page=1]{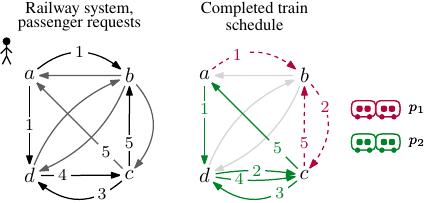}
    \caption{$\TEDSC$ instance with a feasible schedule of size $2$.}
    \label{fig:tedsc-example-instance}
\end{figure}

$\TEDSC$ captures the combinatorial problem of constructing a minimum number of vehicle routes satisfying prescribed time-stamped traversals.
In practice, however, vehicle operators' availability is often the binding constraint, as vehicle operators are subject to legal working-time regulations.
This motivates two constrained variants:
First, the length-constrained $\lenTEDSC$ problem, which adds the restriction that every walk in $S$ can contain at most $h$ temporal edges.
Second, the lifespan-constrained $\lifeTEDSC$ problem, where every walk in $S$ is allowed to be active for no more than $h$ time steps, meaning that the time difference between the start and end of a route cannot exceed $h$.
Intuitively, $\lenTEDSC$ bounds the amount of time that a vehicle travels on tracks, whereas $\lifeTEDSC$ bounds the time spent traveling plus the time spent waiting at stations.

The paper is organized as follows.
For the remainder of \cref{sec:intro}, we discuss our results and related work.
In \cref{sec:preliminaries}, we list preliminaries.
In \cref{sec:problem-def}, we formally define all studied variants of $\TEDSC$.
In \cref{sec:unconstrained,sec:hardness,sec:exact,sec:approx}, we discuss the techniques we use to obtain our results and describe the main ideas of our proofs.
In \cref{sec:conclusion}, we discuss future work.
The Technical Supplement of this submission contains an extended version of the paper, which includes proofs for all results and more detailed descriptions.

\subsection{Our Contribution}
\label{subsec:results}
We provide an algorithm that solves unconstrained $\TEDSC$ in polynomial time using a flow-based approach on the \emph{static expansion} of $\G(G, D)$.
As $\G(G, D)$ is defined implicitly and can span exponentially many time steps compared to the encoded size of the input instance, we perform a \emph{gap-compression} when constructing the static expansion.
\begin{result}
    $\TEDSC$ can be decided in polynomial time.
\end{result}

We move on to the constrained problems, $\lenTEDSC$ and $\lifeTEDSC$, starting with the parameterized complexity when parameterizing by $k$, $h$, and $\cardinality D$.
See \cref{tab:constrained_results} for an overview.
First, we show that both problems are $\NP$-hard, by providing a reduction from the $\TSAT$ problem.
The parameter $h$ is equal to $5$ in all constructed instances, which shows that routing trains for only short distances/durations is still difficult.
Additionally, we provide a parameterized reduction from the \lang{Edge Disjoint Paths} ($\EDP$) problem on planar DAGs.
By tracking the growth of the parameters in the reductions, we get two conditional lower bounds based on the Exponential Time Hypothesis (ETH).
\begin{result}
    Both $\lenTEDSC$ and $\lifeTEDSC$ are $\NP$-hard even if $h=5$.
\end{result}
\begin{result}
    Both $\lenTEDSC$ and $\lifeTEDSC$ parameterized by $k + \cardinality{D}$ are $\W[1]$-hard on planar DAGs.
\end{result}
\begin{result}
    Unless ETH fails, there is no algorithm that decides $\lenTEDSC$ or $\lifeTEDSC$ in time $2^{o(n + m + k + \cardinality D)}$ or in time $f(k + \cardinality D) n^{o(k + \cardinality D)}$.
\end{result}

On the positive side, we provide two algorithms that complement our hardness results.
Specifically, we construct an $\XP$ algorithm for constrained $\TEDSC$ parameterized by $k$, and an $\FPT$ algorithm for parameterizing by $k + h$.
\begin{result}
    Both $\lenTEDSC$ and $\lifeTEDSC$ parameterized by $k$ are in $\XP$.
\end{result}
\begin{result}
    Both $\lenTEDSC$ and $\lifeTEDSC$ parameterized by $k + h$ are in $\FPT$.
\end{result}

\begin{table}[t]
    \centering
    \begin{tabular}{ccc} \toprule
        Parameter            & $\lenTEDSC$                        & $\lifeTEDSC$ \\ \midrule
        $h$                  & \multicolumn{2}{c}{$\paraNP$-hard}                \\
        $k + \cardinality D$ & \multicolumn{2}{c}{$\W[1]$-hard}                  \\
        $k$                  & \multicolumn{2}{c}{$\XP$}                         \\
        $k + h$              & \multicolumn{2}{c}{$\FPT$}                        \\
        \bottomrule
    \end{tabular}
    \vspace{0.5em}
    \caption{
        The complexity of $\lenTEDSC$ and $\lifeTEDSC$ for the natural parameters $h$, $k$, and $\cardinality D$.
        Results for all other combinations of $k$, $h$, and $\cardinality D$ can be deduced using $k \le \cardinality D \le k \cdot h$.
    }
    \label{tab:constrained_results}
\end{table}

This completes the parameterized complexity landscape of the three problems for parameters $k$, $h$, and $\cardinality D$.
Next, we turn to solving constrained $\TEDSC$ on instances where the graph $G$ is restricted.
By reduction from $\BinPacking$, we show that $\lenTEDSC$ remains $\W[1]$-hard when parameterized by $k$, even if $G$ is a bidirected path of $3$ vertices.
To contrast this, we construct a polynomial time algorithm for $\lifeTEDSC$ on any fixed bidirected star graph.
\begin{result}
    $\lenTEDSC$ is $\W[1]$-hard when parameterized by $k$, even on the bidirected path graph with $3$ vertices.
\end{result}
\begin{result}
    There is an algorithm that solves $\lifeTEDSC$ on bidirected star graphs in $(n + \cardinality D)^{\O(n)}$ time.
\end{result}

Finally, we consider the optimization variant of $\TEDSC$ which asks for a schedule of minimum cardinality $\OPT$.
We provide an approximation algorithm using min-cost flows.
\begin{result}
    There is an algorithm that, given a \lang{Min-$\len$-TEDSC} or \lang{Min-$\lifespan$-TEDSC} instance, computes a feasible schedule $S$ with $\cardinality S \le (2 - h^{-1}) \OPT$ in polynomial time.
\end{result}

\subsection{Related work}
\label{sec:related-work}
$\TEDSC$ intersects three fundamental problems in transportation planning.
The $\LinePlanning$ problem models the selection of services that should be offered based on expected demands~\cite{schobel2012line,heinrich2022algorithms}.
The goal is to choose a number of routes through the transportation network and the frequencies at which those routes should be serviced.
This step does not yet include the times at which the lines should be serviced.
Instead, this is modeled by the $\TrainTimetabling$ problem~\cite{caprara2002modeling,cacchiani2015tutorial}.
Designing timetables corresponds to transforming predetermined walks in the static graph $G$ into temporal walks.
In this phase, the primary concerns are the logistics on the stations and tracks of the network.
Finally, once the routes and timetables are decided, the $\VehicleScheduling$ problem deals with allocating vehicles to the planned trips and accounts for tactical concerns such as fuel restrictions, vehicle maintenance, or delay tolerance~\cite{bertossi1987matching,carpaneto1989branch,lan2019optimizing}.

From a theoretic perspective, the problem of connecting pairs of terminal vertices using vertex or edge disjoint paths in static graphs is well studied.
The notions of vertex and edge disjointness are closely related as the two problems can be reduced to one another in polynomial time (see \cite{ahuja1993network} or \cref{sec:edp-reduction}).
In the special case that all terminal pairs connect the same two vertices $s$ and $t$, Menger's theorem~\cite{menger1927allgemeinen} tells us that the maximum number of disjoint paths is equal to the cardinality of the minimum cut.
This is the same as the value of a maximum flow from $s$ to $t$ if we assign unit capacities to all edges of the graph, and such a flow can be found in polynomial time~\cite{orlin2013max}.
In the general case, the endpoints of the paths are specified separately for each path, which drastically changes the complexity.
In the corresponding flow interpretation, each distinct source-sink pair constitutes a \emph{commodity} that demands one or more flow units to be routed from its source to its sink.
In general, the problem of deciding whether a multi-commodity flow of a given value exists is $\NP$-complete~\cite{karp1975computational}.
It remains $\NP$-hard even if there are only two commodities~\cite{even1976complexity}.
Their reduction from $\SAT$ uses one commodity with a single flow unit to choose the variable assignment and a second commodity with one flow unit for each clause to verify the assignment.
Recent work by \textcite{bodlaender2023parameterised} studies the complexity of two-commodity flow using structural graph parameters.
Returning to single-commodity flow, in addition to limiting how many flow units can traverse each edge, one can also introduce lower bounds for each edge that the sought flow must fulfill.
This makes the problem $\NP$-hard on undirected graphs~\cite{itai1978two}.
In directed networks, however, this requirement can be reduced to a standard flow problem~\cite{ford1962flows}.

The complexity of $\EDP$ parameterized by the number of paths $k$ greatly varies between directed and undirected graphs.
In undirected graphs, \textcite{robertson1995disjoint, kawarabayashi2012disjoint} give $\FPT$ algorithms, but in directed graphs, the problem is notoriously hard.
\Textcite{fortune1980directed} show $\NP$-hardness, even for $k = 2$.
For their reduction from $\SAT$, they design a gadget called the \emph{switch} which can be attached to two edges so that later on, only one of those two edges can be used.
The two paths leave the switch again and can be reused elsewhere in the graph, for example, to power additional switches.
Recently, \textcite{cavallaro2024edge} show that $\EDP$ parameterized by $k$ on Eulerian directed graphs is in $\FPT$.
\Textcite{chitnis2023tight} gives a tight lower bound for $\EDP$ on planar DAGs.

When considering paths of bounded length, Menger's theorem does not hold anymore~\cite{adamek1971remarks,exoo1983line}, which means that finding disjoint paths and cuts are separate problems.
\Textcite{itai1982complexity} study the problem of finding disjoint paths of bounded length.
They give an almost complete classification of the values of $h$ for which the problem of finding vertex or edge disjoint paths of length equal to or at most $h$ is $\NP$-hard.
In particular, they show that all variants are $\NP$-hard starting with $h = 5$.
\Textcite{golovach2011paths} study bounded length disjoint paths and cuts and give a complete characterization for the parameters $k$ and $h$.

The study of temporal paths begins with work by \textcite{cooke1966shortest} on shortest temporal walks with varying travel times.
\Textcite{halpern1974shortest} study shortest paths in temporal graphs and introduce what we today call the static expansion of a temporal graph.
The term \emph{temporal network} is first used by \textcite{kempe2002connectivity}.
\Textcite{klobas2023interference} study the problem of finding temporally vertex disjoint walks connecting the given terminal pairs.
They show that this problem is $\W[1]$-hard when parameterized by $k$.
Notably, in their problem, each walk must connect a predetermined pair of vertices, whereas in our problem, it is not fixed which demand is covered by which walk.
Recent work by \textcite{deligkas2025many} studies a similar problem to the one posed here:
Given a temporal graph, can it be exactly covered using $k$ temporal walks?
Thus, in their work, the temporal network is given in its entirety and the walks cannot move on edges that are not demanded in the input.
In the case of strict travel in directed networks (which is the closest to our setting), the authors give a polynomial time algorithm.
Finally, \textcite{bhaskar2025approximability} introduce a train routing problem in which a set of trains must be routed along temporal $s$-$t$-paths while maintaining a safe distance, the \emph{headway}, between them.
The concept of headway generalizes temporal edge disjointness and also incorporates travel times, whereas in our problem, all edges can be traversed in unit time.
They study approximation algorithms for minimizing the makespan (the latest arrival time among all trains).
Another perspective on the $\TEDSC$ problem is as a realization problem: a problem in which a temporal graph should be constructed so that it satisfies some property.
In our case, the sought graph would be the union of the temporal walks.
While there is much ongoing work on temporal realization problems~\cite{erlebach2025recognizing, klobas2025temporal, meusel2025directed}, to the best of our knowledge, none of it covers the problem studied here.

The problem of finding \emph{flows over time} (also called dynamic flows)~\cite{fleischer2007quickest,skutella2009introduction} is tangentially related to the problems studied here.
Particular problems like $\QuickestFlow$ and $\EarliestArrivalFlow$ use similar ideas, for example, that of a time-expanded network, but optimize in the time domain instead of fixing it like the draft schedule of $\TEDSC$.
To the best of our knowledge, no equivalent problem has been studied in this area.

\section{Preliminaries}\label{sec:preliminaries}
Let $\N = \set{0, 1, 2, \dots}$ denote the natural numbers including zero and $\N^+ = \N \setminus \set 0$.
For $n, m \in \N$ we write $[n, m] = \set{n, n + 1, \dots, m - 1, m}$ and $[n] = [1, n]$.
For a graph $G$, we abbreviate $n = \cardinality{V(G)}$ and $m = \cardinality{E(G)}$.
Throughout this work, we assume that the graphs given in problem instances do not contain any isolated vertices, as they are irrelevant for all discussed problems.
Unless stated otherwise, all graphs in this work are directed.
We say that a directed graph $G$ is \emph{bidirected} if for all $(u, v) \in E(G)$ we also have $(v, u) \in E(G)$~\cite{mehlhorn2008algorithms}.
For a walk $p$ in $G$, we denote the set of vertices and edges visited as $V(p)$ and $E(p)$, respectively.
We say that two walks $p$ and $q$ are \emph{vertex disjoint} if $V(p) \cap V(q) = \emptyset$ and \emph{edge disjoint} if $E(p) \cap E(q) = \emptyset$.
Given a set of edges $F \subseteq E(G)$, $p$ and $q$ are \emph{edge disjoint on $F$} if $E(p) \cap E(q) \cap F = \emptyset$.
Let $p'$ and $q'$ denote the subwalks of $p$ and $q$ with the first and last vertex removed.
If $p'$ and $q'$ are vertex disjoint, then $p$ and $q$ are \emph{internally vertex disjoint}~\cite{diestel2017graph}.
We denote the \emph{distance} of two vertices $u, v \in V(G)$ as $d_G(u, v)$.

\subsection{Temporal Graphs}

A \emph{temporal graph} $\G$ is a static graph $G$ equipped with a \emph{labeling} $\lambda\colon E(G) \to 2^{\N^+} \setminus \set \emptyset$.
The interpretation of $\lambda$ is that an edge $e$ is available only at the time steps $\lambda(e)$.
For $t \in \N$ let $E_t(\G) = \set{e \in E(G) \with t \in \lambda(e)}$ be the set of edges available at time step $t$.
We denote by $\lifetime(\G) = \max\set{t \in \N \with E_t(\G) \ne \emptyset}$ the \emph{lifetime} of $\G$.

We refer to the combination of an edge $(u, v) \in E(G)$ and a time step $t \in \lambda((u, v))$ as a \emph{temporal edge} $(u, v, t)$.
We denote the set of all temporal edges of $\G$ as $\E(\G)$.
A \emph{temporal walk} is a sequence of pairs of vertices and time steps $(v_0, t_0), (v_1, t_1), \dots, (v_k, t_k)$ such that the sequence of time steps $(t_i)_{i = 0}^k$ is strictly increasing and for each $i \in [k]$ the vertices are connected by the corresponding temporal edge $((v_{i-1}, v_i), t_{i-1}) \in \E(\G)$.
The literature distinguishes between \emph{strict} and \emph{non-strict} walks:
The former is the definition given here, whereas the latter requires that the sequence of time steps is only non-decreasing, that is, multiple edges may be traversed at the same time step.
We consider only strict temporal walks.
We denote by $\len(p) = k$ the \emph{length} of $p$ and by $\lifespan(p) = t_k - t_0$ the \emph{lifespan} of $p$.
Additionally, $\firstt(p) = t_0$ denotes the time step of the first temporal edge of $p$.
Furthermore, we define the set of temporal edges of a walk $p$ as $\E(p)$.
For a set of temporal walks $S$, we denote $\E(S) = \bigcup_{p \in S} \E(p)$.
Two temporal walks $p$ and $q$ are \emph{temporally edge disjoint}~(TED) if $\E(p) \cap \E(q) = \emptyset$.
A set of temporal walks is TED if its elements are pairwise TED.

When studying a temporal graph $\G$, the use of its \emph{static expansion} is a standard technique~\cite{kempe2002connectivity,wu2014path,mertzios2019temporal}.
The idea is to reduce the question of reachability in $\G$ to reachability in a suitable static graph, which is a time-expanded version of $\G$.

\begin{definition}[Static expansion]
    \label{def:static-expansion}
    Given a temporal graph $\G = \langle G, \lambda \rangle$, we define its static expansion $\SE(\G)$ to be the static graph with vertices
    \begin{align*}
        V(\SE(\G))        & = \set{v_t \with v \in V(G), t \in [\lifetime(\G) + 1]},
        \intertext{and edges}
        E_{\mathrm{wait}} & = \set{(v_t, v_{t+1}) \with v \in V(G), t \in [\lifetime(\G)]}, \\
        E_{\mathrm{move}} & = \set{(u_t, v_{t+1}) \with (u, v, t) \in \E(\G)},              \\
        E(\SE(\G))        & = E_{\mathrm{wait}} \cup E_{\mathrm{move}}.
    \end{align*}
    We refer to the vertices $V_t = \set{v_t}_{v \in V(G)}$ of a time step $t$ as the $t$-th layer of $\SE(\G)$.
    The edges in $E_{\mathrm{wait}}$ and $E_{\mathrm{move}}$ are called \emph{waiting edges} and \emph{moving edges}, respectively.
\end{definition}
\Cref{fig:tedsc-example-se} shows an example static expansion for the temporal graph of the $\TEDSC$ instance from \cref{fig:tedsc-example-instance}.
\begin{figure}[h]
    \centering
    \includegraphics[height=6cm, page=3]{assets/tedsc-example}
    \caption{
        Example of a static expansion.
        The highlighted edges correspond to the temporal edges of the draft schedule.
    }
    \label{fig:tedsc-example-se}
\end{figure}
Observe that a temporal walk in $\G$ corresponds to a path in $\SE(\G)$ where taking a temporal edge is modeled by traversing a moving edge of $\SE(\G)$ and waiting on a vertex is modeled by traversing waiting edges.

\subsection{Network Flows}
\label{sec:flow-prelims}

A graph $G$ equipped with a capacity function $c\colon E(G) \to \N$ is called a \emph{network}.
Fix two distinct vertices $s, t \in V(G)$ called the \emph{source} and the \emph{sink} respectively.
An $s$-$t$-\emph{flow} is a function $f\colon E(G) \to \N$ satisfying $f(u, v) \le c(u, v)$ for all $(u, v) \in E(G)$ and $\sum_{(u, v) \in E(G)} f(u, v) = \sum_{(v, w) \in E(G)} f(v, w)$ for all $v \in V(G) \setminus \set{s, t}$.
We extend the functions $f$ and $c$ to all of $V(G) \times V(G)$ such that for $(u, v) \notin E(G)$ we have $f(u, v) = c(u, v) = 0$.
Without loss of generality, we assume that for all $u, v \in V(G)$, at least one of $f(u, v)$ and $f(v, u)$ is $0$.
We assume further that $f(\cdot, s) = f(t, \cdot) = 0$, that is, there is no flow entering the source $s$ and no flow leaving the sink $t$.
These assumptions can be enforced in polynomial time.
A flow $f$ can be decomposed into a (multi-)set of $s$-$t$-paths and simple cycles in $G$ such that $f(u, v)$ is equal to the number of paths and cycles that contain the edge $(u, v)$.
If $G$ is acyclic, $f$ decomposes into a (multi-)set of paths only.
The decomposition of a flow can be computed in $\O(n m)$ time, so we assume it is given whenever a flow is computed~\cite{ahuja1993network}.

The \emph{value} of $f$ is $\cardinality f = \sum_{v \in V(G)} f(s, v)$.
The problem of finding a flow of maximum value has been well studied.
It is possible to find such a flow in time $\O(n m)$ by using the algorithms by \textcite{king1994faster,orlin2013max}.

We can also solve extended variants of the $\MaxFlow$ problem.
The most relevant additions are \emph{lower bounds} and \emph{costs}.
In a network with lower bounds, every edge $e$ does not just specify a capacity but instead a range $r(e) = (l, u)$ so that a feasible flow must satisfy $l \le f(e) \le u$.
For directed networks, the problem of finding a flow adhering to lower bounds can be reduced to the regular $\MaxFlow$ problem~\cite{ford1962flows}.
In a network with costs, we are given an additional function that assigns to every edge $e \in E(G)$ the value~$\mathrm{cost}(e) \in \R$ and the objective is to find a feasible flow $f$ that minimizes $\mathrm{cost}(f) = \sum_{e \in E(G)} f(e) \mathrm{cost}(e)$.
Note that the empty flow has cost $0$, but might (a) not be feasible due to lower bounds and (b) not be optimal as costs can be negative.
If the network contains cycles with an overall negative cost, a min-cost flow does not exist.
In this work, all flow networks are DAGs and all costs are non-negative, so these problems do not affect us.
It is possible to find a minimum-cost flow $f$ with a fixed value in polynomial time~\cite{tardos1985strongly}.

\subsection{Satisfiability}

The $\SAT$ problem is central to the study of computational complexity.
It is the first problem which was shown to be $\NP$-hard~\cite{cook1971complexity,karp2009reducibility}.
An instance of $\SAT$ is a boolean formula $\varphi$ in conjunctive normal form, meaning:
\begin{enumerate}
    \item There are $n$ variables $\set{x_i}_{i = 1}^n$.
    \item The formula $\varphi$ is the conjunction of $m$ clauses $\set{C_j}_{j = 1}^m$, that is, $\varphi = C_1 \land C_2 \land \dots \land C_m$.
    \item Each clause is the disjunction of $k_i$ literals, that is, for all $j \in [m]$, $C_j = \ell_1 \lor \ell_2 \lor \dots \lor \ell_{k_i}$ where each $\ell_p = x_i$ or $\ell_p = \lnot x_i$ for some $i \in [n]$.
\end{enumerate}
The objective is to decide whether there is an assignment $A\colon [n] \to \set{0, 1}$ such that $\varphi$ is true when setting $x_i = A(i)$ for each $i \in [n]$.
The $k$-$\SAT$ problem is a special case of $\SAT$ where all $k_i \le k$.
Notably, for all $k \ge 3$, $k$-$\SAT$ is $\NP$-complete.

Assuming $\P \ne \NP$ means that there cannot be an algorithm that solves $\SAT$ in time $(n + m)^{\O(1)}$.
However, in the context of parameterized complexity, a stronger assumption is often used that gives rise to much stronger lower bounds: the \emph{Exponential Time Hypothesis} (ETH).
Intuitively, ETH states that there is no algorithm that solves $\TSAT$ in $2^{o(n)} (n + m)^{\O(1)}$ time.
\begin{conjecture}[ETH, \thmcite{impagliazzo2001complexity}]
    Let $\delta_3$ be the infimum of the set of all constants $c$ for which there exists an algorithm solving $\TSAT$ in time $2^{cn} \cdot \poly(n, m)$.
    It holds that $\delta_3 > 0$.
\end{conjecture}
By tracking the relations of parameters in parameterized reductions, we can prove lower bounds for other problems too, as we will do in this paper.
In particular, if the parameter $k$ of a parameterized problem is bounded as $\O(n)$, then this implies that under ETH, no algorithm can solve said problem in time $2^{o(k)}$.
While this already allows for strong lower bounds, it turns out that using the so-called \emph{sparsification lemma}, an even stronger lower bound for $\TSAT$ holds that allows us to also depend on the number of clauses.
\begin{proposition}[\thmcite{cygan2015parameterized}]
    \label{thm:eth-nm}
    Unless ETH fails, there is no algorithm that solves $\TSAT$ in $2^{o(n + m)}$ time.
\end{proposition}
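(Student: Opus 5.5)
This is a standard fact in parameterized complexity, and the plan is to derive it from ETH via the sparsification lemma of Impagliazzo, Paturi, and Zane. In its standard form the lemma says that for every $\varepsilon > 0$ there is a constant $C_\varepsilon$ and an algorithm that, given a $\TSAT$ formula $\varphi$ on $n$ variables, runs in $2^{\varepsilon n}\poly(n)$ time and outputs formulas $\varphi_1, \dots, \varphi_s$ with three properties:
\begin{enumerate}
    \item there are $s \le 2^{\varepsilon n}$ of them;
    \item each $\varphi_i$ is a $\TSAT$ formula on the same $n$ variables with at most $C_\varepsilon n$ clauses;
    \item $\varphi$ is satisfiable if and only if some $\varphi_i$ is satisfiable.
\end{enumerate}

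We argue by contraposition. Suppose an algorithm $\mathcal A$ solves $\TSAT$ in time $2^{o(n+m)}$. We show that for every constant $c > 0$ there is an algorithm for $\TSAT$ running in $2^{cn}\poly(n,m)$ time. This forces $\delta_3 = 0$, contradicting ETH as stated in the conjecture above.

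Fix $c > 0$ and set $\varepsilon = c/2$. Given $\varphi$, apply sparsification to obtain $\varphi_1,\dots,\varphi_s$, then run $\mathcal A$ on each $\varphi_i$ and accept if and only if some run accepts. Correctness is property 3.

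For the running time, instance $\varphi_i$ has size parameter $n + m_i \le (1 + C_\varepsilon)n$. Since $\mathcal A$ runs in $2^{o(n+m)}$ time, there is an $N_0$ such that for $n \ge N_0$ each call takes at most $2^{(\varepsilon/(1+C_\varepsilon))(1+C_\varepsilon)n} = 2^{\varepsilon n}$ steps. Smaller instances are solved by brute force in constant time, because $N_0$ depends only on $c$. The total time is therefore at most
\[
2^{\varepsilon n}\poly(n) + 2^{\varepsilon n}\cdot 2^{\varepsilon n} = 2^{cn}\poly(n,m).
\]

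The only delicate step is this choice of constants. The $o(\cdot)$ must be instantiated with a threshold that depends on $\varepsilon$ and on $C_\varepsilon$, and one must note that $C_\varepsilon$ is a constant once $c$ is fixed, so the blow-up from $n$ to $(1+C_\varepsilon)n$ is absorbed into the little-$o$. Everything else is immediate.

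Alternatively, since the statement is cited from \textcite{cygan2015parameterized}, it suffices to invoke their Theorem 14.4, which is exactly this claim. The argument above is the proof given there.
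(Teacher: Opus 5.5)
Your proof is correct and is the standard sparsification-lemma argument from the cited textbook. The paper gives no proof of its own and only cites that result. One small imprecision: for $n < N_0$, brute force costs $2^{N_0}\cdot\mathrm{poly}(m)$ rather than constant time, but this is still absorbed into $2^{cn}\,\mathrm{poly}(n,m)$.
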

For a more thorough introduction to the topic, we refer the reader to the textbook by \textcite{cygan2015parameterized}.
All statements in this work that assume ETH clearly state this assumption upfront.
Note that ETH implies $\FPT \ne \W[1]$~\cite{cygan2015parameterized}.

\section{The Schedule Completion Problem}
\label{sec:problem-def}
Consider a rail network modeled by a static directed graph $G$, where each edge represents a track that can be used by at most one train at any given time.
A \emph{draft schedule} is a set $D$ of demands, where each demand $(u, v, t)$ is a passenger request to travel along the edge $(u, v) \in E(G)$ at time step $t \in \N$.
We define $\lifetime(D) = \max_{(u,v,t) \in D} t$ as the \emph{lifetime} of the draft schedule; when $D$ is clear from context we write just $\lifetime$.
We want to operate a small fleet of trains, each following a route through the network, such that all passenger requests are covered.
Each train corresponds to a temporal walk, and since two trains cannot share the same track at the same time, the walks must be temporally edge disjoint (TED).
The \emph{schedule completion problem} asks: how many trains suffice?

Formally, since $G$ is a static graph with all edges always available, we model it as the temporal graph $\G(G, D) = \langle G, \lambda_D \rangle$ where $\lambda_D\colon e \mapsto [\lifetime(D)]$.
When $G$ and $D$ are clear from context, we abbreviate $\G = \G(G, D)$.
A \emph{schedule} for $D$ is a set $S$ of TED temporal walks in $\G(G, D)$.
In addition to the draft schedule $D$, we are given an integer $k \in \N$ that represents the number of available trains.
We say that $S$ is a \emph{feasible schedule} if $D \subseteq \E(S)$ (that is, for each demand $(u, v, t) \in D$ there is a walk $p \in S$ with $(u, v, t) \in \E(p)$) and $\cardinality S \le k$.
Without loss of generality, we assume there is a demand at time step $1$, as otherwise the times of all demands can be shifted down.

\begin{problem}[]{Temporally Edge Disjoint Schedule Completion}
    \Input & Graph $G$, draft schedule $D$, integer $k$. \\
    \Prob  & Does there exist a schedule $S$ in $\G(G, D)$ with $D \subseteq \E(S)$ and $\cardinality{S} \le k$?
\end{problem}

\subsection{Constrained variants}
$\TEDSC$ is a deliberate simplification of the real-world problem: trains can travel arbitrarily far and operate indefinitely.
In practice, both constraints are bounded: the fuel capacity limits the distance a train can travel, and working-time regulations limit the time a train can operate.
We study two variants of $\TEDSC$ that capture these restrictions, one for each constraint.
The \emph{length-constraint} bounds the number of edges a train may traverse.
The \emph{lifespan-constraint} instead bounds the total active time of a train, that is, the time between its first and last move.

\begin{problem}[]{$\chi$-$\TEDSC$ ($\chi \in \set{\len, \lifespan}$)}
    \Input & Graph $G$, draft schedule $D$, integers $k, h$. \\
    \Prob  & Does there exist a schedule $S$ in $\G(G, D)$ with $D \subseteq \E(S)$, $\cardinality S \le k$, and $\chi(p) \le h$ for all $p \in S$?
\end{problem}

\Cref{fig:tedsc-example-solutions} shows example solutions for all $\TEDSC$ variants for the instance of \cref{fig:tedsc-example-instance}.

\begin{figure}[h]
    \centering
    \includegraphics[width=0.9\columnwidth, page=2]{assets/tedsc-example}
    \caption{
        Three example solutions to (constrained) $\TEDSC$ instances:
        the unconstrained schedule consisting of $p_1=(a,d,c,d,a)$ and $p_2=(a,b,c,b)$;
        the length-constrained schedule $p_1=(a,d,c,b)$, $p_2=(c,d,c,a)$, and $p_3=(a,b)$;
        and the lifespan-constrained schedule $p_1=(a,d,c)$, $p_2=(c,d,c,a)$, $p_3=(a,b)$, and $p_4=(c,b)$.
    }
    \label{fig:tedsc-example-solutions}
\end{figure}

\subsection{Basic observations}
The two problems $\lenTEDSC$ and $\lifeTEDSC$ are closely related as the only difference between the length $\len$ and the lifespan $\lifespan$ of a walk is whether its remaining ``capacity'' decreases whenever it waits on a vertex.
As the lifespan of a temporal walk bounds its length, any schedule satisfying the lifespan constraint for some value $h$ also satisfies the length constraint for the same $h$.

\begin{observation}
    \label{thm:len-life-equiv}
    Let $\mathcal I$ be a $\lifeTEDSC$ instance and let $S$ be a feasible schedule.
    Then $S$ is also a feasible $\lenTEDSC$ schedule for $\mathcal I$.
\end{observation}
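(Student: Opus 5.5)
The observation follows from comparing the length and lifespan of a single temporal walk; every other feasibility condition is identical in the two problems. The plan is therefore to show that every temporal walk $p$ satisfies $\len(p) \le \lifespan(p)$, or a bound of the same strength, and then read off feasibility for $\lenTEDSC$.

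\textbf{Relating length and lifespan.} Let $p = (v_0, t_0), (v_1, t_1), \dots, (v_k, t_k)$ be a temporal walk, so $\len(p) = k$. We only consider strict walks, so the sequence $(t_i)_{i=0}^{k}$ is strictly increasing integers. Hence $t_i \ge t_{i-1} + 1$ for each $i \in [k]$. Telescoping gives
\[
\lifespan(p) = t_k - t_0 = \sum_{i=1}^{k} (t_i - t_{i-1}) \ge k = \len(p).
\]
If the intended convention measures the lifespan through the last used temporal edge, the count shifts by one. I would check this indexing against the definitions $\firstt(p) = t_0$ and $\lifespan(p) = t_k - t_0$. Under the stated definition the inequality $\len(p) \le \lifespan(p)$ holds directly. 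This convention check is the only step needing care, and it is routine.

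\textbf{Transferring feasibility.} Let $S$ be a feasible $\lifeTEDSC$ schedule for $\mathcal I = (G, D, k, h)$. The following properties carry over unchanged, since they do not mention $\chi$:
\begin{itemize}
\item $S$ is a set of pairwise TED temporal walks in $\G(G, D)$;
\item $D \subseteq \E(S)$;
\item $\cardinality S \le k$.
\end{itemize}
The only remaining condition is the per-walk bound. For every $p \in S$ we have $\len(p) \le \lifespan(p) \le h$. Hence $S$ satisfies the length constraint and is a feasible $\lenTEDSC$ schedule for the same instance $\mathcal I$.
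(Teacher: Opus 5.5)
Your proof is correct and is essentially the paper's argument: the paper justifies the observation in one line by noting that a temporal walk's lifespan bounds its length. Your telescoping over the strictly increasing time steps, using $\lifespan(p) = t_k - t_0$ as defined (so your hedge about the indexing convention is unnecessary), makes $\len(p) \le \lifespan(p)$ explicit, and the other feasibility conditions then carry over unchanged.
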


Recall that the goal of $\TEDSC$ is to cover multiple demands with each walk.
Our definition captures this by fixing the number of walks $k$ that may be used.
However, if an instance permits at least as many walks as demands, this instance can be solved by employing one walk for every demand.
\begin{observation}
    \label{thm:k-lt-D}
    Let $\mathcal I$ be a $\TEDSC$, $\lenTEDSC$, or $\lifeTEDSC$ instance.
    If $\cardinality D \le k$, then $\mathcal I$ is feasible.
\end{observation}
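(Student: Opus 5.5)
The plan is to build the schedule directly: use one temporal walk for each demand. Write $D = \set{(u_1, v_1, t_1), \dots, (u_r, v_r, t_r)}$ with $r = \cardinality D \le k$. For each $i \in [r]$, let $p_i$ be the single-edge temporal walk $(u_i, t_i), (v_i, t_i + 1)$. Then set $S = \set{p_1, \dots, p_r}$. We check in turn that $S$ is a schedule, that it covers $D$, that it respects the size bound, and, for the constrained variants, that each walk satisfies the constraint.

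First, each $p_i$ is a valid temporal walk in $\G(G, D)$. The edge $(u_i, v_i)$ lies in $E(G)$ by the definition of a demand. The time step $t_i$ lies in $[\lifetime(D)] = \lambda_D((u_i, v_i))$ because $\lifetime(D)$ is the maximum time step over all demands, so $(u_i, v_i, t_i) \in \E(\G)$. The time sequence $t_i < t_i + 1$ is strictly increasing, so $p_i$ is a strict temporal walk.

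Second, $S$ is TED. We have $\E(p_i) = \set{(u_i, v_i, t_i)}$, and distinct demands are distinct elements of the set $D$, so $\E(p_i) \cap \E(p_j) = \emptyset$ for $i \ne j$. Coverage is immediate, since $\E(S) = D$. The size bound holds because $\cardinality S = r \le k$.

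Third, for the constrained variants, each walk has $\len(p_i) = 1$ and $\lifespan(p_i) = 0$, using the paper's convention that the lifespan is measured from the first to the last time step of the sequence. This satisfies $\chi(p_i) \le h$ whenever $h \ge 1$, and for $\lifespan$ even when $h \ge 0$. I expect this to be the only delicate point. For $\lenTEDSC$ with $h = 0$ and $D \ne \emptyset$, no walk can cover any demand, so the instance would be infeasible. I would handle this by assuming $h \ge 1$ as a non-degeneracy convention, since $h = 0$ is trivially decidable. If $D = \emptyset$, the empty schedule works in every case.
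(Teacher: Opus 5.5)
Your construction is the paper's own argument (one walk per demand), and your checks of validity, temporal edge disjointness, coverage, and the size bound are correct.

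There is one slip in your last paragraph. The paper defines $\lifespan(p) = t_k - t_0$, so your walk $(u_i, t_i), (v_i, t_i + 1)$ has lifespan $(t_i + 1) - t_i = 1$, not $0$. This matches the paper's usage elsewhere: the five-edge walks in the $\TSAT$ reduction have lifespan $5$, and the lifespan check in \cref{alg:life-fpt-kh} subtracts the first demand's time $t$ from the last demand's $t+1$.

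Your claim that the lifespan variant works ``even when $h \ge 0$'' is therefore wrong. For $h = 0$ and $D \ne \emptyset$, every walk covering a demand has lifespan at least $1$, so both constrained variants are infeasible. The assumption $h \ge 1$, which the paper leaves implicit, is needed for $\lifeTEDSC$ just as for $\lenTEDSC$.
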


For the constrained variants, a similar bound on the number of demands holds for the other direction.
In particular, if there are few walks that can each travel only a small distance, then they can cover only few temporal edges.
We can thus immediately reject instances with particularly small $k$ and $h$.
\begin{observation}
    \label{thm:D-le-kh}
    Let $\mathcal I$ be a $\lenTEDSC$ or $\lifeTEDSC$ instance.
    If $k \cdot h < \cardinality D$, then $\mathcal I$ is infeasible.
\end{observation}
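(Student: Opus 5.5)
We argue by contraposition: assuming some feasible schedule $S$ exists, we show $\cardinality D \le k \cdot h$. The idea is a counting argument comparing the number of temporal edges the walks of $S$ can cover with the number of demands they must cover.

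First consider a $\lenTEDSC$ instance. A temporal walk $p$ with $\len(p) \le h$ traverses exactly $\len(p)$ temporal edges, since its time steps are strictly increasing. Hence $\cardinality{\E(p)} \le h$ for every $p \in S$. Because $D \subseteq \E(S) = \bigcup_{p \in S} \E(p)$, the union bound gives
\begin{equation*}
\cardinality D \le \cardinality{\E(S)} \le \sum_{p \in S} \cardinality{\E(p)} \le \cardinality S \cdot h \le k \cdot h,
\end{equation*}
which contradicts $k \cdot h < \cardinality D$. Temporal edge disjointness is not needed here; the union bound alone suffices.

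For a $\lifeTEDSC$ instance, we invoke \cref{thm:len-life-equiv}. A feasible lifespan-constrained schedule is also a feasible length-constrained schedule for the same $k$ and $h$. Equivalently, if $p = (v_0,t_0),\dots,(v_\ell,t_\ell)$, strict monotonicity of the time steps gives $\len(p) = \ell \le t_\ell - t_0 = \lifespan(p) \le h$. The same counting bound then applies.

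There is no real obstacle. The only point needing care is the correspondence between the length of a walk and the number of its temporal edges. The walk $p$ uses the temporal edges $((v_{i-1},v_i),t_{i-1})$ for $i \in [\ell]$, so it has exactly $\ell$ of them. These are pairwise distinct because their time steps are distinct, although only the upper bound is needed.
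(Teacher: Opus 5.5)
Your proof is correct and matches the paper's own argument: both show that a feasible schedule gives $\cardinality D \le \sum_{p \in S} \len(p) \le k \cdot h$, and both handle the lifespan variant through $\len(p) \le \lifespan(p)$. You are also right that only the union bound is used, not temporal edge disjointness.
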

\begin{proof}
    For $\lenTEDSC$, suppose there is a feasible solution $S$.
    Every step by any walk covers at most one demand.
    Thus, the total length of all walks in $S$ is at least $\cardinality D$.
    As $\cardinality S \le k$,
    \[
        k \cdot h < \cardinality D \le \sum_{p \in S} \ell(p) \le k \cdot h,
    \]
    a contradiction.
    For $\lifeTEDSC$, the same proof works with the additional step of bounding $\len(p) \le \lifespan(p)$ in the final inequality.
\end{proof}

Our final observation holds only for $\lenTEDSC$.
Intuitively, if the length bound $h$ is large, it poses no restriction at all.
\begin{observation}
    \label{thm:length-h-le-nD}
    Let $\mathcal I = (G, D, k, h)$ be a $\lenTEDSC$ instance and suppose that $h \ge \cardinality{V(G)} \cdot \cardinality D$.
    Then $\mathcal I$ is feasible if and only if the corresponding unconstrained $\TEDSC$ instance $\mathcal I' = (G, D, k)$ is feasible.
\end{observation}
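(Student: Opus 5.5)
One direction is immediate: any feasible $\lenTEDSC$ schedule for $\mathcal I$ is a feasible schedule for $\mathcal I'$, since it only drops the length constraint. For the converse, take a feasible schedule $S$ for $\mathcal I'$ and transform it into a schedule $S'$ with $\cardinality{S'} \le \cardinality S \le k$, $D \subseteq \E(S')$, and $\len(p) \le h$ for every $p \in S'$.

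\textbf{Trimming the ends.} For each walk $p \in S$, cut it to the subwalk between its first and last temporal edge belonging to $D$. Walks covering no demand are discarded. The resulting set is still TED, since its temporal edges are a subset of $\E(S)$, and it still covers $D$.

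\textbf{Shortcutting between demands.} A walk may still be long between two consecutive demands it covers. Say it covers $(u,v,t) \in D$ and next covers $(u',v',t') \in D$, and let $q$ be its subwalk from $(v, t+1)$ to $(u', t')$. Replace $q$ by a temporal walk that follows a shortest path $P$ from $v$ to $u'$ in $G$, with at most $n-1$ edges. It waits at $v$ and then traverses the edges of $P$ at the latest possible time steps, ending right at $t'$. This is well defined because $q$ itself witnesses that $d_G(v,u') \le t'-(t+1)$.

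After this, every walk consists of at most $\cardinality D$ demand edges and at most $\cardinality D - 1$ connecting segments of length at most $n-1$. Its length is therefore at most $\cardinality D + (\cardinality D-1)(n-1) \le n\cdot\cardinality D \le h$.

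\textbf{Main obstacle.} The replacement segments introduce temporal edges not previously used, so two walks may now collide. They may also collide with a demand edge, in which case we would simply have two walks covering that demand.

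The fix I would try is an exchange argument rather than insisting on disjointness upfront. Instead of shortest paths, let each segment remain the original subwalk with every cycle (repeated vertex) removed. Cycle removal means jumping from the first visit of a vertex $x$ at time $t_1$ to its last visit at time $t_2$ by waiting at $x$. The result only uses temporal edges already in $\E(p)$, so TED is preserved. Demands inside the excised cycle are not lost, because we only shortcut segments that contain no demand.

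Each segment then visits distinct vertices, so it has length at most $n-1$, and the bound above still holds. Since no new temporal edges are added, the schedule remains TED, and the claimed equivalence follows.
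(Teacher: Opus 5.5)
Your final argument is correct and is essentially the paper's proof: you remove cycles inside demand-free subwalks by waiting at the repeated vertex, so only temporal edges of the original walk are used, which preserves TED and coverage, and then you count demand edges plus connecting segments of length at most $n-1$. Your explicit trimming of each walk to its first and last demand is what makes the count $\cardinality D + (\cardinality D - 1)(n-1) \le n \cdot \cardinality D$ exact, since without it the demand-free prefix and suffix could add up to $n-1$ further edges, a step the paper leaves implicit; the shortest-path detour before your cycle-removal argument can simply be deleted.
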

\begin{proof}
    The forward direction holds by definition.
    Suppose on the other hand that $\mathcal I'$ is feasible, then there is a schedule $S$ of TED walks fulfilling all demands.
    Let $p \in S$ and $p'$ be a contiguous subwalk of $p$ such that $p' \cap D = \emptyset$.
    We can assume that $p'$ is a path, because if $p'$ would visit a vertex $v$ more than once, then we could remove the cycle between the first and last visit of $v$ and still have a valid schedule.
    With this assumption, no walk can actually be longer than $n \cdot \cardinality D$ because each walk can cover at most $\cardinality D$ demands and the paths between them have length at most $n$.
    Therefore $S$ is also a schedule for $\mathcal I$.
\end{proof}
Based on these observations, we assume that $k < \cardinality D \le k \cdot h$ holds for all instances, and that $h < n \cdot \cardinality D$ holds for all $\lenTEDSC$ instances.

\section{Solving Unconstrained TEDSC}\label{sec:unconstrained}
    In this section, we solve unconstrained $\TEDSC$ by constructing the schedule completion flow network $(\SEflow(G,D), r)$ based on the static expansion $\SE(G,D)$ of $\G(G, D)$ and then computing a feasible flow (\cref{subsec:schedule-completion-flow}).
However, since the static expansion has $\O(n \cdot \lifetime(D))$ vertices and $\lifetime(D)$ can be exponential in the number of bits needed to encode the input, this yields only a pseudo-polynomial algorithm.

To achieve polynomial running time, we observe in \Cref{subsec:gaps-longtime} that large \emph{gaps} between consecutive demands can be compressed without affecting feasibility, bounding the flow network to $\poly(n, \cardinality{D})$ vertices.
This gap-observation will also play a central role for our subsequent results.

\subsection{Schedule Completion Flow Network} \label{subsec:schedule-completion-flow}

Recall from \cref{def:static-expansion} that the static expansion $\SE(G,D)=\SE(\G(G, D))$ is a static directed graph where every static path in $\SE(G,D)$ corresponds to a temporal walk in $\G$ and vice versa.
By adding a global source $s$ and sink $z$, we obtain an augmented graph $\SEflow(G,D)$ whose $s$-$z$-paths correspond to temporal walks in $\G$, where a feasible flow of value $k$ decomposes into $k$ paths, one per walk in the completed schedule.
The draft schedule and temporal edge disjointness is encoded as flow ranges: demand edges receive range $(1,1)$, forcing each demand to be covered by exactly one walk and preventing two walks from sharing a temporal edge; non-demand moving edges receive range $(0,1)$, enforcing temporal edge disjointness without requiring coverage.
\Cref{fig:tedsc-flow-network} shows the construction for an example instance.
\begin{figure*}[t]
    \centering
    \includegraphics[height=7.5cm]{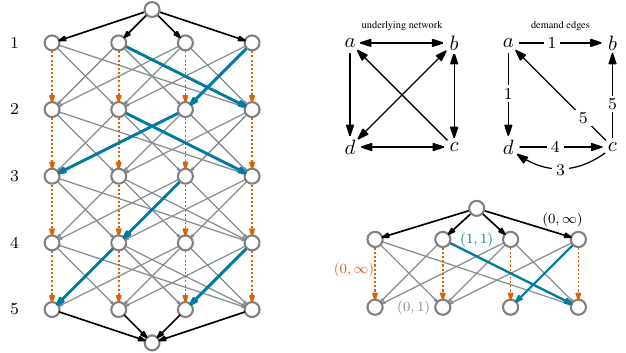}
    \caption{Illustration of the schedule completion flow network. The left part shows the graph $\SEflow(G,D)$, obtained from the static expansion $\SE(G,D)$ by adding a global source $s$ and sink $z$. Its edges consist of demand edges (solid, blue), waiting edges (dashed, orange), filler edges (solid, gray), and the edges incident with $s$ and $z$ (solid, black). The flow ranges are indicated in the two layers on the bottom right and also by the edge colors: orange and black edges have range $(0, \infty)$, blue edges have range $(1, 1)$, and gray edges have range $(0, 1)$.}
    \label{fig:tedsc-flow-network}
\end{figure*}
\begin{definition}[Schedule completion flow network]
    \label{def:tedsc-flow-network}
    Let $(G,D,k)$ be an instance of \TEDSC\ with lifetime $\lifetime = \lifetime(D)$.
    Starting from the static expansion $\SE(G,D)$ of the temporal graph $\G = \G(G, D)$, we define the augmented graph $\SEflow(G,D)$ by introducing a global source $s$ and sink $z$ connected to the first and last layer of $\SE(G,D)$, respectively.
    Formally, let $V_\SE = V(\SE(G, D))$ and $E_\SE = E(\SE(G, D))$.
    We set
    \begin{align*}
        V(\SEflow(G,D)) & = V_\SE \cup \set{s, z},                                                   \\
        E(\SEflow(G,D)) & = E_\SE \cup \bigcup_{v \in V(G)} \set{(s, v_1),\ (v_{\lifetime + 1}, z)}.
    \end{align*}
    The \emph{schedule completion flow network} is the pair $(\SEflow(G,D), r)$, where the flow range $r$ for each edge is
    \begin{align*}
        r(u_t, v_{t + 1}) & = (1, 1),       &  & \text{if } (u, v, t) \in D                  \\
        r(u_t, v_{t + 1}) & = (0, 1),       &  & \text{if } (u, v, t) \in \E(\G) \setminus D \\
        r(\cdot, \cdot)   & = (0, \infty) , &  & \text{otherwise.}
    \end{align*}
\end{definition}
\begin{proposition}
    \label{thm:tec-poly-n-tau}
    $\TEDSC$ can be decided using one $\MaxFlow$ computation on a network with $\O(n \cdot \lifetime)$ vertices and $\O((n + m) \cdot \lifetime)$ edges.
\end{proposition}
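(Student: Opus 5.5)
We prove that $(G,D,k)$ is a yes-instance if and only if the schedule completion flow network $(\SEflow(G,D),r)$ admits a feasible $s$-$z$-flow of value at most $k$. We then reduce the question ``is there a feasible flow of value at most $k$'' to a single $\MaxFlow$ computation using the standard lower-bound transformation~\cite{ford1962flows}.

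\emph{Size.} $\SE(G,D)$ has $n(\lifetime+1)$ vertices. It has $n\lifetime$ waiting edges and at most $m\lifetime$ moving edges. Adding $s$, $z$ and $2n$ incident edges gives $\O(n\lifetime)$ vertices and $\O((n+m)\lifetime)$ edges. The lower-bound transformation adds only a super source, a super sink and $\O(n\lifetime)$ edges, so these bounds persist.

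\emph{Schedule $\Rightarrow$ flow.} Take a feasible schedule $S$. Each $p\in S$ with first vertex $(v_0,t_0)$ and last vertex $(v_\ell,t_\ell)$ becomes an $s$-$z$-path in $\SEflow$, built in four pieces:
\begin{enumerate}
\item go $s\to (v_0)_1$;
\item wait until layer $t_0$;
\item follow the moving edges $(u_t,v_{t+1})$ for each temporal edge $(u,v,t)$ of $p$, inserting waiting edges between consecutive moves;
\item wait until layer $\lifetime+1$, then go to $z$.
\end{enumerate}
Summing these paths gives a flow $f$ of value $\cardinality S\le k$. Temporal edge disjointness gives $f\le 1$ on every moving edge. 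Coverage gives $f\ge 1$ on demand edges, so demand edges carry exactly $1$. Waiting edges and the edges at $s,z$ are uncapacitated, so the ranges $r$ are respected. One subtlety: a walk may use temporal edges with $t>\lifetime$. Since all edges of $\G(G,D)$ are labelled $[\lifetime]$, no such edges exist, so this case does not arise.

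\emph{Flow $\Rightarrow$ schedule.} Take a feasible integral flow of value at most $k$. The network is a DAG, because every edge goes from layer $t$ to layer $t+1$ or touches $s$ or $z$. Hence the flow decomposes into at most $k$ $s$-$z$-paths. Each path, restricted to its moving edges, is a strict temporal walk in $\G$, since layers increase strictly along moves. Discard paths containing no moving edge. Two walks cannot share a temporal edge, because moving edges have capacity $1$. Every demand edge carries flow $1$, so every demand is covered. The resulting schedule has size at most $k$.

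\emph{Deciding value at most $k$ with one $\MaxFlow$ call.} This is the main technical point. First add the edge $(z,s)$ with range $(0,k)$, turning the question into a feasible-circulation problem. Then apply the standard reduction: for each demand edge $(u_t,v_{t+1})$, set capacity $0$ on the edge itself, connect a super source $s^*$ to $v_{t+1}$ with capacity $1$, and connect $u_t$ to a super sink $z^*$ with capacity $1$. A feasible circulation exists if and only if the max $s^*$-$z^*$ flow equals $\cardinality D$. Integrality of max flow yields an integral circulation, and removing $(z,s)$ from it recovers the desired integral flow of value at most $k$.
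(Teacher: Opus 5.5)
Your proposal is correct and follows the paper's argument. You use the same network $(\SEflow(G,D),r)$ and the same correspondence between schedules and path decompositions of a feasible flow of value at most $k$ in this DAG. You also make explicit, via the arc $(z,s)$ of range $(0,k)$ and the standard lower-bound gadget, how a single $\MaxFlow$ call decides this; the paper leaves that step to its preliminaries.

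One minor slip: adding a pair of unit arcs per demand edge introduces $2\cardinality{D} \le 2m\lifetime$ arcs rather than $\O(n\lifetime)$, unless you aggregate the lower bounds per vertex. This still fits within the claimed $\O((n+m)\lifetime)$ bound.
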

\begin{proof}
    Let $(\SEflow(G,D), r)$ be the schedule completion flow network from \Cref{def:tedsc-flow-network}.
    The graph $\SEflow(G,D)$ has $\O(n \cdot \lifetime)$ vertices and $\O((n + m) \cdot \lifetime)$ edges.

    Since $G$ is directed and we consider temporal walks to be strict, $\SEflow(G,D)$ is acyclic.
    Therefore any feasible $s$-$z$-flow of value $k$ in $(\SEflow(G,D),r)$ decomposes into exactly $k$ $s$-$z$-paths (never revisiting any time-vertex $v_t\in V(G)\times[\lifetime]$), each of which corresponding to a temporal walk in $\G$ (possibly revisiting a vertex $v\in V(G)$).
    Furthermore, the flow ranges ensure that these walks cover all demands in $D$ and are temporally edge disjoint:
    Each demand edge $e$ has range $(1, 1)$, which enforces that the corresponding demand is covered by exactly one walk, and that no two walks use the same temporal edge, i.\,e., the walks are temporally edge disjoint. Each remaining moving edge has range $(0, 1)$, which enforces that no two walks share a temporal edge, but allows for the possibility that some temporal edges are not used by any walk.
    Lastly, the waiting edges and the edges from $s$ and to $z$ are unconstrained, allowing for any number of walks to wait in a vertex, to start at any vertex in the first layer, and to end at any vertex in the last layer.
    As a result, $\TEDSC$ is a YES-instance if and only if a feasible $s$-$z$-flow of value at most $k$ exists in $(\SEflow(G,D), r)$.
    Such a flow can be found in polynomial time with respect to the network size as discussed in \Cref{sec:flow-prelims}.
    Note however that this time is pseudo-polynomial: the bound is polynomial in $n$ and $\lifetime$, but $\lifetime$ can be exponential in the bit-length of the input.
\end{proof}

\subsection{Gaps and gap-compression} \label{subsec:gaps-longtime}
The algorithm from \Cref{thm:tec-poly-n-tau} is super-polynomial when $\lifetime$ is large, which occurs precisely when consecutive demands are timed far apart: for example, $D = \set{(u, v, 1), (v, u, 2^n)}$ yields $\lifetime = 2^n$ despite having polynomial encoding size.
We call such large time intervals \emph{gaps} and show that, once a gap is sufficiently long, any placement of the $k$ walks at the beginning of the gap can always be extended to any target placement at its end.
This allows us to replace long gaps in the flow network by a single unconstrained layer, bounding the total network size by $\poly(n, \cardinality D)$.
\begin{definition}[Relevant time step]
    \label{def:relevant-time}
    We call a time step $t \in \N$ \emph{relevant} if there are $u, v \in V(G)$ such that $(u, v, t) \in D$; otherwise we say $t$ is \emph{irrelevant}.
\end{definition}
\begin{definition}[$\gapsize$-gap]
    We call a time interval $I = [t, t+\gapsize)$ of size $\gapsize$ an \emph{$\gapsize$-gap} if all of the time steps in $I$ are irrelevant.
\end{definition}

Routing $k$ walks across a gap is an instance of the $\TEDW$ problem, where the start and end positions of the $k$ walks at the two boundaries of the gap form the terminal pairs:

\begin{problem}{Temporally Edge Disjoint Walks ($\TEDW$)}
    \Input & Temporal graph $\G = \langle G, \lambda \rangle$, terminal pairs $\set{(s_i, z_i) \in V(G)^2}_{i = 1}^k$. \\
    \Prob  & Are there temporally edge disjoint walks connecting each terminal pair $(s_i, z_i)$?
\end{problem}

In general, the $\TEDW$ problem is intractable.
\Textcite{klobas2023interference} show that the related $\TVDW$ problem is $\W[1]$-hard when parameterized by~$k$.
The standard vertex-to-edge reduction (splitting each vertex into an in-copy and an out-copy connected by a single edge) transfers this hardness to the edge disjoint variant, which yields:
\begin{proposition}\label{thm:tedw-hard}
    $\TEDW$ parameterized by $k$ is $\W[1]$-hard.
\end{proposition}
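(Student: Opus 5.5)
We plan a parameterized reduction from $\TVDW$, which \textcite{klobas2023interference} show to be $\W[1]$-hard parameterized by $k$. The reduction keeps the number $k$ of terminal pairs unchanged. Hardness then transfers immediately, because the reduction is computable in polynomial time and preserves the parameter exactly.

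\textbf{Construction.} Let $\G = \langle G, \lambda \rangle$ with pairs $\set{(s_i,z_i)}_{i=1}^k$ be a $\TVDW$ instance. We build $G'$ by splitting each vertex $v$ into $v^{\vin}$ and $v^{\vout}$, joined by the edge $(v^{\vin}, v^{\vout})$. Each edge $(u,v)$ becomes $(u^{\vout}, v^{\vin})$.

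Time must be handled carefully, since walks are strict and every step consumes one time unit. We therefore stretch time by a factor of two. An original temporal edge $(u,v,t)$ becomes $(u^{\vout}, v^{\vin}, 2t)$. Each split edge $(v^{\vin}, v^{\vout})$ receives all odd labels $2t+1$ for $t \in [0, \lifetime(\G)]$.

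Waiting at $v$ in the original graph corresponds to waiting at $v^{\vin}$ and crossing the split edge later, or to waiting at $v^{\vout}$. The new terminal pairs are $(s_i^{\vin}, z_i^{\vout})$, possibly preceded by a leading crossing of the split edge at time $1$. One could alternatively take terminals $(s_i^{\vout}, z_i^{\vin})$ and prepend private source edges.

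\textbf{Correctness.} This step is the main obstacle. Temporal vertex disjointness in \citet{klobas2023interference} means that no two walks occupy the same vertex at the same time step. Under that notion, two walks waiting at the same vertex during overlapping intervals already conflict. Our split edges with all odd labels only prevent two walks from crossing the split edge simultaneously; they do not prevent two walks from waiting at $v^{\vin}$ together.

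To repair this, we would attach to each vertex a private waiting structure. For each $v$ and each time $t$, we add a single edge that any walk present at $v$ between times $t$ and $t+1$ is forced to traverse. Concretely, we forbid waiting by expanding time and modelling ``stay at $v$'' as a loop gadget $v^{\vin} \to w_v \to v^{\vin}$. The gadget's edges carry labels so that at each stretched step exactly one edge represents occupying $v$ at that step. A stretch factor of $3$ should accommodate this.

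With such a gadget, every unit of presence at $v$ at time $t$ uses one designated temporal edge. Hence temporal edge disjointness in $G'$ holds if and only if temporal vertex disjointness holds in $G$. The two directions are a direct translation of walks step by step.

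If the definition of $\TVDW$ used by \textcite{klobas2023interference} is instead the weaker one, in which vertices may be shared while waiting and only simultaneous traversals conflict, then the simple split construction suffices. In that case the proof reduces to checking the bijection between walks. We would first check which definition is used, and then verify the two directions of the walk translation and the polynomial size of the construction (the labels at most double or triple).
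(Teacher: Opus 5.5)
\paragraph{Verdict.} Your first construction is the vertex-splitting reduction that the paper invokes in one sentence, and your worry about it is well founded. The real gap is in the repair you propose.

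\paragraph{Why the plain split is not enough.} A walk can idle at $v^{\vin}$ or $v^{\vout}$ without using any edge. So two walks can be at $v$ during overlapping intervals while crossing $(v^{\vin}, v^{\vout})$ at different times. On general $\TVDW$ instances this breaks the backward direction under the occupancy notion of \textcite{klobas2023interference}. That occupancy notion is the one that applies, so your fallback to a ``weaker definition'' does not rescue the argument.

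\paragraph{Why your repair fails.} Waiting cannot be forbidden in a temporal graph. Between two traversals a walk may stay at its current vertex for any number of steps, and no edge records this. Your loop $v^{\vin} \to w_v \to v^{\vin}$ returns to the very vertex it is supposed to monitor. A walk can therefore simply sit at $v^{\vin}$ instead of looping, or cross the split edge early and sit at $v^{\vout}$. The labels on the split edge cannot distinguish a walk that looped from one that idled. Hence the claim that every unit of presence at $v$ uses a designated temporal edge is false for every labelling and every stretch factor. The equivalence between temporal edge disjointness in $G'$ and temporal vertex disjointness in $G$ is not established.

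\paragraph{What is missing.} You need a construction in which idling is useless because each vertex has outgoing edges at only one time step. The shortest route avoids $\TVDW$ altogether: reduce from $\EDP$ on DAGs, which is $\W[1]$-hard parameterized by $k$ \cite{slivkins2010parameterized}.
\begin{itemize}
\item Fix a topological numbering $\varphi$ of the DAG and give each edge $(u,w)$ the single label $\varphi(u)$.
\item Labels strictly increase along every static path, so every path is a temporal walk. Conversely, every temporal walk is a static walk in a DAG, hence a path.
\item Each edge carries exactly one temporal edge, so two walks are temporally edge disjoint if and only if they are edge disjoint.
\item Terminal pairs and $k$ are unchanged.
\end{itemize}
This is the same topological-order idea the paper uses in its reduction for \cref{thm:hard-by-D-planar-dag}.

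If you want to keep $\TVDW$ as the source, use time-indexed copies $v_t^{\vin}, v_t^{\vout}$ joined by an edge with a single label. Then being at $v$ at time $t$ forces traversing that one temporal edge. This stays polynomial as long as the lifetime is polynomially bounded.
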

The $\TEDW$ instances arising from gaps, however, are special: since all time steps in a gap are irrelevant, and $\G(G,D)$ makes every edge of $G$ available at every time step, the underlying graph in which we are looking for the walks is effectively static throughout the gap.
The lemma below make this precise: any connected static subgraph $F \subseteq E(G)$ spanning all terminal pairs suffices to route $k$ walks TED through a gap, provided the gap is long enough.
\begin{lemma}[$((n-1) \cdot k)$-gap suffices]
    \label{thm:tedw-n-times-k}
    Let $\G = \langle G, \lambda \rangle$ be a temporal graph and $\set{(s_i, z_i)}_{i = 1}^k$ be a set of terminal pairs.
    Suppose that there is a set of edges $F \subseteq E(G)$ that connects each terminal pair and is present at all time steps until $\gapsize = (n-1) \cdot k$, that is, $t \in \lambda(e)$ for all $t \in [\gapsize], e \in F$.
    Then there are TED temporal paths $\set{p_i}_{i = 1}^k$ such that each $p_i$ is a shortest $s_i$-$z_i$-path in $F$.
\end{lemma}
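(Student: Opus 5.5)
Fix, for each $i \in [k]$, a shortest $s_i$-$z_i$-path $q_i$ in the static graph $(V(G), F)$. Such a path exists because $F$ connects each terminal pair. It has length $\len_i := \len(q_i) \le n-1$, since a shortest path repeats no vertex. The plan is to realise each $q_i$ as a temporal path $p_i$ that traverses the edges of $q_i$ at consecutive time steps, with no waiting. The walks are \emph{sequentialised} in time, so that their time windows are pairwise disjoint. TED is then trivial, because two temporal edges with different time stamps are different.

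Concretely, define the offsets $o_1 = 0$ and $o_{i+1} = o_i + \len_i$. Let $p_i$ start at $(s_i, o_i + 1)$ and traverse the $j$-th edge of $q_i$ at time step $o_i + j$ for $j \in [\len_i]$. Each $p_i$ then waits at $z_i$ until the end, or equivalently simply ends. Before its first edge it sits at $s_i$, which as a temporal walk just means it starts later. The time steps used by $p_i$ lie in $[o_i + 1, o_i + \len_i]$, and these intervals are pairwise disjoint. Hence $\E(p_i) \cap \E(p_{i'}) = \emptyset$ for $i \ne i'$, so the paths are TED. Every time step used is at most $o_k + \len_k = \sum_i \len_i \le (n-1)k = \gapsize$. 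Every edge of $F$ is present at all $t \in [\gapsize]$ by assumption, so every temporal edge used exists in $\G$. The time stamps along each $p_i$ are strictly increasing, so each $p_i$ is a strict temporal walk. Its underlying static walk is $q_i$, so it is a temporal path whose static projection is a shortest $s_i$-$z_i$-path in $F$.

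The main subtlety is interpreting ``routing through the gap'' with respect to where the walks are before and after their active window. If each terminal pair must be represented by a walk that is physically at $s_i$ at the start of the gap and at $z_i$ at its end, waiting is free: waiting edges are unconstrained in $\SEflow$ and play no role in TED. So the idle periods are harmless. Pairs with $s_i = z_i$ give $\len_i = 0$ and use no edges. I expect no genuine obstacle here beyond bookkeeping. A more clever pipelined schedule, which starts walks simultaneously and resolves conflicts, could shorten the bound, but it is not needed for $\gapsize = (n-1)k$.
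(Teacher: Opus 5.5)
Your proposal is correct and follows essentially the paper's proof: both schedule the $k$ shortest paths one after another in pairwise disjoint time windows, so temporal edge disjointness is immediate and everything fits within $[(n-1)k]$. The only difference is that the paper gives walk $i$ a fixed slot of $n-1$ steps starting at time $1+(i-1)(n-1)$, whereas you pack the walks back to back by their actual lengths, which changes nothing essential.
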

\begin{proof}
    We construct the walks to move one after another, thus they will be temporally edge disjoint.
    For each $i$, let $p_i$ be a shortest $s_i$-$z_i$-path in $F$.
    We schedule it to start at time step $1 + (i-1) \cdot (n-1)$.
    As each path has length at most $n-1$, this does not overlap with any previous walk.
\end{proof}

We now define the resulting $\gapsize$-compressed static expansion and then equip it with the appropriate flow ranges to solve $\TEDSC$ in polynomial time.
This concept of $\gapsize$-compressed constructions will also play a vital role in the remainder of the paper.

\begin{definition}[$\gapsize$-compressed static expansion]
    \label{def:compressed-static-expansion}
    \label{def:compressed-tedsc-network}
    Let $(G,D)$ be a schedule completion instance.
    For every gap $[t_1 + 1, t_2)$ of size at least $\gapsize$ between two consecutive relevant time steps $t_1$ and $t_2$, we remove all layers from the $(t_1 + 2)$-th layer up to and including the $(t_2 - 1)$-th layer from $\SE(G,D)$.
    We add a new edge from $u_{t_1 + 1}$ to $v_{t_2}$ if $u$ can reach $v$ in $G$.
    The resulting graph is the \emph{$\gapsize$-compressed static expansion} $\SE_{\gapsize}(G,D)$.
    An example replacement is illustrated in \Cref{fig:gap-compression}.
    \begin{figure*}[t]
        \centering
        \includegraphics[height=7cm]{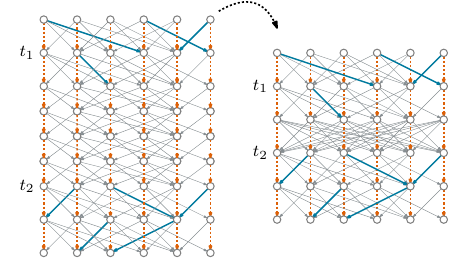}
        \caption{Illustration of the gap compression step. The left part shows some layers of a flow network with a large gap between the relevant time steps $t_1$ and $t_2$. The right part shows the replacement of these layers by a single compressed layer connecting all vertices from the $(t_1 + 1)$-th layer to all reachable vertices from the $t_2$-th layer.}
        \label{fig:gap-compression}
    \end{figure*}
    The final graph $\SE_{\gapsize}(G,D)$ is obtained by performing this step for all consecutive relevant time steps of sufficient size.

    The \emph{$\gapsize$-compressed schedule completion flow network} $(\SEcompressed{\gapsize}(G,D), r)$ is defined analogously to $(\SEflow(G, D), r)$, but using $\SE_\gapsize(G, D)$ as a basis.
    The newly inserted edges have unlimited flow range $(0, \infty)$.
\end{definition}

By computing the flow on the network of \Cref{def:compressed-tedsc-network} instead, we obtain a polynomial-time algorithm for $\TEDSC$.
\begin{theorem}
    \label{thm:tedsc-poly-n-d}
    $\TEDSC$ can be decided using one max-flow computation on a network with $\O(n \cdot \cardinality D \cdot (n \cdot k))$ vertices and $\O((n+m) \cdot \cardinality D \cdot (n \cdot k))$ edges.
\end{theorem}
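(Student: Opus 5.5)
We run the flow computation of \Cref{thm:tec-poly-n-tau} on the $\gapsize$-compressed schedule completion flow network $(\SEcompressed{\gapsize}(G,D), r)$ from \Cref{def:compressed-tedsc-network} with $\gapsize = (n-1)\cdot k$. We then show that a feasible flow of value at most $k$ exists there if and only if one exists in the uncompressed network $(\SEflow(G,D),r)$. The proof splits into a size bound and this equivalence.

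\emph{Size bound.} Every retained layer either is a relevant time step $t$, or the successor layer $t+1$ used as endpoint, or lies inside a gap of size less than $\gapsize$. A maximal gap between consecutive relevant steps is therefore either compressed to a constant number of layers or contributes fewer than $\gapsize$ layers. There are at most $\cardinality D$ relevant time steps, hence at most $\cardinality D$ gaps, so the number of layers is $\O(\cardinality D\cdot \gapsize)=\O(\cardinality D \cdot n k)$. The initial segment before the first demand is empty because we assume a demand at time step~$1$.
\begin{itemize}
\item Each layer has $n$ vertices, giving $\O(n \cdot\cardinality D\cdot nk)$ vertices.
\item Each uncompressed layer contributes $n+m$ waiting and moving edges, giving $\O((n+m)\cdot\cardinality D\cdot nk)$ edges.
\item Each compressed gap adds up to $n^2$ reachability edges. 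Summed over at most $\cardinality D$ gaps this is $\O(n^2\cardinality D)$, which fits the claimed bound since $k\ge1$.
\end{itemize}
The reachability relation is computed once in polynomial time.

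\emph{Correctness, compressed to uncompressed.} Take a feasible integral flow in the compressed network and decompose it into $s$-$z$-paths; the network is still a DAG. A compressed edge $(u_{t_1+1}, v_{t_2})$ is used only by paths whose $u$ reaches $v$ in $G$. Fix one compressed gap. At most $k$ paths cross it, each with start $u$ and target $v$. Apply \Cref{thm:tedw-n-times-k} with $F=E(G)$, every edge available throughout the gap, and these at most $k$ terminal pairs. The gap spans $t_2-t_1-1\ge\gapsize$ time steps, so we obtain TED walks realizing all crossings within the gap. Terminal pairs that are unreachable never occur, so we may instead take $F$ as the union of shortest paths; the only requirement is that $F$ connects each pair. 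None of these time steps carries a demand, so the inserted moves respect the ranges $(0,1)$ on non-demand moving edges. They also interact with no other gap or layer. Replacing each compressed edge by the corresponding walk segment, padded with waiting edges, yields a feasible flow in $\SEflow(G,D)$ of the same value.

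\emph{Correctness, uncompressed to compressed.} A feasible flow path in $\SEflow$ that is at $u_{t_1+1}$ and later at $v_{t_2}$ certifies that $u$ reaches $v$ in $G$. We therefore replace its subpath through the removed layers by the single compressed edge, which has range $(0,\infty)$. Demand and other layers are untouched, so feasibility and value are preserved. Together with \Cref{thm:tec-poly-n-tau}, the instance is a YES-instance iff the max-flow with lower bounds on the compressed network has value at most $k$.

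\emph{Main obstacle.} The delicate step is the boundary bookkeeping of which layers are kept, so that the gap length available to \Cref{thm:tedw-n-times-k} really is at least $(n-1)k$. One must also check that fewer than $k$ crossing walks, or several walks sharing the same terminal pair, cause no issue. The latter holds because the lemma schedules walks sequentially, so repeated pairs are fine.
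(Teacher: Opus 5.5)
Your proposal is correct and follows the paper's proof: compress gaps with $\gapsize=(n-1)\cdot k$, justify the compression via \cref{thm:tedw-n-times-k}, and count $\O(\cardinality D\cdot\gapsize)$ layers plus at most $n^2$ extra edges per compressed gap. You argue both directions of the compression equivalence more explicitly than the paper, which only cites the lemma. One wording fix: your closing phrase ``max-flow \dots\ has value at most $k$'' should read ``a feasible flow of value at most $k$ exists'', since waiting and source/sink edges have unbounded capacity, as you state correctly at the start.
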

\begin{proof}
    Set $\gapsize = (n-1) \cdot k$, construct $(\SEcompressed{\gapsize}(G,D), r)$ as in \Cref{def:compressed-tedsc-network}, and check for a feasible $s$-$z$-flow of value at most $k$.
    By \Cref{thm:tec-poly-n-tau}, a feasible flow of value at most $k$ in $(\SEflow(G,D), r)$ exists if and only if $(G, D, k)$ is a YES-instance.
    By \Cref{thm:tedw-n-times-k}, every gap of size at least $\gapsize$ can be compressed without changing feasibility, and hence the same holds for $(\SEcompressed{\gapsize}(G,D), r)$.

    It remains to bound the size of $\SEcompressed{\gapsize}(G,D)$.
    There are at most $\cardinality D$ relevant time steps, giving at most $\cardinality D - 1$ gaps between consecutive ones.
    Each uncompressed gap contributes fewer than $\gapsize$ layers; each compressed gap contributes no extra layers (its interior is replaced by edges between the two boundary layers).
    Together with the $\cardinality D$ relevant layers, $\SEcompressed{\gapsize}(G,D)$ has $\O(\cardinality D \cdot \gapsize)$ layers of $n$ vertices each, giving $\O(n \cdot \cardinality D \cdot \gapsize)$ vertices.
    Each uncompressed layer transition contributes $n$ waiting and $m$ move edges; each compressed gap contributes at most $n^2$ edges.
    Since $m \in \Omega(n)$, the gap edges are dominated, and the total edge count is $\O((n+m) \cdot \cardinality D \cdot (n \cdot k))$.
\end{proof}

\section{Hardness of Constrained TEDSC}
\label{sec:hardness}
In this chapter, we begin to investigate the complexity of constrained $\TEDSC$.
In contrast to unconstrained $\TEDSC$, both $\lenTEDSC$ and $\lifeTEDSC$ capture the complexity of notoriously hard problems.
This is in part because finding edge disjoint paths of bounded length in static graphs is already $\NP$-hard and we thus only need to remove the temporal aspect of $\TEDSC$ to capture this hardness.
Specifically, in \cref{sec:sat-reduction}, we show that a known hardness reduction for bounded length $\EDP$ also works for both $\lenTEDSC$ and $\lifeTEDSC$.
Next, in \cref{sec:edp-reduction} we can use the length and lifespan constraints to enforce a pairing of demands and thereby capture the hardness of $\EDP$.
This contrasts the results for unconstrained $\TEDSC$ where, although we are seeking edge disjoint paths in the static expansion $\SE(G, D)$, the demands can be fulfilled by any walk.
Finally, in \cref{sec:binpacking-reduction} we switch perspective and allow for a large number of demands that need to be fulfilled on a small network $G$.
We show that distributing the demands among the walks captures the $\BinPacking$ problem, in the sense that each bin corresponds to a walk and every item corresponds to a cluster of demands in the draft schedule.
Notably, this reduction works only for $\lenTEDSC$.
As we will see in \cref{sec:star-algorithm}, $\lifeTEDSC$ does not permit an equivalent reduction unless $\FPT = \W[1]$.

This chapter forms the basis for analyzing the parameterized complexity of constrained $\TEDSC$.
In combination with \cref{sec:exact}, we exhaustively explore the complexity landscape of the parameters $k$, $h$, and $\cardinality D$.
The reduction from $\BinPacking$ is the first step towards understanding the complexity of constrained $\TEDSC$ on restricted graph classes.
We return to this topic later in \cref{sec:star-algorithm}.

\subsection{Reduction from 3-SAT}
\label{sec:sat-reduction}
In this section, we present a reduction from $\TSAT$ to both $\lenTEDSC$ and $\lifeTEDSC$.
Our construction is based on the reduction by \textcite{itai1982complexity} for length bounded $\EDP$ in static graphs.
Notably, regardless of the input instance, we output an instance with a constant bound $h = 5$.
This implies that there cannot be an $\FPT$ or $\XP$ algorithm for $\lenTEDSC$ or $\lifeTEDSC$ parameterized by $h$ unless $\P = \NP$.
Furthermore, if ETH holds, the reduction yields a lower bound for the running time of any algorithm.

The first step of the reduction is the following normalization procedure for $\TSAT$ instances by \textcite{itai1982complexity}.
\begin{lemma}[\thmcite{itai1982complexity}]
    \label{thm:sat-var-occ-equal}
    Let $\varphi$ be an instance of $\TSAT$ with $n$ variables and $m$ clauses.
    There is a $\TSAT$ instance $\psi$ such that
    \begin{enumerate}[a)]
        \item $\psi$ is satisfiable if and only if $\varphi$ is,
        \item for each $i \in [n]$, the number of occurrences of $x_i$ and $\lnot x_i$ in $\psi$ are equal, and
        \item the size of $\psi$ is linear in the size of $\varphi$.\qedhere
    \end{enumerate}
\end{lemma}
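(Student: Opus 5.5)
We build $\psi$ from $\varphi$ by a local padding transformation and then check the three properties one at a time. Fix a variable $x_i$. Let $p_i$ be the number of positive occurrences of $x_i$ in $\varphi$ and $q_i$ the number of negative occurrences. Suppose without loss of generality that $p_i > q_i$; the case $q_i > p_i$ is symmetric, and if $p_i = q_i$ nothing needs to be done. We must add $d_i = p_i - q_i$ extra negative occurrences of $x_i$ without changing satisfiability.

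\emph{Construction.} The simplest way to do this is to append clauses that are always satisfied. For $j \in [d_i]$, add the clause $(x_i \lor \lnot x_i)$. Each such clause is a tautology, so satisfiability is unchanged. However, each one adds one positive and one negative occurrence, so the difference $p_i - q_i$ stays the same.

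We therefore need a gadget that raises $q_i$ without raising $p_i$. To achieve this, introduce fresh variables $y_{i,j}$ and add the clause $(\lnot x_i \lor y_{i,j} \lor \lnot y_{i,j})$, which is also a tautology. This clause adds exactly one negative occurrence of $x_i$. It also adds one positive and one negative occurrence of $y_{i,j}$, so $y_{i,j}$ is itself balanced. The clause has three literals, so $\psi$ is still a $\TSAT$ instance. If the clause must use three distinct variables, a second fresh balanced variable can be used in the same way.

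\emph{Property a).} Every added clause is a tautology, so any assignment satisfying $\varphi$, extended arbitrarily to the fresh variables, satisfies $\psi$. Conversely, restricting a satisfying assignment of $\psi$ to the original variables satisfies $\varphi$.

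\emph{Property b).} After the padding, every original variable has $p_i = q_i$. Every fresh variable occurs once positively and once negatively.

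\emph{Property c).} The number of added clauses is $\sum_i d_i \le \sum_i (p_i + q_i) \le 3m$, so $\psi$ has $O(n+m)$ variables and clauses.

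\emph{Main obstacle.} The construction above is only valid if tautological clauses are acceptable. The subsequent Itai-style reduction may need clauses that are not tautologies, or that contain no repeated variable.

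If that is the case, I would first try the original gadget of \textcite{itai1982complexity} instead. Replace each occurrence of $x_i$ by a fresh copy $x_i^{(r)}$. Then enforce that all copies are equal using an implication cycle $(\lnot x_i^{(r)} \lor x_i^{(r+1)})$. In this cycle, each copy appears exactly once positively and once negatively. The original clause then contributes only one further literal per copy, and this can be balanced by adding the opposite-sign implications twice around the cycle. Checking that this balancing works out exactly is the step that needs care.
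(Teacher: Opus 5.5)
Your main construction is correct and proves the lemma as stated. Each clause $(\lnot x_i \lor y_{i,j} \lor \lnot y_{i,j})$, or $(x_i \lor y_{i,j} \lor \lnot y_{i,j})$ when $q_i > p_i$, is a tautology. It shifts the imbalance of $x_i$ by exactly one and leaves $y_{i,j}$ balanced, and at most $3m$ such clauses are added.

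The paper gives no proof of its own here; it only cites \textcite{itai1982complexity}. Your route therefore differs in being self-contained. It also makes explicit something the citation leaves implicit. The lemma only promises balance for $i \in [n]$, whereas the proof of \cref{thm:h5-hard} assumes that \emph{every} variable of the new formula is balanced. Your padding guarantees this for the fresh $y_{i,j}$ too.

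You should drop your final paragraph, because the worry behind it is unfounded. The paper's definition of a clause imposes no distinctness on its literals. The reduction of \cref{thm:h5-hard} also handles $(\lnot x_i \lor y \lor \lnot y)$ without change. Here $n_y = 1$ gives $s_{y,2} = s_{y,1}$, so the clause walk can always use whichever of $x_{y,1}$, $\overline x_{y,1}$ the single $y$-variable walk avoids. You could even skip fresh variables and add $(\lnot x_i \lor \lnot x_i \lor x_i)$ instead.

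The fallback would also fail as sketched. A clause $(\lnot a \lor b)$ adds one negative occurrence of $a$ and one positive occurrence of $b$. Reversed implications around the cycle therefore give each copy one more occurrence of each sign, which leaves untouched the imbalance of one coming from the original clause. Likewise, a clause on three distinct variables can never be a tautology, so the remark about using a second fresh variable does not give what it claims.
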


\begin{theorem}
    \label{thm:h5-hard}
    Both $\lenTEDSC$ and $\lifeTEDSC$ are $\NP$-hard even if $h = 5$.
\end{theorem}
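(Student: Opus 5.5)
We reduce from $\TSAT$, following the construction of \textcite{itai1982complexity} for edge disjoint paths of length at most~$5$, and add a temporal layer that forces the walks to behave like static paths. First, we apply \cref{thm:sat-var-occ-equal} to normalize $\varphi$ so that every variable $x_i$ occurs equally often positively and negatively, say $o_i$ times each. In the Itai et al.\ construction, each variable $x_i$ gets a gadget with a terminal pair $(a_i,b_i)$ and two internally disjoint routes of length at most~$5$: a ``true'' route and a ``false'' route. The true route passes through $o_i$ edges associated with the occurrences of $\lnot x_i$, and the false route passes through the $o_i$ edges associated with the occurrences of $x_i$. Each clause $C_j$ gets a terminal pair $(c_j,d_j)$ that can reach $d_j$ within length~$5$ only through one of the occurrence edges of its literals. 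Occurrence edges are therefore shared between a variable route and a clause path. This is why the balanced occurrence counts matter: both routes of a variable have the same structure, and the clause paths fit into length~$5$.

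Second, we turn terminal pairs into demands. For a pair $(s,z)$ whose intended path has length exactly $L\le 5$, we add a private ``start'' edge $(s',s)$ with demand at time $1$ and a private ``end'' edge $(z,z')$ with demand at time $L+2$. We set $h=5$ after rescaling: the path lengths are chosen so that start edge, intended path and end edge together use exactly $5$ steps, which may require shortening the core paths to $3$ edges or subdividing appropriately. We set $k$ to be the number of terminal pairs. All demands then sit at two time steps, and every other edge is available at every step.

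Third, we argue equivalence. Forward direction: given a satisfying assignment, each variable walk takes the route of the \emph{opposite} literal, each clause walk takes the occurrence edge of a true literal, and all walks run synchronously from time $1$. Walks may share static edges only if they use them at different times, so we route them so that no static edge is used twice at all. A static edge-disjoint solution is in particular temporally edge disjoint.

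Backward direction, which I expect to be the main obstacle: we must show that any schedule with $k$ walks and $\chi(p)\le 5$ yields disjoint static paths. The argument would go as follows.
\begin{itemize}
\item The $2k$ private start and end demands, together with $k$ walks, force each walk to cover exactly one start demand.
\item The time and length budget forces that walk to cover the matching end demand with no waiting. Since lifespan bounds length, it suffices to do this for $\lenTEDSC$ and note that it holds for $\lifeTEDSC$ as well.
\item Lengths are tight, so every walk follows a shortest route between its terminals, at fixed times.
\end{itemize}
The delicate point is temporal sharing: two walks could use the same static edge at different times, which the static reduction forbids but TED does not. To rule this out, I would design the gadget so that every occurrence edge lies at the same distance (in time) from the start on both the variable route and the clause route. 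Any conflicting use then happens at the same time step, so TED is equivalent to static edge disjointness on those edges. I would first check that the Itai et al.\ gadget already has this synchronization property, and otherwise pad the routes with subdivisions to equalize the offsets, keeping the total length at~$5$. Finally, the variable walk's choice of route gives an assignment that satisfies every clause.
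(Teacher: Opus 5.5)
There is a genuine gap. Your overall strategy matches the paper's: reduce from $\TSAT$ along the lines of Itai et al., make the budget tight so no walk can wait, and synchronize shared edges so that temporal and static edge disjointness coincide on them. However, the concrete construction you describe does not work, and the steps you defer are where the proof actually lies.

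First, a variable gadget with one terminal pair whose route passes through all $o_i$ occurrence edges cannot have length at most $5$. \cref{thm:sat-var-occ-equal} balances the occurrence counts but does not bound them. The Itai et al.\ construction has a single source and a single sink. Every occurrence index $k \in [n_i]$ gets its own length-$5$ path from a shared $s^*$ to a shared $z^*$, either $s^*\,s_{i,k}\,x_{i,k}\,z_{i,k}\,z_{i,k}'\,z^*$ or $s^*\,s_{i,k+1}\,\overline x_{i,k}\,z_{i,k}\,z_{i,k}'\,z^*$, with $s_{i,n_i+1}=s_{i,1}$. The shared vertices $s_{i,k}$ and the bottleneck edges $z_{i,k}z_{i,k}'$ propagate one truth value around this cycle. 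A clause path $s^*\,s_j^c\,s_{i,k}\,x_{i,k}\,c_j\,z^*$ borrows the edge $s_{i,k}x_{i,k}$.

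Second, adding private start and end edges on top of the paths spends two of the five units of budget, but the gadget is built around paths of length exactly $5$. You give no concrete shortened or subdivided replacement, and finding one is the whole difficulty. None is needed: the first and last edges $s^*s_{i,k}$, $s^*s_j^c$, $z_{i,k}'z^*$, $c_jz^*$ are already private, and the paper places the demands directly on them.

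Third, your claim that each walk is forced to its matching end demand is false for the design you sketch. Demands in $\TEDSC$ are not paired. If all walks start at time $1$ and a clause route and a variable route use a shared edge at the same time step, a walk can follow one route's prefix and the other's suffix within the same budget.

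The paper handles synchronization and pairing with a single device: staggered times. Clause demands are at times $1$ and $5$, variable demands at times $2$ and $6$. Since $s_{i,k}x_{i,k}$ is the third edge of a clause path and the second edge of a variable path, both kinds of walk use it at time $3$, so no padding is required. For the pairing, the paper uses a counting argument:
\begin{itemize}
\item a walk covers at most one demand at time $1$ or $2$, and at most one at time $5$ or $6$;
\item covering demands at times $1$ and $6$ would take six moves;
\item there are exactly twice as many demands as walks.
\end{itemize}
Hence every walk covers one start and one end demand, and the walks starting at time $1$ must absorb all the equally many time-$5$ demands. This step is genuinely needed: a walk starting at time $2$ can reach a clause demand at time $5$ via $s_{i,k}\,x_{i,k}\,c_j$. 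Only after this does tightness force every walk to move at every step, and then the assignment can be read off the variable walks as you intend.
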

\begin{proof}
    We first show the hardness of $\lifeTEDSC$ and briefly argue how the proof translates to $\lenTEDSC$ at the end.
    Let $\varphi$ be a $\TSAT$ instance with $n$ variables and $m$ clauses.
    By \Cref{thm:sat-var-occ-equal}, we assume that for each variable $x_i$, the number of occurrences $n_i$ of $x_i$ and $\lnot x_i$ in $\varphi$ are equal.
    We construct a graph $G$ and a draft schedule $D$ such that there is a $D$-extending schedule $S$ of $4m$ walks with lifespan bound $h = 5$ if and only if $\varphi$ is satisfiable.

    $G$ is a union of $n$ graphs $G_1, G_2, \dots, G_n$ (one for each variable) and $m$ graphs $H_1, H_2, \dots, H_m$ (one for each clause).
    The graphs are illustrated in \Cref{fig:sat-reduction} and we describe them in the next two paragraphs.

    \begin{figure*}[t]
        \centering
        \begin{subfigure}{0.6\textwidth}
            \centering
            \includegraphics[height = 5.5 cm, page = 1]{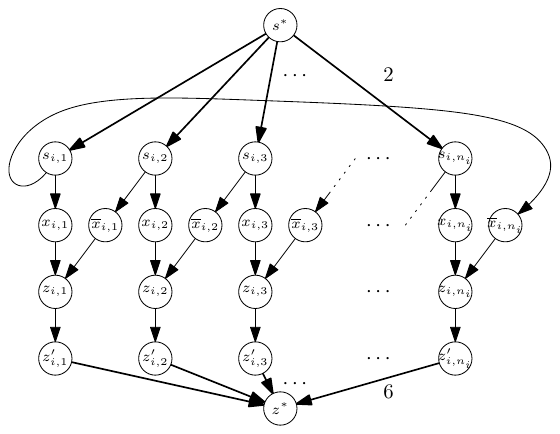}\\[1 ex]
            \caption{Graph $G_i$}
        \end{subfigure}%
        \begin{subfigure}{0.35\textwidth}
            \centering
            \includegraphics[height = 5.5 cm, page = 2]{assets/sat-reduction.pdf}\\[1 ex]
            \caption{Graph $H_j$}
        \end{subfigure}

        \caption{
            The components of $G$.
            Thick edges indicate demands at the labeled time step.
            The graph $G_i$ encodes the choice of a truth value for the variable $x_i$, which occurs $n_i$ times in $\varphi$.
            The graph $H_j$ encodes the constraint that at least one literal of the clause $c_j$ is satisfied.
            If $c_j$ contains the $k$-th occurrence of the variable $x_i$ in $\varphi$, the vertices are connected as shown.
            $H_j$ contains the same structure for the other two literals in $c_j$.
        }
        \label{fig:sat-reduction}
    \end{figure*}

    The vertices $s^*$ and $z^*$ are shared by every $G_i$ and $H_j$.
    For each $i \in [n]$, the graph $G_i$ contains five additional vertices for each occurrence of $x_i$ in $\varphi$.
    Formally, for every $k \in [n_i]$ there are vertices $s_{i,k}$, $x_{i,k}$, $\overline x_{i, k}$, $z_{i, k}$, and $z_{i, k}'$.
    The edges of $G_i$ consist of two paths of length $5$ for each $k \in [n_i]$: $s^* \, s_{i,k} \, x_{i,k} \, z_{i,k} \, z_{i,k}' \, z^*$ and $s^* \, s_{i,k+1} \, \overline x_{i,k} \, z_{i,k} \, z_{i,k}' \, z^*$ (where the addition in the index should wrap around so that $s_{i, n_i + 1} = s_{i, 1}$).
    Observe that one path uses $s_{i, k}$ and $x_{i, k}$ whereas the other uses $s_{i, k+1}$ and $\overline x_{i, k}$.

    For $j \in [m]$, the graph $H_j$ represents the $j$-th clause and contains two new vertices $s_j^c$ and $c_j$.
    It also contains edges that connect $c_j$ to the vertices corresponding to the variables in the $j$-th clause.
    Specifically, suppose that the $j$-th clause is $l_1 \lor l_2 \lor l_3$.
    For each literal $l \in \set{l_1, l_2, l_3}$, we add edges to $H_j$ as follows:
    If $l$ is the $k$-th occurrence of the non-negated variable $x_i$, then $H_j$ contains edges from $s_j^c$ to $s_{i, k}$ and from $x_{i, k}$ to $c_j$.
    Otherwise, if $l$ is the $k$-th occurrence of a negated variable $\overline x_i$, then we connect $s_j^c$ to $s_{i, k+1}$ and $\overline x_{i, k}$ to $c_j$ instead.
    Finally, $H_j$ contains edges connecting $s^*$ to $s_j^c$ and $c_j$ to $z^*$.

    To complete the $\lifeTEDSC$ instance, we set the lifespan bound $h = 5$, set the number of walks $k = 4m$, and construct the draft schedule as
    \begin{align*}
        D_{\mathrm{var}} & = \bigcup_{i \in [n], k \in [n_i]} \set{(s^*, s_{i, k}, 2), (z_{i, k}', z^*, 6)}, \\
        D_{\mathrm{cls}} & = \bigcup_{j \in [m]} \set{(s^*, s_j^c, 1), (c_j, z^*, 5)},                       \\
        D                & = D_{\mathrm{var}} \cup D_{\mathrm{cls}}.
    \end{align*}

    ($\Rightarrow$)\quad
    Suppose that there is a variable assignment $A\colon [n] \to \set{0,1}$ satisfying $\varphi$.
    Then we can construct a schedule as follows:
    For each variable $i \in [n]$, $n_i$ walks start at $s^*$ at time step $2$.
    We refer to the set of all these walks as \emph{variable walks}.
    Suppose that $A(i) = 1$, that is, $x_i$ is true in $A$.
    Then the $k$-th walk uses the path $s^* \, s_{i, k+1} \, \overline x_{i, k} \, z_{i, k} \, z_{i, k}' \, z^*$ such that it crosses the $z_{i, k}' z^*$ edge at time step $6$.
    If instead $A(i) = 0$, then the path $s^* \, s_{i, k} \, x_{i, k} \, z_{i, k} \, z_{i, k}' \, z^*$ should be used.
    These paths are internally vertex disjoint, so they are edge disjoint.

    For each clause $j \in [m]$, there is one additional walk that starts on $s^*$ at time step $1$.
    We call these walks \emph{clause walks}.
    As $A$ is a satisfying assignment, at least one of the literals in the $j$-th clause is satisfied, so let $l$ be any such literal.
    Suppose that $l$ is the $k$-th occurrence of the non-negated variable $x_i$.
    Then the walk takes the path $s^* \, s_j^c \, s_{i, k} \, x_{i, k} \, c_j \, z^*$.
    If $l$ is the $k$-th occurrence of a negated variable $\overline x_i$, then the walk takes the path $s^* \, s_j^c \, s_{i, k+1} \, \overline x_{i, k} \, c_j \, z^*$.
    This walk is edge disjoint with all variable walks because the edge $s_{i, k} x_{i, k}$ ($s_{i, k+1} \overline x_{i, k}$) is used only if $x_i$ is false (true) in $A$.
    It is disjoint to other clause walks because all clause walks are internally vertex disjoint.
    Finally, all described walks have length and lifespan $5$.
    As the total number of literals is $3m$, the total number of walks is $4m$.

    ($\Leftarrow$)\quad
    Suppose, on the other hand, that there is a feasible schedule $S$ of size at most $4m$.
    Observe that by the construction of $D$, these walks must follow the patterns of the variable and clause walks described previously:
    First, each walk can cover at most two demands and it can do so in exactly one of two ways.
    Either it covers one demand at time step $1$ and one at time step $5$, or alternatively it covers one at time step $2$ and one at time step $6$.
    The other combinations are not possible because the second required edge is too far away to reach in time.
    As there are at most $4m$ walks and $8m$ demands, all $4m$ walks must be used, and each must cover exactly two demands.
    Observe that the distances between the demand pairs require that every walk moves at each time step.
    The walks covering $D_{\mathrm{var}}$ must follow the pattern of the variable walks.
    To reach a second demand in time, these walks must take either the $x_{i, k} z_{i, k}$ or the $\overline x_{i, k+1} z_{i, k}$ edge.
    This means that the walk covering the demand $s^* s_{i, k+1}$ cannot take the respective other edge, because they would meet at $z_{i, k}$ at the same time and one of them could not reach a demand in time while also staying TED.
    Based on this, we define the variable assignment $A$ where each variable $x_i$ is true in $A$ if and only if the edge $x_{i, 1} z_{i, 1}$ is \emph{not} taken by any walk.

    To prove that $A$ satisfies $\varphi$, observe that there are $m$ more walks in $S$ which cover $D_{\mathrm{cls}}$.
    The walk that covers the demand $(c_j, z^*, 5)$ must also cover a second demand from $D_{\mathrm{cls}}$.
    Thus, it must be of the form $s^* \, s_{j'}^c \, s_{i, k} \, x_{i, k} \, c_j \, z^*$ or $s^* \, s_{j'}^c \, s_{i, k+1} \, \overline x_{i, k} \, c_j \, z^*$ for some $i \in [n], j' \in [m]$.
    Notably, the edges $s_{i, k} x_{i, k}$ or $s_{i, k+1} \overline x_{i, k}$ are taken at time step $3$, as all walks move at each time step.
    This also means that the former (latter) can be taken only if $x_i$ is true (false) in $A$, as it needs to be TED from the variable walks.
    As the edge $x_{i, k} c_j$ ($\overline x_{i, k} c_j$) exists only if $x_i$ ($\overline x_i$) is a literal in the $j$-th clause, this clause is satisfied by $A$.

    This concludes the proof for $\lifeTEDSC$.
    To prove that this reduction also works for $\lenTEDSC$, observe that any walk covering two demands needs to move at every time step, as it can otherwise not reach a second demand.
    Thus, any feasible solution to the $\lenTEDSC$ instance is also a $\lifeTEDSC$ solution and thus implies a $\TSAT$ solution.
\end{proof}

By tracking the size of the parameters, the above reduction yields a conditional lower bound based on ETH.

\begin{corollary}
    Unless ETH fails, there is no algorithm that solves $\lenTEDSC$ or $\lifeTEDSC$ in $2^{o(n+m+k+\cardinality D)}$ time.
\end{corollary}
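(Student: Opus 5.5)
The plan is to track the parameters of the reduction in the proof of \cref{thm:h5-hard}, composed with the normalization of \cref{thm:sat-var-occ-equal}, and then invoke \cref{thm:eth-nm}. Concretely, start from a $\TSAT$ instance $\varphi$ with $n_\varphi$ variables and $m_\varphi$ clauses. By \cref{thm:sat-var-occ-equal}, we obtain an equisatisfiable instance $\psi$ in which each $x_i$ and $\lnot x_i$ occur equally often. The size of $\psi$ is linear in the size of $\varphi$. Since every clause has at most three literals, this size is $\O(n_\varphi + m_\varphi)$, so $\psi$ has $n' + m' \in \O(n_\varphi + m_\varphi)$ variables and clauses.

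Next, bound every quantity of the constructed instance $(G, D, k, h)$ linearly in $n' + m'$.
\begin{itemize}
    \item The total number of occurrences is $\sum_i n_i \le 3m'$, since the $n_i$ count occurrences.
    \item The graph $G$ has $2 + 5\sum_i n_i + 2m' \in \O(m')$ vertices.
    \item The edges comprise $10\sum_i n_i$ edges in the variable gadgets and at most $2 + 6$ edges per clause gadget, so $\cardinality{E(G)} \in \O(m')$.
    \item The demand set satisfies $\cardinality D = 2\sum_i n_i + 2m' \in \O(m')$.
    \item Finally, $k = 4m'$ and $h = 5$.
\end{itemize}
Hence $n + m + k + \cardinality D \in \O(n_\varphi + m_\varphi)$, where the left side uses the graph's $n,m$. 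Moreover, the whole construction runs in time polynomial (indeed linear) in the size of $\varphi$.

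Now suppose an algorithm solves $\lenTEDSC$ or $\lifeTEDSC$ in time $2^{o(n+m+k+\cardinality D)}$. Composing it with the reduction decides $\TSAT$ in time $2^{o(n_\varphi + m_\varphi)} + \poly(n_\varphi+m_\varphi)$, which is $2^{o(n_\varphi+m_\varphi)}$. This contradicts \cref{thm:eth-nm}, and the statement holds for both variants because \cref{thm:h5-hard} covers both.

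The only delicate point is that the linear blowup bound must be in terms of $n_\varphi + m_\varphi$ and not merely the bit size. This holds because 3-CNF formulas have size $\Theta(n_\varphi + m_\varphi)$ up to logarithmic encoding factors, and the normalization is linear in the number of literal occurrences. I would also note explicitly that the composition of linear-parameter reductions preserves $o(\cdot)$ in the exponent.
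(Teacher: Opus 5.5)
Your proposal is correct and follows the paper's argument: compose the linear-size normalization of \cref{thm:sat-var-occ-equal} with the reduction of \cref{thm:h5-hard}, observe that $n+m+k+\cardinality D$ grows only linearly in the number of variables and clauses, and conclude via \cref{thm:eth-nm}. Your explicit counts, and your remark that the blowup must be linear in $n_\varphi + m_\varphi$ rather than in bit size, spell out what the paper compresses into its one-line observation that each variable and clause contributes a constant number of vertices, edges, walks, and demands.
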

\begin{proof}
    First, modifying the instance according to \Cref{thm:sat-var-occ-equal} increases the size of the formula only linearly.
    Then, for every variable and clause of the given formula $\varphi$, a constant number of vertices, edges, walks, and demands are added to the $\lenTEDSC$ or $\lifeTEDSC$ instance.
    Therefore, if an algorithm exists that solves $\lenTEDSC$ or $\lifeTEDSC$ in the desired time, we can solve $\TSAT$ in sub-exponential time, which contradicts \cref{thm:eth-nm}.
\end{proof}

\subsection{Reduction from EDP on DAGs}
\label{sec:edp-reduction}
In this section, we reduce $\EDP$ on static graphs to constrained $\TEDSC$.
As discussed in \Cref{sec:related-work}, the literature on the parameterized complexity of $\EDP$ is very rich.
We noted there that we can reduce the vertex and edge variants of the disjoint path problem to each other in polynomial time.
Although this fact is generally known, we provide a proof here for completeness.

\begin{lemma}
    There are polynomial time algorithms $f_{V \to E}$ and $f_{E \to V}$ that transform between $\VDP$ and $\EDP$.
    In particular, let $\mathcal I = (G, \set{(s_i, t_i) \in V(G)^2}_{i = 1}^k)$ be an instance of $\VDP$ ($\EDP$).
    Then $f_{V \to E}(\mathcal I)$ ($f_{E \to V}(\mathcal I)$) is an equivalent instance of $\EDP$ ($\VDP$) with the same number of terminal pairs.
    Moreover, the number of vertices in the graphs of $f_{V \to E}(\mathcal I)$ and $f_{E \to V}(\mathcal I)$ is $2 \cdot \cardinality{V(G)}$ and $k \cdot \cardinality{V(G)} + \cardinality{E(G)}$, respectively.
\end{lemma}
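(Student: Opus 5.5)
We give both transformations explicitly and check, for each, that disjoint path systems correspond in both directions. Throughout, graphs are directed.

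\emph{$f_{V \to E}$ (vertex splitting).} Given a $\VDP$ instance $(G, \set{(s_i,t_i)}_{i=1}^k)$, build $G'$ on $\set{v^\vin, v^\vout \with v \in V(G)}$, so $\cardinality{V(G')} = 2\cardinality{V(G)}$. Its edges are an internal edge $(v^\vin, v^\vout)$ for every $v \in V(G)$ and an edge $(u^\vout, v^\vin)$ for every $(u,v) \in E(G)$. The terminal pairs become $(s_i^\vin, t_i^\vout)$.

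\begin{itemize}
\item[($\Rightarrow$)] A vertex disjoint path $v_0 \dots v_\ell$ maps to $v_0^\vin v_0^\vout v_1^\vin \dots v_\ell^\vout$. Two such images could only share an edge if they shared an internal edge, i.e.\ a vertex of $G$. So the images are edge disjoint.
\item[($\Leftarrow$)] Every edge entering $v^\vin$ is followed only by the internal edge $(v^\vin, v^\vout)$. Hence each path in $G'$ alternates between internal and original edges and projects to a path in $G$. Edge disjointness on the internal edges gives vertex disjointness in $G$.
\end{itemize}

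\emph{$f_{E \to V}$.} The vertex count $k\cardinality{V(G)} + \cardinality{E(G)}$ suggests a line-graph style construction with vertex copies. Introduce one vertex $x_e$ for each edge $e \in E(G)$, and $k$ copies $v^{(1)}, \dots, v^{(k)}$ of each $v \in V(G)$. For $e=(u,v)$, add the edges $(u^{(j)}, x_e)$ and $(x_e, v^{(j)})$ for all $j \in [k]$. Set the terminals to $(s_i^{(i)}, t_i^{(i)})$, so each path starts and ends on its own private copy.

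\begin{itemize}
\item[($\Rightarrow$)] Route path $i$ through copy $i$ of every vertex it visits, i.e.\ $s_i^{(i)} x_{e_1} v_1^{(i)} x_{e_2} \cdots$. Distinct paths use distinct copies of vertices of $G$. Because the original paths are edge disjoint, they also use distinct $x_e$. So the images are vertex disjoint.
\item[($\Leftarrow$)] Each path in the new graph alternates between copy vertices and edge vertices. Contracting copies therefore yields a walk in $G$ from $s_i$ to $t_i$. Distinct paths use distinct $x_e$, so the resulting walks are edge disjoint. Shortcutting repeated vertices turns each walk into a path without adding edges.
\end{itemize}

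\emph{Main obstacle: terminals.} The delicate point in $f_{E \to V}$ is that in $\EDP$ several terminal pairs may share an endpoint, or a path may pass through another path's terminal, while $\VDP$ forbids any shared vertex. The $k$ copies exist exactly to resolve this, which is why we index the terminals as $s_i^{(i)}$ and $t_i^{(i)}$.

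There is a residual case: path $j$ may pass through $s_i$ using copy $j$ of it. This is allowed, since that copy is different from copy $i$. We only need to check that no path is forced onto another path's terminal copy. Path $j$ uses only copies indexed $j$, plus its own terminal copies, which are also indexed $j$. So this cannot happen.

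Both constructions are clearly polynomial, and both keep the number of terminal pairs equal to $k$.
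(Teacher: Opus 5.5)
Your proposal is correct and uses the same two constructions as the paper: in/out vertex splitting for $f_{V \to E}$, and $k$ copies of each vertex plus one vertex per edge, with terminals $(s_i^{(i)}, t_i^{(i)})$, for $f_{E \to V}$. Your backward-direction arguments (alternation between internal and original edges, and contracting copies and then shortcutting the resulting walks) spell out details that the paper's proof leaves implicit.
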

\begin{proof}
    For $f_{V \to E}$, suppose you want to find $k$ vertex disjoint paths in a graph $G$, then construct a graph $H$ as follows:
    For every vertex $v \in V(G)$, introduce two vertices $v^{\mathrm{in}}$ and $v^{\mathrm{out}}$ connected by a directed edge $(v^{\mathrm{in}}, v^{\mathrm{out}})$ in $H$.
    Then add the edges $(u, v) \in E(G)$ as edges $(u^{\mathrm{out}}, v^{\mathrm{in}})$ to $H$.
    Any set of vertex disjoint $(s_i, t_i)$ paths in $G$ correspond to edge disjoint $(s_i^{\mathrm{in}}, t_i^{\mathrm{out}})$ paths in $H$ and vice versa.

    For $f_{E \to V}$, suppose you want to find $k$ edge disjoint paths in a graph $G$, then modify your input to a graph $H$ as follows:
    For every vertex $v \in V(G)$, $H$ has $k$ vertices $v^1, v^2, \dots, v^k$.
    For each edge $(u, v) \in E(G)$, introduce one vertex $e^{uv}$ and $2k$ edges $(u^i, e^{uv}), (e^{uv}, v^i)$.
    The $i$-th of the $k$ vertex disjoint paths uses the vertices $\set{v^i}_{v \in V(G)}$ corresponding to vertices in $G$, and the vertices $\set{e^{uv}}_{(u, v) \in E(G)}$ corresponding to edges in $G$.
    Thus, edge disjoint paths in $G$ that connect $\set{(s_i, t_i)}_{i = 1}^k$ are exactly vertex disjoint paths in $H$ that connect $\set{(s_i^i, t_i^i)}_{i = 1}^k$.
    Note that the number of paths $k$ one can ask for is at most $\cardinality{E(G)}$, so this transformation is polynomial.
\end{proof}
\Textcite{klobas2023interference} show that finding temporally vertex disjoint walks ($\TVDW$) is $\W[1]$-hard even on DAGs by reducing from $\EDP$ on static DAGs~\cite{slivkins2010parameterized}.
This does not contradict our result that $\TEDSC \in \P$ because in $\TEDSC$ the demands are ubiquitous, whereas in $\TVDW$ the demands come paired in terminals.
For the same reason, single-commodity flow is tractable, whereas two-commodity flow is not.

However, once we consider $\lenTEDSC$ or $\lifeTEDSC$, we have another dimension available to restrict the shape of the temporal walks, which allows us to show the same hardness result.
In the following reduction, we first subdivide a static DAG so that static and temporal edge disjointness become equivalent.
This step is similar to the reduction by \textcite{klobas2023interference}.
In contrast to their reduction, though, we cannot output an arbitrary temporal graph, but have to adhere to the fixed temporal structure of $\G(G, D)$ in $\TEDSC$.
Going beyond their reduction, we then use the length/lifespan constraints to enforce a pairing of demands corresponding to the pairing of terminal pairs of a given $\EDP$ instance.
As this second step assumes the graph structure constructed in the first step, we also reduce from the static $\EDP$ problem rather than using $\TVDW$ as an intermediate.
Finally, both the reduction by \textcite{klobas2023interference} and our result can be strengthened by reducing from $\EDP$ on \emph{planar} DAGs.
The corresponding hardness has been shown by \textcite{chitnis2023tight}, and both the reduction by \textcite{klobas2023interference} and our reduction preserve planarity.

\begin{theorem}
    \label{thm:hard-by-D-planar-dag}
    Both $\lenTEDSC$ and $\lifeTEDSC$ parameterized by $k + \cardinality D$ are $\W[1]$-hard on planar DAGs.
\end{theorem}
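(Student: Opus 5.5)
We reduce from $\EDP$ on planar DAGs parameterized by the number $k$ of terminal pairs, which is $\W[1]$-hard by \textcite{chitnis2023tight}. Given $(G, \set{(s_i, t_i)}_{i=1}^k)$ with $G$ a planar DAG, we output a $\lenTEDSC$/$\lifeTEDSC$ instance with $k$ walks and $2k$ demands, so the parameter $k + \cardinality D = 3k$ is bounded in terms of $k$. The construction has two steps. First, we subdivide edges so that static and temporal edge disjointness coincide. Second, we attach private entry and exit paths with demands, so that the length or lifespan bound forces every walk to connect a matching terminal pair without waiting.

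\textbf{Step 1 (levelling).} We fix a topological order and assign each vertex a level $L(v) \in \N$ with $L(u) < L(v)$ for every $(u,v) \in E(G)$; the topological index works. We subdivide each edge $(u,v)$ into a path of length $L(v) - L(u)$, so every edge of the new graph goes from some level $x$ to level $x+1$. Subdivision preserves planarity and acyclicity. The key invariant is that a walk which never waits and is at a level-$x$ vertex at time $x + C$, for a common offset $C$, traverses every level-$x$ edge at time exactly $x + C$. Hence two such walks are TED if and only if they are edge disjoint as static paths.

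\textbf{Step 2 (pairing gadgets).} For each $i$, we attach a fresh entry path $P_i$ ending at $s_i$. Its first edge carries the demand at time $i$, and its length is chosen so that a walk that starts there and moves at every step reaches $s_i$ exactly at time $L(s_i) + C$. Here $C$ is a large constant, so all entry paths are long enough. Symmetrically, we attach an exit path $Q_i$ starting at $t_i$ whose last edge carries the demand at time $N + i$, where $N$ is chosen so that a non-waiting walk leaving $t_i$ at time $L(t_i) + C$ traverses that last edge exactly at time $N+i$. We set $h = N$, measured as the lifespan between first and last move, which equals the length for a walk that never waits. Pendant paths keep the graph a planar DAG.

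\textbf{Correctness.} The forward direction routes the $k$ disjoint paths with the prescribed timings. This gives TED walks by the invariant, each of length and lifespan exactly $h$.

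For the converse, we have $2k$ demands and $k$ walks. The design should ensure that no walk covers two entry demands or two exit demands: the entry paths are private and the graph is a DAG, so this needs checking but looks straightforward. Then every walk covers exactly one entry demand and one exit demand. A walk starting at time $i$ must finish by $N + i$. We show by induction on $j$ that exit $j$ is served by the walk from entry $j$. Exit $1$ at time $N+1$ can only be matched with an entry at time $\le 1$, i.e.\ entry $1$. Having removed pairs $1, \dots, j-1$, exit $j$ can only be matched with entry $j$.

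With the pairing fixed, the span of each walk equals $h$. For $\lifeTEDSC$, and also for $\lenTEDSC$ because the graph distance from the entry demand to the exit demand is exactly $h$, this forces the walk to move at every step. Therefore each walk follows the timing of the invariant, and the static parts of the walks are edge disjoint $s_i$-$t_i$ paths in the subdivided graph, which map back to edge disjoint paths in $G$.

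\textbf{Main obstacle.} The delicate point is the pairing argument. We must verify that a walk from entry $i$ cannot profitably reach exit $j \ne i$. We must also verify that the time offsets $i$ in the demands do not break the common-offset invariant. This is why the entry path lengths compensate for the start times, so that all walks share the same level-to-time map. In addition, the distance from each entry to the matching exit must be exactly $h$, so that waiting is impossible under the length constraint as well.
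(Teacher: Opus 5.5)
Your reduction is essentially the paper's: levelling by a topological index, pendant gadgets carrying $2k$ demands whose source and sink endpoints force one entry and one exit demand per walk, and non-waiting timing that turns temporal edge disjointness into static edge disjointness. The paper ties the demand times to $\varphi(z_i)$, uses a single exit edge, and gets the pairing from a potential argument showing the total length is exactly $kh$; your distinct times $i$ and $N+i$ with an ordering argument are a fine alternative. However, your ordering argument is stated backwards. The bound says the walk covering entry $i$ can cover exit $j$ only if $N+j \le N+i$, i.e.\ $j \le i$. So the bound excludes no entry for exit $1$, and after removing pairs $1,\dots,j-1$, exit $j$ is still compatible with all of entries $j,\dots,k$. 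The induction as written therefore forces nothing. Run it over entries instead: entry $1$ can only serve exit $1$, and entry $j$ can only serve an unused exit $\le j$, i.e.\ exit $j$. Equivalently, run it downward over exits, or note that a permutation $\sigma$ with $\sigma(i) \le i$ for all $i$ is the identity.

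There are two further points. For $\lenTEDSC$, the sentence ``a walk starting at time $i$ must finish by $N+i$'' does not follow from the length bound, because such a walk may wait. What you need is that every edge of the levelled graph raises the level by exactly one. Hence a walk covering entry $i$ and exit $j$ has length exactly $\len(P_i) + L(t_j) - L(s_i) + \len(Q_j) = h + j - i$ however it waits, which again gives $j \le i$. Alternatively, do as the paper does: prove the converse only for $\lenTEDSC$ and obtain $\lifeTEDSC$ from the fact that lifespan-feasible schedules are length-feasible; your forward construction never waits, so it serves both variants. Second, the paper defines $\lifespan(p) = t_k - t_0$, measured from the departure of the first edge to the arrival after the last. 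A walk with first edge at time $i$ and last edge at time $N+i$ therefore has lifespan $N+1$, and without waiting also length $N+1$. So you should set $h = N+1$. With these repairs the argument goes through.
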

\begin{proof}
    We reduce from $\EDP$ on planar DAGs parameterized by the number of terminal pairs~\cite{chitnis2023tight} to $\lenTEDSC$.
    Using \Cref{thm:len-life-equiv}, it is easy to see that the proof works analogously for $\lifeTEDSC$.
    Given the graph $G$ and the terminal pairs $(s_1, z_1), \dots, (s_k, z_k)$, let $v^1, v^2, \dots, v^n$ be a topological ordering, that is, an ordering such that for every $(v^i, v^j) \in E(G)$ we have $i < j$.
    For all $v^i$, define $\varphi(v^i) = i$.
    We construct a graph $G'$ that is initially equal to $G$ and then modify it as described in the following two paragraphs.
    An example is illustrated in \Cref{fig:edp-reduction}.

    \begin{figure*}[t]
        \centering
        \includegraphics[height=10cm]{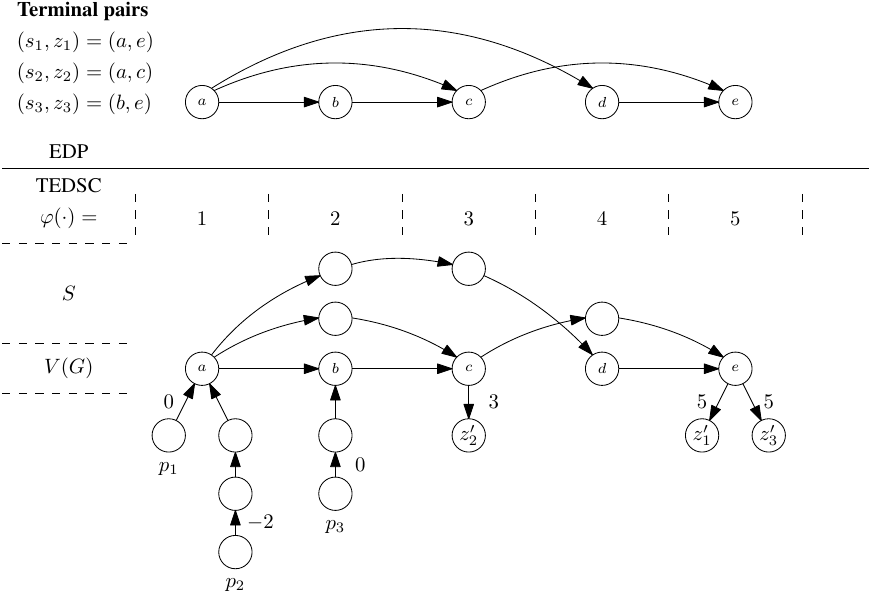}
        \caption{
            Example instance with $n = 5$ vertices, which gives $h = 6$.
            Numbers next to edges indicate a demand at that time.
            A possible solution for the $\EDP$ instance consists of the paths $\set{ade, ac, bce}$.
        }
        \label{fig:edp-reduction}
    \end{figure*}

    First, for every edge $(v^i, v^j) \in E(G)$ with $j - i > 1$, we subdivide it in $G'$ such that the resulting path contains $j - i$ edges.
    Thus, after transforming all edges, every $v^i$-$v^j$-path in $G'$ has length $j - i$.
    Let $S$ denote the set of vertices created for subdividing edges.
    We extend the definition of $\varphi$ to $S$ such that if the edge $(u, v)$ is subdivided as $u = u_0 u_1 u_2 \dots u_\ell = v$, then $\varphi(u_i) = \varphi(u) + i$.
    Set $h = n + 1$ and $k$ as given in the $\EDP$ instance.

    Next, let $(s_i, z_i)$ be a terminal pair.
    We add a vertex $z_i'$ with a single incident edge $(z_i, z_i')$ to $G'$.
    We will refer to those edges as \emph{exit edges}.
    Furthermore, we attach a path $p_i$ of length $\ell(p_i) = h - (\varphi(z_i) - \varphi(s_i)) - 1$ to $G'$ such that the last vertex of $p_i$ is $s_i$ and all other vertices of $p_i$ are created new.
    Let $(a, b)$ be the first directed edge of $p_i$.
    The distance between $a$ and $z_i'$ is thus exactly $h$.
    Finally, we add the temporal edges $(a, b, \varphi(z_i) - h + 1)$ and $(z_i, z_i', \varphi(z_i))$ to $D$.
    Observe that all modifications applied to $G$ to construct $G'$ preserve planarity, so $G'$ is a planar DAG.

    ($\Rightarrow$)\quad
    Suppose that there is a solution to the $\EDP$ instance.
    Then the paths should start in the corresponding $p_i$ paths, travel into $V(G)$, and then exit via the edge $(z_i, z_i')$.
    The trip should start at the time of the demand edge on $p_i$, which leaves exactly $h$ time steps until the path is on $z_i$ to take the demand edge to $z_i'$ at the demanded time.
    Since the paths are edge disjoint in $G$ and all edges outside of $E(G)$ are traversed by only a single path, the resulting paths are temporally edge disjoint.

    ($\Leftarrow$)\quad
    Suppose, on the other hand, that the constructed $\TEDSC$ instance is feasible.
    We show that translating the temporal walks from $V(G')$ to $V(G)$ and viewing them as static rather than temporal walks gives a solution to the $\EDP$ instance.
    \begin{claim}
        Every walk fulfills exactly two demands, one on a path $p_i$ and one on an exit edge $(z_i, z_i')$.
    \end{claim}
    Every exit edge of $G'$ leads to a sink, so after taking an exit edge, a walk cannot fulfill any further demands.
    Analogously, every demand on a path $p_i$ is on an edge from a source of $G'$.
    As there are $2k$ demands in total, every walk must fulfill exactly two demands, one on a path $p_i$ and one on an exit edge.
    \begin{claim}
        All walks have length $h$.
    \end{claim}
    It suffices to show that the total distance traveled by all walks is $kh$.
    Each walk traverses one of the paths $p_i$ and one of the exit edges.
    Thus, we need to show that the walks travel $kh - \sum_{i = 1}^k \len(p_i) - k = \sum_{i = 1}^k \varphi(z_i) - \sum_{i = 1}^k \varphi(s_i)$ steps inside of $V(G) \cup S$.
    We do so using a potential argument:
    Suppose that the walks are on the vertices $u_1, u_2, \dots, u_k \in V(G) \cup S$, then the potential is defined as $\varPhi = \sum_{i = 1}^k \varphi(u_i)$.
    When all walks have traversed only the paths $p_i$ and are on the $s_i$ vertices then $\varPhi = \sum_{i = 1}^k \varphi(s_i)$.
    Before the walks leave $V(G) \cup S$ via the exit edges we have $\varPhi = \sum_{i = 1}^k \varphi(z_i)$.
    The potential increases by $1$ exactly when a walk traverses an edge in $V(G) \cup S$.
    Therefore, exactly the desired number of steps is taken.
    \begin{claim}
        The walks connect the terminal pairs $(s_i, z_i)$.
    \end{claim}
    After entering $V(G)$ through a path $p_i$, a walk must travel inside $V(G) \cup S$ for the rest of its lifespan before taking an exit edge.
    By construction, the remaining lifespan after entering through $p_i$ is $h - (\varphi(z_i) - \varphi(s_i))$, so it must take $\varphi(z_i) - \varphi(s_i)$ edges inside $V(G) \cup S$.
    As $p_i$ ends on $s_i$, this leads exactly to the exit edge $(z_i, z_i')$.
    \begin{claim}
        The walks are edge disjoint.
    \end{claim}
    We know that the paths are temporally edge disjoint by assumption.
    The previous claims and the chosen time steps of the demands imply that all walks are constantly moving during their lifespan.
    By construction of $D$, the only time a walk can be on a vertex $v \in V(G) \cup S$ is at the time step $\varphi(v)$.
    Thus, for every edge inside $V(G) \cup S$, no two walks cross this edge at two different times.
    This implies that the paths are not only temporally edge disjoint, but also edge disjoint and therefore correspond to a solution of the $\EDP$ instance.
\end{proof}

\Textcite{chitnis2023tight} also proves a lower bound of $f(k) n^{o(k)}$ for $\EDP$ on planar DAGs based on ETH.
By tracking the modifications from our reduction, we get the following result:

\begin{corollary}
    Unless ETH fails, there is no algorithm that solves $\lenTEDSC$ or $\lifeTEDSC$ in $f(k + \cardinality D) n^{o(k + \cardinality D)}$ time.
\end{corollary}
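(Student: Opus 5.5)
The plan is to compose the reduction of \cref{thm:hard-by-D-planar-dag} with the ETH lower bound of \textcite{chitnis2023tight}. That result says that, unless ETH fails, $\EDP$ on planar DAGs with $k$ terminal pairs cannot be solved in time $f(k) n^{o(k)}$. So it suffices to check two things about the reduction: that it runs in polynomial time, and that it is linear in the parameter.

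First, the parameter. The constructed $\lenTEDSC$ (resp.\ $\lifeTEDSC$) instance keeps the number of walks equal to the number $k$ of terminal pairs. Its draft schedule contains exactly two demands per pair: one on the first edge of $p_i$ and one on the exit edge $(z_i,z_i')$. Hence $\cardinality D = 2k$, and the new parameter is $k' = k + \cardinality D = 3k$.

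Second, the size. The new graph $G'$ is obtained as follows.
\begin{enumerate}
\item Every edge of $G$ is subdivided into at most $n-1$ edges, adding $\O(nm)$ vertices.
\item One exit vertex is added per terminal pair.
\item One path $p_i$ of length less than $h = n+1$ is attached per terminal pair.
\end{enumerate}
So $n' = \cardinality{V(G')} \in \O(nm + kn) \subseteq \O(n^3)$, and the construction takes polynomial time.

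Now suppose some algorithm solves $\lenTEDSC$ or $\lifeTEDSC$ in time $f(k')\, n'^{o(k')}$. Running it on the reduced instance takes
\[
f(3k)\,(n^3)^{o(3k)} = f(3k)\, n^{o(k)},
\]
plus polynomial overhead. This would decide $\EDP$ on planar DAGs in time $g(k) n^{o(k)}$, contradicting the ETH-based bound of \textcite{chitnis2023tight}.

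The only delicate point is the asymptotic bookkeeping. A polynomial blow-up $n' \le n^c$ and a linear blow-up $k' = 3k$ must together keep the exponent $o(k)$. This holds because $c \cdot o(3k) = o(k)$ for constant $c$. It also uses that the little-$o$ in the lower bound is taken with respect to the parameter, uniformly for computable $f$.
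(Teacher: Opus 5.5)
Your proposal is correct and follows the paper's proof: you compose the reduction of \cref{thm:hard-by-D-planar-dag} with the $f(k)\,n^{o(k)}$ ETH lower bound of \textcite{chitnis2023tight}, using $k + \cardinality D = 3k$ and $\cardinality{V(G')} \in \O(n^3)$, which is exactly the paper's bookkeeping. Your remarks on absorbing the polynomial overhead and the constant factors in the exponent are slightly more explicit than the paper's. Both versions implicitly use $k \le m$ to get $\O(nm + kn) \subseteq \O(n^3)$.
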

\begin{proof}
    We denote by $k' = k + \cardinality D = 3k$ the parameter of the constructed $\lenTEDSC$ or $\lifeTEDSC$ instance.
    Further, let $n' = \cardinality{V(G')}$.
    By subdividing the edges of $G$, we add at most $m \cdot n \in \O(n^3)$ vertices to $G'$.
    The attached paths contain at most $k \cdot h \le m \cdot n \in \O(n^3)$ vertices.
    Thus, if an algorithm could solve $\lenTEDSC$ or $\lifeTEDSC$ in time $f(k') (n')^{o(k')}$, then it would solve the underlying $\EDP$ instance in time $f(3k) \O(n^3)^{o(3k)} = f'(k) n^{o(k)}$, which contradicts the lower bound by \textcite{chitnis2023tight}.
\end{proof}

\subsection{Reduction from Bin Packing}
\label{sec:binpacking-reduction}
In this section, we show $\W[1]$-hardness of $\lenTEDSC$ by reducing from the $\BinPacking$ problem.
The $\BinPacking$ problem is a generalization of the $\Partition$ problem.

\begin{problem}[]{Bin Packing}
    \Input & Universe $U$ of $n$ items, size function $s\colon U \to \N^+$, integers $k, B \in \N^+$. \\
    \Prob  & Does there exist a partition of $U$ into $k$ sets $U_1, U_2, \dots, U_k$ such that $s(U_j) = \sum_{x \in U_j} s(x) \le B$ for all $j \in [k]$?
\end{problem}

\Textcite{jansen2013bin} show that even when all integers in the input (including the item sizes) are encoded in unary, the problem is still $\W[1]$-hard when parameterized by the number of bins $k$.
We refer to this variant of the problem as $\UnaryBinPacking$.
Note that $\BinPacking$ parameterized by the bin size $B$ is in $\FPT$~\cite{doring2025parameterized}.

The transformation from $\BinPacking$ to $\lenTEDSC$ is very natural:
A bin of capacity $B$ is the same as a walk that can make $B$ steps.
In that sense, if an item $x$ is packed into a bin, then the corresponding walk must travel a distance of $s(x)$.
An easy way to realize this setup is to build a tree whose root initially hosts all walks and whose branches are paths of length $s(x)$.
By placing demands at the ends of the paths at sufficiently spaced time steps, traversing the branches is equivalent to placing the corresponding item into the corresponding bin.
This scheme is very similar to that of \textcite{doring2025parameterized} for capacitated vehicle routing.
As the item sizes are encoded in unary, constructing the branches with $s(x)$ vertices is still polynomial in the size of the instance.

Observe that each branch of the tree is used exactly once by a single walk.
If we place demands at the root of the tree that force all walks to return in between handling the demands of the items, we can merge the branches of the tree into a single path.
Thus, the walks are hosted at one end of the path, and each item appears as a demand somewhere on the path.
\begin{proposition}
    $\lenTEDSC$ parameterized by $\cardinality D$ is $\W[1]$-hard, even if $G$ is a bidirected path.
\end{proposition}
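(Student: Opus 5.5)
The plan is a parameterized reduction from $\UnaryBinPacking$ parameterized by the number of bins $k$, which is $\W[1]$-hard by \textcite{jansen2013bin}. It follows the informal description above: one bin becomes one walk, and one item becomes a round trip of length $2s(x)$ along a bidirected path from a hub vertex. One step is not settled, namely bounding the number of demands $\cardinality D$ by a function of $k$ as the statement requires; it is discussed at the end.

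\textbf{Construction.} Let $(U, s, k, B)$ be a $\UnaryBinPacking$ instance with items $x_1, \dots, x_N$, and let $L = \max_x s(x)$. Let $G$ be the bidirected path $r = w_0, w_1, \dots, w_L$, with the walks hosted at the end $r$, and keep the number of walks equal to $k$. The timeline is split into $N$ consecutive \emph{item phases}, one per item $x_j$, each lasting $2L + c$ steps for a small constant $c$. In phase $j$ we place one demand on the edge $(w_{s(x_j)-1}, w_{s(x_j)})$, so some walk must travel from $r$ out to distance $s(x_j)$ and back. Phases are separated by gaps longer than $2L$, so that any walk serving item $j$ is back at $r$ before phase $j+1$ begins. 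We set the length bound to $h = 2B$ plus a term for the synchronising moves described next.

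\textbf{Keeping walks at the root between phases.} To stop a walk from parking partway along the path between phases (which would let it serve a later item more cheaply), we add demands near $r$ at the phase boundaries. A simple choice is $k$ demands on the edge pair $(r, w_1), (w_1, r)$ at staggered times, one per walk, so that every walk is forced back to $r$ between phases. Each forced bounce costs every walk the same fixed length, and we add that amount into $h$.

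\textbf{Correctness.} ($\Rightarrow$) Walk $j$ serves exactly the items in bin $U_j$. Its total length is $2 s(U_j)$ plus the uniform synchronisation cost, which is at most $h$. The walks are TED because items are served in disjoint phases. ($\Leftarrow$) Each item demand is covered by some walk, and reaching it from $r$ and returning costs at least $2s(x_j)$. Assigning each item to the walk that covers it therefore gives bins of size at most $B$. The main technical check here is that the synchronisation demands really do prevent a walk from saving length by waiting off-root. The intended route is to compare a walk's distance from $r$ at consecutive phase boundaries, in the same spirit as the potential argument in \Cref{thm:hard-by-D-planar-dag}.

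\textbf{Main obstacle: bounding $\cardinality D$.} The construction above has $\cardinality D \ge N$, and $N$ is not bounded by any function of $k$. So as stated, it only yields $\W[1]$-hardness parameterized by $k$. To get hardness parameterized by $\cardinality D$ I would next try to encode many items with $\O(f(k))$ demands, for example by letting a single demand force a walk through a whole block of items. I expect this to fail: $\BinPacking$ is $\FPT$ when parameterized by the number of items, since one can brute-force over all $k^N$ assignments. It is therefore likely that the statement's parameter should be $k$ rather than $\cardinality D$, consistent with Result~7 in the introduction.
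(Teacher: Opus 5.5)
\textbf{Gap: the proposal does not bound $\cardinality{D}$ by a function of $k$, and the paper does not either.} The paper gives no formal proof of this proposition, only the sketch before it. That sketch hosts all walks at a root, gives each item a branch of length $s(x)$ with a demand at its end, adds root demands forcing every walk back between items, and merges the branches into one bidirected path. Your path, with one item demand at distance $s(x_j)$ per phase plus synchronising demands at $r$, is the same construction.

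The obstacle you name is real, and the paper's sketch does not overcome it. One demand per item gives $\cardinality{D} \ge \cardinality{U}$, which is unbounded in $k$ on general Unary Bin Packing instances. Restricting to instances with $\cardinality{U} \le g(k)$ does not help, because those are solvable by brute force in $k^{g(k)}$ times a polynomial. So neither argument is a parameterized reduction for the parameter $\cardinality{D}$. What both support is $\W[1]$-hardness parameterized by $k$ on bidirected paths. \Cref{thm:length-w-hard-on-p3} proves that rigorously on the $3$-vertex path, where $\cardinality{D} = k h$ is also unbounded in $k$. As a proof of the statement as written, your proposal has a gap exactly where you say, and nothing in the paper fills it.

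Your closing argument also has limited reach. It only rules out reductions that spend at least one demand per item. It does not show that the $\cardinality{D}$ version is false. Reading the parameter as $k$ is therefore a plausible correction, not a proven one.

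\textbf{Your ($\Leftarrow$) direction also fails as described.} Staggered demands on $(r, w_1)$ and $(w_1, r)$ do not force one bounce per walk. A walk waiting at $r$ can cover several of them by bouncing repeatedly. So the synchronisation cost is not the same for every walk, and walks need not all return to $r$.

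Walks can then trade bounces for item load. Let $h = 2B + \sigma$, with $\sigma$ the intended per-walk synchronisation cost. A walk with load $B+1$ that skips one bounce, and a walk with load $B-1$ that performs that bounce instead, both stay within $h$. Hence a NO-instance of Unary Bin Packing can map to a YES-instance.

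The paper's $3$-vertex proof avoids this by assuming $s(U) = kB$ and making $\cardinality{D} = k h$, so every move of every walk must be a demand edge. This forces the $k$ initial demands onto $k$ distinct walks, and only the walk covering $d_i^{\mathrm{start}}$ can move during phase $i$. You would need an argument of that kind, for example demanding every step of each item's round trip, to make even the parameter-$k$ version go through.
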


Finally, we can simplify this construction even further by collapsing the entire path into just two vertices.
To ensure that each item $x$ still requires $s(x)$ steps, we add multiple demands to each edge.
In particular, there are $2 s(x)$ demands that direct a walk back and forth.
To ensure that the same walk fulfills all these demands, we introduce an additional vertex that serves as the waiting area for all other walks.

\begin{theorem}
    \label{thm:length-w-hard-on-p3}
    $\lenTEDSC$ parameterized by $k$ is $\W[1]$-hard, even if $G$ is a bidirected path of $3$ vertices.
\end{theorem}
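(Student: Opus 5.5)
We reduce from $\UnaryBinPacking$ parameterized by the number of bins $k$, following the sketch above. Given items $U = \set{x_1, \dots, x_n}$ with unary sizes $s(x_i)$, $k$ bins and capacity $B$, we build the bidirected path $w - a - b$. Here $w$ is the \emph{waiting area}, and the edge pair $(a,b),(b,a)$ is where items are ``processed''. We keep the same $k$ and set the length bound $h$ to $B'$, where $B'$ is $B$ times a scaling factor. The scaling accounts for the overhead steps a walk needs to leave and re-enter $w$, which we charge to the item it processes.

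\textbf{Item blocks.} Items are handled in consecutive, disjoint time blocks. The block for item $x_i$ begins with a demand $(w,a,t_i)$. It continues with $2s(x_i)$ consecutive demands alternating $(a,b,t_i+1),(b,a,t_i+2),\dots$, so the edge $a$–$b$ is traversed back and forth at every step. It ends with a demand $(a,w,t_i+2s(x_i)+1)$. Consecutive blocks are separated by a gap of a few steps. The cost charged to item $x_i$ is then $2s(x_i)+2$, and we set $h = 2B + 2\cdot(\text{max items per bin})$. To make the overhead independent of the packing, we first scale sizes: replace $s(x)$ by $s(x)\cdot N$ with $N > n$ and set $h = 2NB + 2n$. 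Then $2N\sum s + 2|U_j| \le 2NB+2n$ holds if and only if $\sum_{x\in U_j} s(x) \le B$. All numbers stay polynomial because the input is unary, and $k$ is unchanged.

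\textbf{Forward direction.} Given a packing, the $k$ walks all start at $w$. For each item, the walk of its bin leaves $w$, performs the whole block, and returns to $w$. All other walks wait at $w$ and use no edges. Walks are therefore trivially TED, and the length of walk $j$ is $\sum_{x\in U_j}(2Ns(x)+2) \le h$.

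\textbf{Backward direction, the main obstacle.} We must show that every block is served by a single walk, so that each item is charged entirely to one bin. The argument is local to the 3-vertex graph. At time $t_i$ the demand $(w,a,t_i)$ is covered by some walk $p$, which is at $a$ at time $t_i+1$. A walk can be at $a$ or $b$ during the block only if it entered via $(w,a)$ at some time. We choose the inter-block gaps and add demands forcing the middle region empty between blocks: the block ends with $(a,w)$, and the next block starts with $(w,a)$. We then argue by counting that no second walk is inside $\set{a,b}$ during a block. Any other walk entering $\set{a,b}$ must use $(w,a)$ at a time when it is not demanded, and this contributes only wasted length.

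Wasted length alone does not give a contradiction, so the cleaner route is a tight-length argument. Suppose the total length of all walks equals $kh$ exactly. This is achieved by padding the instance with dummy unit-size items so that $\sum s = kB$ and each bin gets exactly... but this is messy. Instead, I would argue via the strict-walk alternation. During a block, each demanded traversal at time $t$ must be done by whichever walk sits at the tail at time $t$. A walk at $b$ can only leave via $(b,a)$, and since only one walk can traverse an edge per time step, at most one walk can be at $b$ while the alternation continues. A handoff between two walks inside $\set{a,b}$ would require both to be present in the middle region. If one is at $a$ and the other at $b$ at the same time, the one at $b$ must eventually exit through $a$ and then $(a,w)$, and it can only do so at the final step, since the edge $(a,w)$ is otherwise unavailable without conflicting... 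I expect exactly this case analysis to be the delicate part.

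To simplify it, I would also place demands on $(a,w)$ or $(w,a)$ at the off-parity times inside the block. This blocks the corridor, so no second walk can enter or leave during the block, which forces exactly one walk to serve it. With that in place, the walk covering the block's first demand covers all of it and pays at least $2Ns(x)+2$. The bins $U_j$, defined as the sets of items whose blocks are served by walk $j$, then satisfy the capacity bound by the scaling inequality.
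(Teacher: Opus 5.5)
Your construction is unsound, and the reason goes beyond the handoff issue you identified.

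\textbf{The reduction maps NO instances to YES instances.} In $\lenTEDSC$ a walk may start and end at any vertex and any time, and nothing in your instance forces the $k$ walks to begin at $w$. In fact all of your demands chain into a single temporal walk: each block starts and ends at $w$, and a walk can wait at $w$ through the gaps. Cutting this chain into consecutive pieces of at most $h$ edges gives $\lceil \cardinality{D}/h \rceil$ walks, and these are trivially TED. Whenever $\sum_x s(x) \le kB$ we have $\cardinality{D} = 2N\sum_x s(x) + 2n \le k(2NB+2n) = kh$. So your reduction outputs a YES instance for every Bin Packing instance that passes the trivial volume check. For example, $k=2$, $B=3$ with three items of size $2$ is a NO instance. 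With $N=4$ you get $h=30$ and $\cardinality{D}=54$, and two walks cover the chain.

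Independently, the scaling leaves slack $kh-\cardinality{D} \ge 2(k-1)n$. So even if all walks began at $w$, a second walk could enter $a$ via a non-demanded $(w,a,t)$, wait, and take over the alternation. The first walk then leaves via a non-demanded $(a,w,t')$. This splits an item across two bins at a cost of only two wasted steps.

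\textbf{Your repair does not help.} A demand on $(w,a,t)$ or $(a,w,t)$ does not block that temporal edge. It forces some walk to traverse it, so it pushes an additional walk into $\set{a,b}$ and creates exactly the handoffs you want to exclude.

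\textbf{What the paper uses instead.} It relies on two ingredients you are missing.
\begin{enumerate}
\item \emph{Exact tightness.} Assume w.l.o.g.\ $\sum_x s(x)=kB$ (pad with unit items). Absorb entry and exit into the item's budget, so the phase of $x$ has exactly $2s(x)$ demands: the entry $(w,a)$, then $2(s(x)-1)$ alternating demands on $a$--$b$, then the exit $(a,w)$. No scaling is needed.
\item \emph{An initialization phase.} Add $k$ demands $(a,w,t)$ for $t\in[k]$, and set $h=2B+1$.
\end{enumerate}
The backward direction then goes through as follows.
\begin{itemize}
\item Now $\cardinality{D}=k+2kB=kh$, so every move of every walk is a demanded move.
\item No walk can return to $a$ before time $k+1$, so the $k$ initialization demands require $k$ distinct walks. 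Hence all walks exist from the start and sit at $w$ at time $k+1$.
\item During phase $i$, only the walk that covered the entry demand can be in $\set{a,b}$. It must therefore cover the whole phase and return to $w$.
\item Each walk has length exactly $1+2s(U_j)=2B+1$, which gives $s(U_j)=B$ for every bin.
\end{itemize}
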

\begin{proof}
    We reduce from $\UnaryBinPacking$, so let $(U, s, k, B)$ be an instance.
    Assume without loss of generality that $\sum_{x \in U} s(x) = k \cdot B$.
    Let $G$ be the bidirected path graph with $V(G) = \set{u, v, w}$ and $E(G) = \set{(u, v), (v, u), (v, w), (w, v)}$.
    We output a $\lenTEDSC$ instance $\mathcal I = (G, D, k, h = 2B + 1)$ where the draft schedule $D$ is composed as described in the following paragraph.
    The construction is illustrated in \cref{fig:binpacking-reduction}.

    \begin{figure*}[t]
        \centering
        \begin{subfigure}{0.45\textwidth}
            \centering
            \includegraphics[height=7.5cm, page = 1]{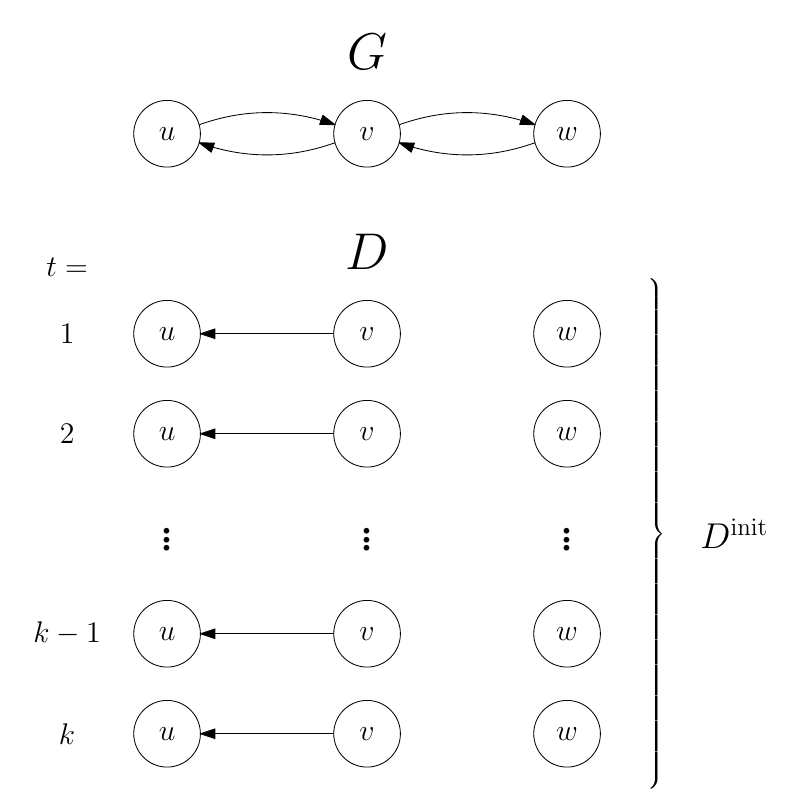}\\[1 ex]
            \caption{Graph $G$ and initialization phase.}
        \end{subfigure}%
        \begin{subfigure}{0.5\textwidth}
            \centering
            \includegraphics[height=7.5cm, page = 2]{assets/binpacking-reduction.pdf}\\[1 ex]
            \caption{Phase of item $x_i$.}
        \end{subfigure}%
        \caption{Demands of each phase in the constructed instance.}
        \label{fig:binpacking-reduction}
    \end{figure*}
    The demands span $k + \sum_{x \in U} 2 s (x)$ time steps, and at every time step in this range, there is exactly one demand.
    The demands are grouped into $1 + \cardinality U$ phases.
    First, there is an initialization phase that consists of $k$ successive demands on the edge $(v, u)$.
    Formally, the corresponding demands are $D^{\mathrm{init}} = \set{(v, u, t) \with t \in [k]}$.
    Next, there are $\cardinality U$ phases, one for each item of $U$.
    The phase of an item $x \in U$ takes $2 s(x)$ time steps.
    Let $\set{x_1, x_2, \dots, x_{\cardinality U}} = U$ be an ordering of the items and for each $i \in [\cardinality U]$ let $b(i) = k + 1 + \sum_{j = 1}^{i-1} 2 s(x_j)$ be the beginning time of the $i$-th phase.
    At the beginning of the phase of item $x_i$, there is a single demand $d_i^{\mathrm{start}} = (u, v, b(i))$ from $u$ to $v$.
    Similarly, at the end of the phase, there is a demand $d_i^{\mathrm{end}} = (v, u, b(i + 1) - 1)$ to return from $v$ to $u$.
    During the phase, there are $2 \cdot (s(x_i) - 1)$ demands
    \begin{align*}
        D_i^{\mathrm{forth}} & = \set{(v, w, b(i) + 1 + 2j) \with j = 0, 1, \dots, s(x_i) - 2}, \\
        D_i^{\mathrm{back}}  & = \set{(w, v, b(i) + 2 + 2j) \with j = 0, 1, \dots, s(x_i) - 2}, \\
        D_i^{\mathrm{work}}  & = D_i^{\mathrm{back}} \cup D_i^{\mathrm{forth}}
    \end{align*}
    going back and forth between $v$ and $w$.
    We denote $D_i = \set{d_i^{\mathrm{start}}, d_i^{\mathrm{end}}} \cup D_i^{\mathrm{work}}$.
    The draft schedule comprises the demands of all phases
    \[
        D = D^{\mathrm{init}} \cup \bigcup_{x_i \in U} D_i.
    \]

    ($\Rightarrow$)\quad
    Observe that if there is a partition of $U$ into sets $\set{U_i}_{i = 1}^k$ of size at most $B$, then we can construct a feasible schedule for $\mathcal I$ as follows.
    For each $i$, there is one walk that first covers the $i$-th demand of $D^{\mathrm{init}}$ and then covers the demands $D_j$ for all $j \in U_i$.
    As $s(U_i) \le B$, the length of this walk is at most $h$.

    ($\Leftarrow$)\quad
    Suppose on the other hand there is some schedule $S = \set{p_i}_{i = 1}^k$ for $\mathcal I$.
    First, observe that as $s(U) = k \cdot B$, we have $\cardinality D = k + 2 \cdot s(U) = k \cdot h$, so a walk of $S$ can move only when the temporal edge it is taking is in $D$.
    This implies that every walk in $S$ covers exactly one demand in $D^{\mathrm{init}}$.
    Thus, at time step $k$, all $k$ walks are on vertex $u$.
    Next, observe that only one walk can move during each item phase:
    To cover a demand of $D_i^{\mathrm{work}}$, the walk must be on $v$ or $w$, which is only possible if this walk covered $d_i^{\mathrm{start}}$.
    Furthermore, this same walk must then also cover $d_i^{\mathrm{end}}$.
    At the end of each phase, all walks are on vertex $u$.
    As all walks have length $h = 2B + 1$ and $\cardinality{D_i} = 2 s(x_i)$ for all $i$, the partition $\set{U_i = \set{j \with d_j^{\mathrm{start}} \in \E(p_i)}}_{i = 1}^k$ satisfies $s(U_i) = B$ for all $i \in [k]$.
\end{proof}

A consequence of \Cref{thm:length-w-hard-on-p3} is that there is no $\XP$ algorithm for $\lenTEDSC$ parameterized by $n$ unless $\P = \NP$.
Furthermore, this holds even when the instances are restricted to simple graphs like paths or star graphs.

\begin{corollary}
    $\lenTEDSC$ parameterized by $n$ is $\paraNP$-hard, even if $G$ is a bidirected path or star.
\end{corollary}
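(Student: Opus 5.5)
The corollary claims $\paraNP$-hardness for parameter $n$, so it suffices to show that $\lenTEDSC$ is $\NP$-hard already for a constant number of vertices. I would take this directly from \cref{thm:length-w-hard-on-p3}. Its reduction starts from $\UnaryBinPacking$ and always outputs an instance on the bidirected path with $3$ vertices $u,v,w$. This path is also the bidirected star with center $v$ and two leaves, so it covers both graph classes in the statement at once. The construction runs in polynomial time because the item sizes are given in unary: the draft schedule has $k + 2s(U)$ demands, and $h = 2B+1$.

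The remaining step is to note that $\UnaryBinPacking$ is itself $\NP$-hard, not merely $\W[1]$-hard. I would argue this via strong $\NP$-hardness of $\BinPacking$, for instance through $3$-\textsc{Partition}, which stays $\NP$-hard when the numbers are encoded in unary. Combined with the polynomial-time reduction, this shows that $\lenTEDSC$ restricted to instances with $n = 3$ is $\NP$-hard. By the standard characterization, a parameterized problem that is $\NP$-hard for a single fixed parameter value is $\paraNP$-hard. This also excludes an $\XP$ algorithm in $n$ unless $\P = \NP$, which is the remark made just before the statement.

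If the statement is meant for bidirected paths and stars of every size rather than only $n = 3$, I would pad the instance. Attach extra leaves to $v$, or extend the path beyond $w$, and place no demands on the new edges. The task is then to check that these edges do not help: since $\cardinality D = k\cdot h$, every step of every walk must be a demand step, as in the proof of \cref{thm:length-w-hard-on-p3}, so walks can never use the new edges and the $(\Leftarrow)$ direction still holds.

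The only real obstacle is the citation step, namely making sure that the hardness of the source problem in unary encoding is $\NP$-hardness (strong $\NP$-hardness) and not only $\W[1]$-hardness. This is classical (Garey and Johnson), so I expect no genuine difficulty.
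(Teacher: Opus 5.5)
Your proof is correct and follows the paper's intended argument. The paper states the corollary as an immediate consequence of \cref{thm:length-w-hard-on-p3}: the reduction always outputs the bidirected $P_3$, which is both a path and a star with center $v$. It leaves implicit that $\UnaryBinPacking$ is NP-hard (strong NP-hardness of $\BinPacking$), a step you make explicit, and your padding to larger paths and stars is a valid extra that para-NP-hardness does not require.
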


\section{Exact Algorithms for Constrained TEDSC}
\label{sec:exact}
In this chapter, we present exact algorithms for constrained $\TEDSC$ that complement the hardness results of \cref{sec:hardness}.
First, we complete the landscape of parameterized complexity of constrained $\TEDSC$ with regard to both $\FPT$ and $\XP$ for all combinations of the parameters $k$, $h$, and $\cardinality D$.
Recall that we assume $k < \cardinality D \le k \cdot h$ by \Cref{thm:k-lt-D,thm:D-le-kh}.
Regarding membership in $\FPT$, we know from \cref{sec:hardness} that, unless $\FPT = \W[1]$, there is no $\FPT$ algorithm for constrained $\TEDSC$ parameterized by $h$ or $k + \cardinality D$.
Thus, the next smallest parameter to investigate for possible $\FPT$ algorithms is $k + h$.
We construct such an algorithm in \cref{sec:fpt-kh}.
Regarding membership in $\XP$, we showed $\paraNP$-hardness by $h$ in \cref{sec:sat-reduction}, which rules out $\XP$ algorithms unless $\P = \NP$.
For $k$ and $\cardinality D$, however, we showed only $\W[1]$-hardness, which does not provide any information about the existence of $\XP$ algorithms.
We resolve this in \cref{sec:k-xp} where we present an $\XP$ algorithm for constrained $\TEDSC$ parameterized by $k$.
As $k < \cardinality D$, this also works as an $\XP$ algorithm for the parameter $\cardinality D$.

Second, in \cref{sec:star-algorithm}, we return to studying limited graph classes.
We saw in \cref{thm:length-w-hard-on-p3} that $\lenTEDSC$ parameterized by $k$ is $\W[1]$-hard even when $G$ is a bidirected path of three vertices.
However, unlike our other reductions, this result does not translate to $\lifeTEDSC$, because it relies on the fact that in $\lenTEDSC$ walks can wait on a vertex indefinitely.
In \cref{sec:star-algorithm}, we show that these long waiting periods are indeed a source of complexity by proving the opposite statement of \cref{thm:length-w-hard-on-p3} for $\lifeTEDSC$.
Formally, we show that there is a polynomial time algorithm for $\lifeTEDSC$ on any \emph{fixed} bidirected star graph (which includes the bidirected $P_3$ from \cref{thm:length-w-hard-on-p3}).
We achieve this by limiting the search space explored by the $\XP$ algorithm from \cref{sec:k-xp}.

\subsection{FPT Algorithm using Bounded Length EDP}
\label{sec:fpt-kh}
By \Cref{thm:hard-by-D-planar-dag}, there is no $\FPT$ algorithm for $\lenTEDSC$ or $\lifeTEDSC$ parameterized by $k + \cardinality D$ unless $\FPT = \W[1]$.
As $\cardinality D \le k \cdot h$, the parameter $k + \cardinality D$ is bounded by a function of $k + h$.
Thus, after ruling out $\FPT$ algorithms for the parameters $k + \cardinality D$ and $h$, the natural next question is whether an $\FPT$ algorithm exists for the parameter $k + h$.
In this section, we answer this question positively by reducing to $\lenEDP$.

\begin{problem}[]{\lang{Bounded Length Edge Disjoint Paths} ($\lenEDP$)}
    \Input & Graph $G$, terminal pairs $\set{(s_i, t_i)}_{i = 1}^k$, length bound $h \in \N$. \\
    \Prob  & Are there edge disjoint paths $\set{p_i}_{i = 1}^k$ such that $p_i$ is an $s_i$-$t_i$-path and $\len(p_i) \le h$ for all $i \in [k]$?
\end{problem}

\Textcite{golovach2011paths} present an $\FPT$ algorithm for $\lenEDP$ parameterized by $k + h$.
This result is the basis of the algorithms presented in this section.

\begin{theorem}[\thmcite{golovach2011paths}]
    \label{thm:len-edp-fpt-kh}
    There is an algorithm that decides $\lenEDP$ in $2^{\O(k \cdot h)} \log n \cdot m \cdot k$ time.
\end{theorem}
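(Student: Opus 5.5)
The plan is to prove the statement with dynamic programming over the $k$ paths, using \emph{representative families} of edge sets in the uniform matroid on $E(G)$. This keeps the dependence on the parameters at $2^{\O(k\cdot h)}$ instead of the $2^{\O(kh\log(kh))}$ that naive colour coding would give.

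First I normalise the instance. A solution uses at most $kh$ edges in total, since each $p_i$ has at most $h$ edges. I would process the terminal pairs in the fixed order $1,\dots,k$. For every $i\in[k]$ I maintain a family $\mathcal F_i$ of edge sets $X\subseteq E(G)$ with $\cardinality X\le ih$. Each such $X$ is the union of pairwise edge disjoint $s_j$-$t_j$-paths of length at most $h$, one for each $j\le i$. Correctness only needs an invariant: $\mathcal F_i$ must $q_i$-represent the family of all such unions, where $q_i=(k-i)h$. Concretely, whenever some partial solution $X$ can be completed by an edge set $Y$ with $\cardinality Y\le q_i$ and $X\cap Y=\emptyset$, some $\widehat X\in\mathcal F_i$ must also satisfy $\widehat X\cap Y=\emptyset$. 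The instance is then a YES-instance if and only if $\mathcal F_k\ne\emptyset$.

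The transition from $\mathcal F_{i-1}$ to $\mathcal F_i$ extends each $X\in\mathcal F_{i-1}$ by a short $s_i$-$t_i$-path avoiding $X$. I cannot enumerate all such paths in FPT time. Instead I build the path edge by edge, for $\ell=0,\dots,h$ steps, keeping one family for each current endpoint $v$. The family for $v$ consists of sets $X\cup E(\text{partial walk})$, and it is pruned after every step to a $(q_{i-1}-\ell')$-representative subfamily, where $\ell'$ is the number of edges of the path added so far. I would use the efficient computation of representative sets for uniform matroids (Fomin, Lokshtanov, Panolan, Saurabh). For $p$-element sets and $q$-representation this yields at most $\binom{p+q}{p}\le 2^{p+q}=2^{\O(kh)}$ sets, in time roughly $2^{\O(kh)}$ times the input size. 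Summing over the $k$ phases, the $h$ steps per phase and the $m$ edges scanned per step gives $2^{\O(kh)}\cdot m\cdot k$ up to the overhead of the representative-set computation. That overhead comes from a derandomised construction via a family of perfect hash functions or separating collections of size $2^{\O(kh)}\log n$, and it accounts for the $\log n$ factor.

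The main obstacle is to prove that pruning to representative families after every single edge step preserves the invariant. This is a standard exchange argument. Fix an optimal completion $Y$. The portion of the current path still to be built, together with the remaining paths $p_{i+1},\dots,p_k$, is an edge set of size at most $q$ disjoint from the current partial set. Representation then guarantees a surviving set that avoids it as well, and concatenating the two gives a valid solution. Walks and paths need no separate treatment: if a walk repeats an edge or vertex, a shortcut yields a path of smaller length that is still disjoint. The one delicate point is to check that the bound $p+q\le kh$ holds uniformly, so that the binomial bound $2^{\O(kh)}$ applies at every stage.
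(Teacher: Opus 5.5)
The paper gives no proof of this statement; it imports it as a black box from Golovach and Thilikos. Your representative-family dynamic program is a legitimate self-contained alternative, and its core is sound.

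\begin{itemize}
\item The exchange argument is the standard one. At step $\ell'$ of phase $i$, some surviving set avoids the unbuilt part of $P_i$ together with $P_{i+1},\dots,P_k$. That is an edge set of size at most $q_{i-1}-\ell'$, so $p+q\le kh$ throughout.
\item This needs each step to add an edge not already in the current set. Then the partial objects are trails, which you may shortcut at the end.
\item Make one routine point explicit. Paths have length \emph{at most} $h$, so the family at an endpoint mixes sets of different cardinalities, while $q$-representation is defined for $p$-uniform families. Split by exact size (at most $kh+1$ classes) and prune each class separately.
\end{itemize}

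Two corrections concern the comparison with colour coding and your time bound.

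First, colour coding does not lose a $\log(kh)$ in the exponent. Colouring $E(G)$ with exactly $kh$ colours makes a fixed solution colourful with probability at least $(kh)!/(kh)^{kh}\ge e^{-kh}$. The dynamic program over (phase, used colour set, current vertex) costs $2^{\O(kh)}\cdot m$ per phase. Derandomising with an $(m,kh)$-perfect hash family of size $e^{kh}(kh)^{\O(\log kh)}\log m$ gives exactly $2^{\O(kh)}\log n\cdot m\cdot k$, matching the stated form. This is the simpler route to the bound as stated.

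Second, your accounting conflates two constructions.
\begin{itemize}
\item The size $\binom{p+q}{p}$ comes from the linear-algebraic computation, with the uniform matroid represented by a Vandermonde matrix over a field with at least $m$ elements.
\item The separating-collection construction of Fomin, Lokshtanov, Panolan and Saurabh instead returns families of size $\binom{p+q}{p}2^{o(p+q)}\log n$, in time $2^{\O(p+q)}\cdot t\log n$ for $t$ input sets.
\item If you prune after every edge step with the latter, each pruning receives $m\cdot 2^{\O(kh)}\log n$ sets. You then get $2^{\O(kh)}\,m\log^2 n$, not a single $\log n$.
\end{itemize}
This does not affect the FPT results the paper derives from the theorem. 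To match the bound as stated, either prune with the exact linear-algebraic method, whose extra cost is only field arithmetic, or fix one hash family for the whole run. The latter essentially turns your algorithm into colour coding.

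In exchange for the heavier machinery, representative families buy you a smaller base of the exponential and no colour bookkeeping.
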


On a high level, finding a set of paths satisfying a $\lenEDP$ instance is similar to finding a set of paths in the static expansion of the graph of a constrained $\TEDSC$ instance.
In practice, however, there are two major differences we need to address.
First, in $\lenEDP$ the paths need to always be edge disjoint, but in $\TEDSC$ we allow multiple paths to traverse the same \emph{waiting edge} in $\SE(G, D)$ (recall that we gave waiting edges infinite capacity in \cref{def:tedsc-flow-network}).
Second, in $\lenEDP$ each path has to connect a specific pair of terminals, whereas in $\TEDSC$ the paths need to jointly cover the edges that correspond to the demands in $D$.
We resolve these differences as follows:
\begin{enumerate}
    \item We define the \emph{parallelize-subdivide} ($\PS$) operation, which simulates giving waiting and compressed edges infinite capacity, similar to \cref{def:tedsc-flow-network}.
    \item
          We generate $\lenEDP$ instances on $\SE(G, D)$ in which every path describes how to get from one demand edge to the next.
          This means that one temporal walk corresponds to multiple terminal pairs in the $\lenEDP$ instance.
\end{enumerate}
For $\lifeTEDSC$, this works out nicely because the length of a path in $\SE(G, D)$ equals the lifespan of the corresponding temporal walk.
To make this algorithm work for $\lenTEDSC$, we need a slightly more complicated approach.
\begin{enumerate}
    \addtocounter{enumi}{2}
    \item We generalize the algorithm of \cref{thm:len-edp-fpt-kh} to allow different length bounds among the terminal pairs.
    \item We define the In-Out expansion $\IO(\G)$ as an alternative to $\SE(\G)$ that preserves the length of temporal walks rather than their lifespan.
\end{enumerate}

We begin by defining the $\PS$ operation.
The purpose of this operation is to modify the graph of a $\lenEDP$ instance such that it simulates a flow network where some edges have infinite capacity.
If we are looking for $k$ paths, then infinite capacity is equivalent to a capacity of $k$, which we model by adding $k$ parallel edges.
\begin{definition}[$\PS$]
    \label{def:ps}
    Let $G$ be a static graph, $k \in \N$, and $E_\infty$ a set of edges to which the operation should be applied.
    The graph $\PS(G, k, E_\infty)$ is obtained from $G$ as follows:
    Replace every edge $e \in E_\infty$ with $k$ parallel copies of $e$.
    Then, subdivide every edge in the graph, that is, replace each edge $(u, v)$ with a new vertex $x$ and two edges $(u, x)$ and $(x, v)$.
\end{definition}
Note that if $k > 1$, then the first step of $\PS$ turns the graph into a multi-graph.
In the second step, every edge is replaced with a new distinct vertex $x$, so the final graph does not contain multi-edges anymore.

\begin{lemma}
    \label{thm:ps-paths-exist-iff}
    Let $G$, $k$, and $E_\infty$ be defined as in \cref{def:ps} and let $\set{(s_i, t_i) \in V(G)^2}_{i = 1}^k$ be a set of terminal pairs.
    The following statements are equivalent:
    \begin{itemize}
        \item There exists a set of paths $\set{p_i}_{i = 1}^k$ in $G$ that are edge disjoint on $E(G) \setminus E_\infty$ and connect the terminal pairs.
        \item There exists a set of paths $\set{p_i'}_{i = 1}^k$ in $G' = \PS(G, k, E_\infty)$ that are edge disjoint on all of $E(G')$ and connect the terminal pairs.
    \end{itemize}
    Moreover, if these sets exist, then there is a correspondence between the paths such that $\len(p_i') = 2 \len(p_i)$ for all $i \in [k]$.
\end{lemma}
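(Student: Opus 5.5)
We prove both directions by an explicit translation between paths in $G$ and paths in $G' = \PS(G, k, E_\infty)$, tracking lengths along the way. First we fix notation for $G'$. Every edge $e = (u,v) \in E(G) \setminus E_\infty$ becomes a single subdivision vertex $x_e$ with edges $(u, x_e), (x_e, v)$. Every edge $e \in E_\infty$ becomes $k$ subdivision vertices $x_e^1, \dots, x_e^k$ with edges $(u, x_e^j), (x_e^j, v)$. Call the map sending each such length-$2$ path back to $e$ the \emph{projection}. Every walk in $G'$ between original vertices alternates original and subdivision vertices, since each subdivision vertex has exactly one in- and one out-edge. Hence it is a concatenation of such length-$2$ gadgets, and its projection is a walk in $G$ of exactly half the length. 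This already yields $\len(p_i') = 2\len(p_i)$ for corresponding paths. Projection also maps paths to paths, because the original vertices visited are the same sequence.

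For the direction from $G'$ to $G$, let $p_i'$ be edge-disjoint paths in $G'$ and let $p_i$ be their projections. Each edge $e \notin E_\infty$ corresponds to the unique gadget through $x_e$. If two projected paths both used $e$, both $p_i'$ would use the edge $(u, x_e)$, contradicting edge disjointness. So the $p_i$ are edge disjoint on $E(G) \setminus E_\infty$ and connect the same terminal pairs.

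For the direction from $G$ to $G'$, let $p_i$ be paths in $G$ that are edge disjoint on $E(G) \setminus E_\infty$. We lift each edge of $p_i$ to a gadget. An edge $e \notin E_\infty$ is used by at most one path, so it lifts to its unique gadget. For $e \in E_\infty$, path $p_i$ uses the copy $x_e^i$. Since each $p_i$ is a path, it uses $e$ at most once, so indexing by the path number gives distinct copies to distinct paths. Two lifted paths then share no edge of $G'$: any shared edge would lie in a gadget of some $e$, which is either a non-$E_\infty$ edge used by both (excluded) or an $E_\infty$ copy with the same index (excluded). Each lifted path is still a path, because its subdivision vertices are distinct (the edges of $p_i$ are distinct) and its original vertices are those of $p_i$. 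Its length is twice that of $p_i$.

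The only delicate point is the copy assignment for $E_\infty$ edges, which needs $k$ copies and the fact that each $p_i$ is a simple path. Using the path index $i$ as the copy index resolves this directly, so no matching argument is needed. If the terminal pairs were allowed to coincide ($s_i = t_i$), the length-$0$ paths are handled trivially in both directions.
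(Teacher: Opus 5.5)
Your proof is correct and follows the same approach as the paper. You lift each edge of $p_i$ through its subdivision vertex, using the $i$-th parallel copy for edges in $E_\infty$, and in the other direction you restrict paths in $G'$ to $V(G)$, relying on the alternation between original and subdivision vertices. Your version only adds detail the paper omits: why the lifted and projected walks remain paths, and why disjointness transfers in each direction.
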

\begin{proof}
    Given the paths $\set{p_i}_{i = 1}^k$, we construct each $p_i'$ such that for each edge $e$ of $p_i$, $p_i'$ follows the two edges created by subdividing $e$ in $\PS$.
    Specifically, the $i$-th path $p_i'$ should take the $i$-th of the $k$ copies of $e$.
    This establishes all desired properties.
    For the reverse, note that as the terminal pairs are vertices in $G$, every path $p_i$ alternates between edges to and from vertices created during the subdivision step of $\PS$.
    The paths $p_i$ can be constructed by following the same sequence of vertices as $p_i'$, but limited to $V(G)$.
    As the paths were edge disjoint in $G'$, they are edge disjoint on all edges of $G$ except for $E_\infty$.
    The other properties hold by definition.
\end{proof}

This resolves the first of the two major differences between $\lenEDP$ and $\lifeTEDSC$.
For the second, observe the following:
When searching paths in $\SE(G, D)$ that cover all of $D$, we already know that for each $(u, v, t) \in D$ the edge $(u_t, v_{t + 1})$ is taken by some path.
To ensure this in a $\lenEDP$ instance, we remove this edge from the input graph and add it back once we know how to route the paths from one demand edge to the next.
The only thing we need to ensure in the $\lenEDP$ instance is that some path ends on the vertex $u_t$ and some path starts on $v_{t + 1}$ so that we can merge these two paths when adding the edge $(u_t, v_{t + 1})$.
Intuitively, given the paths of a $\lifeTEDSC$ solution in $\SE(G, D)$, removing the demand edges splits each path $p$ into a set of smaller paths that connect the endpoints of successive demands covered by $p$.
If we fix for each demand which of the $k$ paths should cover it, we can recover the routing between the demands by solving the corresponding $\lenEDP$ instance.
This process can be seen as a parameterized Turing reduction~\cite{flum2006parameterized}, and we fully define it in \cref{alg:life-fpt-kh}.
We prove that it works correctly in \cref{thm:life-fpt-kh}.
\begin{algorithm}
    \SetupAlgoInputOutput
    \AlgoInput{$\lifeTEDSC$ instance $\mathcal I = (G, D, k, h)$}
    \AlgoOutput{Decision}

    \ForEach{$A\colon D \to [k]$}{
        denote $D_i = \set{d \in D \with A(d) = i}$ for all $i \in [k]$ \;
        \If{$\exists i \in [k], t^* \in [\lifetime]: \cardinality{\set{(u, v, t) \in D_i \with t = t^*}} > 1$}{
            \Continue \tcp*[f]{$i$ serves multiple demands at time step $t^*$}
        }
        \If(\label{line:fpt-kh-check-lifespan}){$\exists i \in [k]: \max\set{t + 1 \with (u, v, t) \in D_i} - \min\set{t \with (u, v, t) \in D_i} > h$}{
            \Continue \tcp*[f]{$i$ lives too long}
        }
        $H \gets \SE_\gapsize(G, D)$ with $\gapsize = (n-1) \cdot k$ (see \cref{def:compressed-tedsc-network}) \;
        \ForEach{$(u, v, t) \in D$}{
            \tcp*[l]{remove demand edges from $H$}
            $H \gets H - \set{(u_t, v_{t + 1})}$ \;
        }
        $H' \gets \PS(H, k, E(H) \setminus E_{\mathrm{move}})$ \;

        \ForEach{$i \in [k]$}{
            $\mathcal S_i \gets \text{empty multiset}$ \;
            enumerate $D_i = \set{(u^1, v^1, t^1), (u^2, v^2, t^2), \dots}$ such that $\forall j: t^j < t^{j + 1}$ \;
            \ForEach{$j \in [\cardinality{D_i} - 1]$}{
                \tcp*[l]{add terminal pairs between demands}
                $\mathcal S_i \gets \mathcal S_i + \set*{\left( v^j_{t^j + 1}, u^{j + 1}_{t^{j + 1}} \right)}$ \;
            }
        }
        $\mathcal S \gets \bigcup_{i = 1}^k \mathcal S_i$ \;
        \If{the $\lenEDP$ instance $(H', \mathcal S, 2h)$ is feasible}{
            \Return YES \;
        }
    }
    \Return NO \;
    \caption{Parameterized Turing Reduction from $\lifeTEDSC$ to $\lenEDP$}
    \label{alg:life-fpt-kh}
\end{algorithm}

An example of the graph $H$ in \cref{alg:life-fpt-kh} and the corresponding $\lifeTEDSC$ and $\lenEDP$ solutions is illustrated in \cref{fig:guessing-graph}.
\begin{figure*}[t]
    \centering
    \begin{subfigure}{0.45\textwidth}
        \centering
        \includegraphics[height=8cm, page = 1]{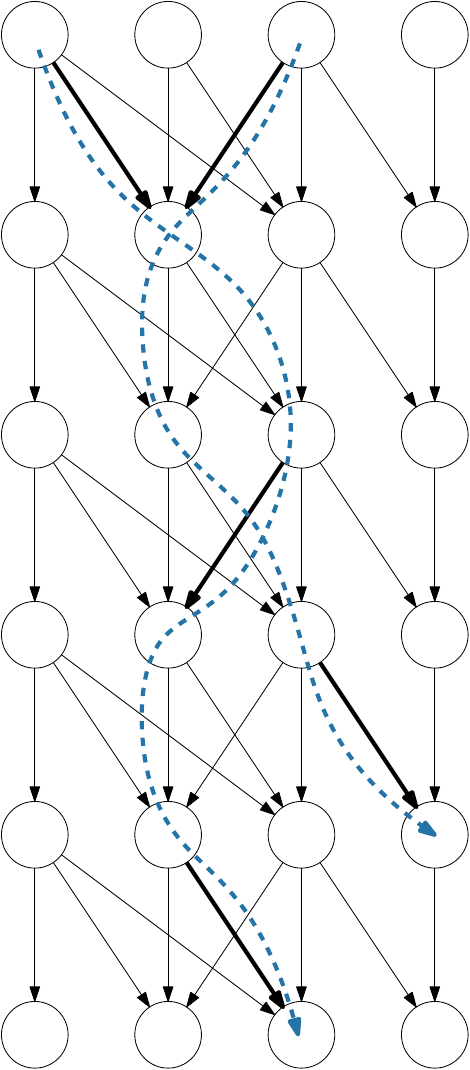}\\[1 ex]
        \caption{Static expansion $\SE(G, D)$ before removing demand edges. Solid, thick lines represent demands. Dashed, blue lines represent a possible $\lifeTEDSC$ solution.}
    \end{subfigure}%
    \qquad
    \begin{subfigure}{0.45\textwidth}
        \centering
        \includegraphics[height=8cm, page = 2]{assets/guessing-reduction.pdf}\\[1 ex]
        \caption{Graph $H$ after removing demand edges. Dashed, blue lines represent a possible $\lenEDP$ solution compatible with the assignment $A$ induced by the schedule.}
    \end{subfigure}%
    \caption{Example instance of \cref{alg:life-fpt-kh} with solutions.}
    \label{fig:guessing-graph}
\end{figure*}

\begin{theorem}
    \label{thm:life-fpt-kh}
    $\lifeTEDSC$ parameterized by $k + h$ is in $\FPT$.
\end{theorem}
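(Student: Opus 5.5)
We show that \cref{alg:life-fpt-kh} decides $\lifeTEDSC$ correctly, and then bound its running time by a function of $k+h$ times a polynomial.

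\textbf{Running time.} By \cref{thm:D-le-kh} we may assume $\cardinality D \le k h$. Hence there are at most $k^{\cardinality D} \le k^{kh}$ assignments $A$. Each iteration builds $H$, which has polynomial size by \cref{thm:tedsc-poly-n-d}, and $H'$, which is larger by a factor $\O(k)$. It then issues one $\lenEDP$ query with at most $\cardinality D \le kh$ terminal pairs and length bound $2h$. \Cref{thm:len-edp-fpt-kh} answers this query in $2^{\O(kh\cdot h)}\cdot\poly$ time. The total is therefore $k^{kh}2^{\O(kh^2)}\poly(n,\cardinality D)$, which is FPT.

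\textbf{Completeness.} Let $S$ be a feasible schedule. First we normalise it.
\begin{itemize}
\item Any part of a walk before its first demand or after its last demand can be removed; this keeps the schedule feasible and does not increase lifespans.
\item If some walk covers no demand, we may drop it.
\item So without loss of generality every walk begins with a demand and ends with a demand, and there are exactly $k$ walks (we may allow empty $D_i$, which only helps).
\end{itemize}
Let $A$ map each demand to the walk covering it. This $A$ passes both checks:
\begin{itemize}
\item A strict walk covers at most one demand per time step.
\item The lifespan of walk $i$ equals its last demand time plus one minus its first demand time, and this is at most $h$.
\end{itemize}
Between consecutive demands $d^j,d^{j+1}\in D_i$, the walk $p_i$ yields a path in $\SE(G,D)$ from $v^j_{t^j+1}$ to $u^{j+1}_{t^{j+1}}$. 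Distinct segments use distinct non-demand moving edges, by temporal edge disjointness and because the segments of one walk are time-disjoint. Segments may share waiting edges.

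Inside a long gap, we reroute through the compressed edge; this uses only a $(0,\infty)$ edge. Note that a segment crossing a gap enters the layer $t_1+1$ and leaves at $t_2$, so this reroute is well defined. The segment lengths sum to at most $h$ minus the number of demands. The compressed edge has length $1$, which is at most the original gap length, so lengths do not increase.

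Now apply \cref{thm:ps-paths-exist-iff} with $E_\infty$ equal to the waiting and compressed edges. Its capacity $k$ suffices as long as at most $k$ segments share an edge. This holds because segments sharing an edge at a fixed time step belong to distinct walks. (The lemma is stated for $k$ pairs; with more pairs we use the same argument with $k$ copies, since only $k$ walks are alive at once.) Each segment then has length at most $2h$ in $H'$.

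\textbf{Soundness.} Take $\lenEDP$ paths in $H'$ and map them back to $H$ by \cref{thm:ps-paths-exist-iff}. For each $i$, concatenate the segments of $\mathcal S_i$ with the removed demand edges. This gives an $s$-$z$-style path in $\SE_\gapsize$, and it is the walk for index $i$.
\begin{itemize}
\item \emph{Disjointness on moving edges.} Distinct segments are edge disjoint on moving edges, and demand edges are used exactly once each.
\item \emph{Expanding gaps.} Expand each compressed-edge use inside a gap via \cref{thm:tedw-n-times-k}. The gap length is at least $(n-1)k$ and at most $k$ walks cross it, so the expansion is TED.
\item \emph{Lifespan.} The resulting walk starts at its first demand and ends after its last demand, so its lifespan is exactly what the check on line~\ref{line:fpt-kh-check-lifespan} bounds, namely at most $h$.
\end{itemize}
Thus the length bound $2h$ is not actually needed for lifespan; lifespan is governed by the demand times alone.

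\textbf{Main obstacle.} The delicate step is making the gap-compression and capacity-$k$ arguments line up with multiple segments per walk. Specifically, we must show that at most $k$ segments ever share one waiting or compressed edge, which holds because each walk has at most one active segment at any time. We must also check that expanding compressed edges preserves TED among walks crossing the same gap while other segments continue around it. The expansion is applied per gap to the at most $k$ paths through it, and \cref{thm:tedw-n-times-k} is exactly suited to this.
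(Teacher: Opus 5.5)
Your proposal is correct and follows the paper's proof essentially step for step. It uses the same guessing of $A\colon D \to [k]$ in \cref{alg:life-fpt-kh}, splits each walk into inter-demand segments for completeness, and reassembles the walks with per-gap expansion via \cref{thm:tedw-n-times-k} for soundness, with the FPT bound coming from $\cardinality{D} \le kh$ and \cref{thm:len-edp-fpt-kh}. You are also more careful than the paper on two points it glosses over: you justify that $k$ parallel copies in $\PS$ suffice even though $\cardinality{\mathcal S}$ may exceed $k$ (segments of one walk are layer-disjoint), and you explicitly reroute segments through the compressed gap edges of $H$ in the completeness direction.
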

\begin{proof}
    See \Cref{alg:life-fpt-kh}.
    Recall that we assume that $\cardinality D \le k \cdot h$ by \cref{thm:D-le-kh}.
    This algorithm terminates in $\FPT$ time because there are $k^{\cardinality D}$ iterations and, as $\cardinality{\mathcal S} \le \cardinality D$, each iteration takes $\FPT$ time by using \cref{thm:len-edp-fpt-kh}.

    Suppose that the algorithm outputs YES in some iteration, and let $A$ be the assignment of this iteration.
    A feasible schedule $S$ can be composed as follows:
    Translate the $\lenEDP$ solution in $H'$ to a set $P$ of paths in $H$ using \cref{thm:ps-paths-exist-iff}.
    Next, expand the edges of the paths in $P$ that are compressed gaps in $\SE_\gapsize(G, D)$ to paths in $\SE(G, D)$ using \cref{thm:tedw-n-times-k}.
    Notably, although the size of $P$ may exceed $k$, every gap is crossed by at most $k$ paths because for every $i \in [k]$ there is at most one terminal pair in $\mathcal S_i$ that connects vertices before and after the gap.
    Thus, the gap size of $\gapsize = (n - 1) \cdot k$ still suffices to apply \cref{thm:tedw-n-times-k} to each gap individually.
    Let $P'$ be the set of paths obtained after expanding edges corresponding to gaps.
    For each $i \in [k]$, construct the temporal walk $p_i \in S$ by concatenating the temporal edges of $D_i$ and the temporal walks corresponding to the paths of $P'$ that connect $\mathcal S_i$.
    The schedule $S$ covers all demands because $D = \bigcup_{i = 1}^k D_i$ and no temporal walk violates the lifespan constraint as we check in Line~\ref*{line:fpt-kh-check-lifespan}.
    The walks of $S$ are TED because each demand edge of $\SE(G, D)$ is used by exactly one path, every moving edge is used by at most one path by \cref{thm:io-paths-iff-compressed}, and the expansion of edges over gaps using \cref{thm:tedw-n-times-k} gives TED temporal walks.

    For the reverse, let $S = \set{p_i}_{i = 1}^k$ be any feasible schedule.
    For every demand $d \in D$, set $A(d) = i$ for the walk $p_i$ which has $d \in \E(p_i)$.
    As every temporal walk contains at most one temporal edge at each time step and as $\lifespan(p_i) \le h$ for all $i \in [k]$, this assignment $A$ passes the two checks at the beginning of the iteration.
    Let $S'$ be the set of temporal walks obtained by splitting each $p \in S$ into the segments between the demands $D_i$ and let $P$ be the set of corresponding paths in $\SE(G, D)$.
    All $p \in P$ satisfy $\len(p) \le h$ because the lifespan of a segment in $S'$ does not exceed the lifespan of the corresponding full walk in $S$, and every walk in $S$ has a lifespan of at most $h$.
    By definition, the paths $P$ connect the terminal pairs $\mathcal S$ and are edge disjoint on $E_{\mathrm{move}}$.
    Thus, mapping $P$ to $H'$ using \cref{thm:ps-paths-exist-iff} gives a set of paths of length at most $2h$ that connect the terminal pairs $\mathcal S$ and are edge disjoint, so the $\lenEDP$ instance is feasible.
\end{proof}

Observe that we did not meaningfully use the length bound of $\lenEDP$ in \cref{alg:life-fpt-kh}.
All paths between vertices of layers $t_1$ and $t_2$ in $\SE(G, D)$ have length $t_2 - t_1$, so finding edge disjoint paths of length at most $h$ is no harder than finding edge disjoint paths of arbitrary length.
However, $\EDP$ parameterized by $k$ is $\W[1]$-hard (even on planar DAGs~\cite{chitnis2023tight}), whereas $\lenEDP$ parameterized by $k + h$ is in $\FPT$.
Generalizing from $\EDP$ to $\lenEDP$ does not complicate the constructed instance, but it allows us to include additional knowledge about the sought paths.

For $\lenTEDSC$, the situation is more complicated.
A temporal walk may wait on a vertex for multiple time steps, which does not count toward its length.
When splitting a temporal walk into multiple terminal pairs that connect successive demands, a segment may span more than $h$ layers while using only a few moving edges.
Instead of limiting the entire walk's lifespan, we need to ensure that the sum of the lengths of the segments does not exceed $h$.
Thus, we need to generalize the $\lenEDP$ problem so that we can specify a length bound for every terminal pair individually.
We refer to this problem as \emph{heterogeneous $\lenEDP$}.
\begin{lemma}
    \label{thm:het-len-edp-fpt}
    There is an algorithm that, given a directed graph $G$ and $k$ terminal pairs with length bounds $\set{(s_i, z_i, h_i)}_{i \in [k]}$ with $h_i \le h$, decides in time $2^{\O(k \cdot h)} \log(n + k \cdot h) (m + k \cdot h) k$ whether there are edge disjoint paths $\set{p_i}_{i \in [k]}$ such that $p_i$ is an $s_i$-$z_i$ path and $\ell(p_i) \le h_i$ for all $i \in [k]$.
\end{lemma}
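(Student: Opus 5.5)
We reduce heterogeneous $\lenEDP$ to ordinary (homogeneous) $\lenEDP$ and then invoke \cref{thm:len-edp-fpt-kh}. The idea is to pad each terminal pair so that its slack becomes uniform. For every $i \in [k]$, we attach to $s_i$ a fresh directed path $q_i$ of length $h - h_i$. This path ends in $s_i$ and starts in a new vertex $s_i'$, and all its internal vertices are new and private to pair $i$. We then ask for edge disjoint paths connecting the pairs $(s_i', z_i)$ with the uniform length bound $h$ in the resulting graph $G'$.

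The equivalence is the main step.
\begin{itemize}
    \item ($\Rightarrow$) Given heterogeneous paths $p_i$, prepend $q_i$. The length becomes $\ell(p_i) + h - h_i \le h$. Disjointness is preserved because the paths $q_i$ are pairwise disjoint and share no edge with $G$.
    \item ($\Leftarrow$) Let $p_i'$ be an $s_i'$-$z_i$ path in $G'$. The vertex $s_i'$ has out-degree $1$, every internal vertex of $q_i$ has in- and out-degree $1$, and no edge of $G$ enters $q_i$ except at its endpoint $s_i$. Hence $p_i'$ must traverse all of $q_i$ first, and its remainder is an $s_i$-$z_i$ path $p_i$ in $G$ of length at most $h - (h - h_i) = h_i$. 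One small point needs care: the remainder cannot re-enter $q_i$ or $q_j$, since those paths have no incoming edges from $G$. Edge disjointness transfers directly.
\end{itemize}

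For the running time, $G'$ has at most $n + k h$ vertices and $m + k h$ edges, and it still has $k$ pairs with bound $h$. Plugging these sizes into \cref{thm:len-edp-fpt-kh} gives $2^{\O(k h)} \log(n + k h)(m + k h) k$, exactly the claimed bound.

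The step I expect to need the most attention is the case $s_i = s_j$ for $i \ne j$, or terminals coinciding in general. Using separate padding paths per pair, rather than per vertex, handles this cleanly. Directedness is also essential, since it forbids entering $q_i$ from the middle. Undirected graphs are not required here, because the statement assumes a directed $G$.
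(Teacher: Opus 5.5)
Your proof is correct and is essentially the paper's argument. The only difference is orientation: the paper attaches the private padding path of $h - h_i$ new vertices at the sink, from $z_i$ to a new terminal $z_i'$, and asks for $(s_i, z_i')$-paths, while you prepend it at the source via a new terminal $s_i'$. The equivalence and the size bound of at most $n + k h$ vertices and $m + k h$ edges go through identically in both versions.
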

\begin{proof}
    We reduce this problem to the homogeneous $\lenEDP$ where all length bounds equal $h$.
    For each terminal pair $(s_i, z_i, h_i)$, we extend $G$ by a path of $h - h_i$ new vertices such that one end of this path has an edge to $z_i$ and the other end is the new vertex $z_i'$.
    The new set of terminal pairs is $\set{(s_i, z_i')}_{i = 1}^k$.
    Observe that a $(s_i, z_i)$-path of length at most $h_i$ in the original graph corresponds to a $(s_i, z_i')$-path with length at most $h$ in the new graph.
    Furthermore, as all newly introduced vertices are used by only one path in the solution, the edge disjointness of the original paths is equivalent to that of the new paths.
    The number of terminal pairs $k$ and the bound $h$ are unchanged, and the size of the graph increases by at most $k \cdot h$ vertices and edges.
    Therefore, we can decide the constructed $\lenEDP$ instance using \cref{thm:len-edp-fpt-kh} in the desired time.
\end{proof}

Next, we take care of the fact that in $\lenTEDSC$ the waiting edges do not contribute to the length of the temporal path, yet they are edges in the static expansion and thus contribute to the length of the static paths there.
To reduce $\lenTEDSC$ to $\lenEDP$, we need to create a static graph in which every temporal edge of a walk contributes only some constant length to the corresponding path, regardless of how long it waited before and after taking this temporal edge.
Our construction is similar to that of \textcite{wu2014path} with the difference that we connect the in-vertices to \emph{all} subsequent out-vertices, not just the next one.

\begin{definition}[In-Out expansion]
    \label{def:inout-expansion}
    Given a temporal graph $\G = \langle G, \lambda \rangle$, we define its In-Out expansion $\IO(\G)$ to be the static graph with vertices
    \begin{align*}
        V_t^\vin          & = \set{v_t^\vin \with v \in V(G)},                              \\
        V_t^\vout         & = \set{v_t^\vout \with v \in V(G)},                             \\
        V(\IO(\G))        & = \bigcup_{t \in [\lifetime(\G) + 1]} V_t^\vin \cup V_t^\vout,  \\
        \intertext{and edges}
        E_{\mathrm{wait}} & = \set{(v_t^\vin, v_{t'}^\vout) \with v \in V(G), t \le t'},    \\
        E_{\mathrm{move}} & = \set{(u_t^\vout, v_{t + 1}^\vin) \with (u, v, t) \in \E(\G)}, \\
        E(\IO(\G))        & = E_{\mathrm{wait}} \cup E_{\mathrm{move}}. \qedhere
    \end{align*}
\end{definition}
\begin{figure*}[t]
    \centering
    \includegraphics[height=9cm, page=3]{assets/io-expansion.pdf}\\[1 ex]
    \caption{
        In-Out expansion $\IO(\G)$ of a temporal graph $\G$.
        Black, axis-aligned lines represent waiting edges.
        Blue, curved lines represent moving edges.
    }
    \label{fig:io-expansion}
\end{figure*}
\Cref{fig:io-expansion} shows an example of \cref{def:inout-expansion}.
Observe that a temporal $u$-$v$-walk $p$ from time step $t_1$ to time step $t_2$ corresponds to a $u_{t_1}^\vout$-$v_{t_2}^\vout$-path $p'$ in $\IO(\G)$.
It holds that $\len(p') = 2 \len(p)$.
Temporal walks covering the demands $D$ are equivalent to paths in $\IO(\G)$ that cover the corresponding moving edges.

Using \cref{thm:tedw-n-times-k} we can compress gaps in $\IO(\G(G, D))$ of size at least $(n-1) \cdot k$ while maintaining feasibility of the instance.
However, in contrast to \cref{def:compressed-tedsc-network}, we cannot simply replace the layers of a gap with dense edges, because we want to preserve the exact distances between vertices across multiple layers.
Instead, we keep $\gapsize$ many layers for the walks to reposition in a gap, and then remove the remaining layers of the gap.
As $\IO(\G)$ already contains waiting edges until all later time steps, we do not have to insert new edges in the compressed expansion.
\begin{definition}
    Given a temporal graph $\G$ and parameter $\gapsize \in \N^+$, the $\gapsize$-compressed In-Out expansion $\IO_\gapsize(\G)$ of $\G$ is defined as the induced subgraph of $\IO(\G)$ containing the vertices $v_t^\vin, v_t^\vout \in V(\IO(\G))$ for which there exists a relevant time step $t^*$ with $t \in [t^*, t^* + 1 + \gapsize]$.
\end{definition}
Similar to \cref{thm:tedsc-poly-n-d}, for suitably large gap sizes $\gapsize$, compression does not change the feasibility of instances.
\begin{lemma}
    \label{thm:io-paths-iff-compressed}
    Let $(G, D, k, h)$ be a $\lenTEDSC$ instance and let $\gapsize \ge (n-1) \cdot k$.
    There are $k$ paths in $\IO(G, D)$ of length at most $2h$ that cover $D$ and are edge disjoint on $E_{\mathrm{move}}$ if and only if there are such paths in $\IO_\gapsize(G, D)$.
\end{lemma}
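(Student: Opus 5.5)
The backward direction is immediate. $\IO_\gapsize(G, D)$ is an induced subgraph of $\IO(G, D)$ that keeps every relevant layer, so it contains every demand moving edge. Paths in $\IO_\gapsize(G, D)$ are therefore paths in $\IO(G, D)$ with the same length. They still cover $D$ and are still edge disjoint on $E_{\mathrm{move}}$.

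For the forward direction, I would start from $k$ paths $P = \set{p_1, \dots, p_k}$ in $\IO(G, D)$ that cover $D$, have length at most $2h$, and are edge disjoint on $E_{\mathrm{move}}$. I would then reroute them one gap at a time. Fix a maximal gap between consecutive relevant time steps $t_1 < t_2$ with $t_2 - t_1 - 1 > \gapsize + 1$. The removed layers are then those with indices in $(t_1 + 1 + \gapsize, t_2)$, and the kept layers of this gap form the window $[t_1+1, t_1+1+\gapsize]$.

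For each path $p_i$ that has a vertex in the removed range, consider its maximal subpath lying strictly between layer $t_1$ and layer $t_2$. It enters this window at some $a_i^{\vin}$ in layer $\le t_1+1$; if it has just taken a move edge from layer $t_1$, this is $(a_i)_{t_1+1}^\vin$. It leaves at some $b_i^{\vout}$ in layer $\le t_2$ right before its next move edge at layer $\ge t_2$. If it waits through the gap instead, the subpath is a single waiting edge and nothing has to change.

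The main step is to replace each such segment. In the original path, the moves inside the gap form a static walk from $a_i$ to $b_i$ in $G$. I would shortcut it to a shortest $a_i$-$b_i$ path $q_i$ in $G$. This has length at most $n-1$, and no more than the original number of moves, so the total length $\len(p_i)$, which counts $2$ per move, does not increase.

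By \cref{thm:tedw-n-times-k}, applied with $F = E(G)$ (every edge is available at all irrelevant steps), the at most $k$ crossing paths can be scheduled one after another. Path $i$ starts at time $t_1 + 1 + (i-1)(n-1)$, which fits in the window of $\gapsize = (n-1)k \ge$ the required span. It then waits, via a waiting edge $v^\vin_t \to v^\vout_{t'}$ that skips arbitrarily many layers, directly to $(b_i)^\vout_{t_2}$, or to the out-vertex at which its next move occurs. Since move edges inside a gap cover no demands, the new moves are TED among themselves. They also cannot collide with moves outside the window.

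Paths lying entirely inside a removed region cover no demand and can be discarded, or replaced by trivial paths, since only covering matters. Coverage is preserved because demand edges lie on relevant layers, which are untouched.

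The main obstacle I expect is the bookkeeping at the gap boundaries. First, I must make sure that the entry vertex of the rerouted segment exists in the kept window; this holds because the layer-$(t_1+1)$ in-vertex is kept. Second, I must check that the waiting edge used to skip the removed layers is present in the induced subgraph, which holds because waiting edges of $\IO$ connect $t \le t'$ arbitrarily. Third, I must handle paths whose first or last vertex lies inside a gap. For those I would simply truncate the path to start or end at the nearest kept vertex, which can only shorten it.
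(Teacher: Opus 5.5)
Your approach is the paper's: shortcut the in-gap part of a walk to a shortest path, pack the shortcuts one after another into the first $\gapsize$ steps after $t_1$ via \cref{thm:tedw-n-times-k}, and bridge the removed layers with one long waiting edge. However, the disjointness step fails as you set it up. You reroute only the paths that have a vertex in the removed layers. You then check only that the new moves are disjoint from each other and from moves outside the window. Paths you leave untouched can still use moving edges \emph{inside} the kept window $[t_1+1, t_1+1+\gapsize]$. Such a path may move there and then stop. It may also move there and then hop over the removed layers with a single waiting edge $v^{\vin}_{t} \to v^{\vout}_{t'}$ with $t' \ge t_2$, so it never has a vertex in the removed range. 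Those moves occupy exactly the time slots into which you pack the shortcuts.

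For instance, let $G$ have edges $(a,b),(b,a),(c,b)$, let $D = \set{(a,b,t_1),(c,b,t_1),(a,b,t_2)}$ with $t_2 - t_1$ large, and let $k=2$ and $h$ be large. Let $p_1$ cover $(a,b,t_1)$, wait at $b$ into the removed layers, move to $a$ there, and cover $(a,b,t_2)$. Let $p_2$ cover $(c,b,t_1)$ and then bounce between $b$ and $a$ until the end of the window. These two paths cover $D$ and are edge disjoint on $E_{\mathrm{move}}$. Only $p_1$ is rerouted. Its shortcut is the single move $(b,a,s)$ at a slot start $s \in \set{t_1+1, t_1+3, \dots}$, and $p_2$ already uses $(b,a,s)$ for every such $s$.

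Your boundary rules have the same defect. Cutting a path that ends in a gap back to the ``nearest kept vertex'' keeps its window moves. A path whose first vertex lies inside the window but which continues through the removed layers cannot be forced to begin its shortcut at a prescribed slot.

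The repair is what the paper does: for each long gap, handle all $k$ paths simultaneously.
\begin{enumerate}
\item Moves at irrelevant time steps cover no demands, so first trim every path that starts or ends inside the gap until it has no moves there at all.
\item For every path crossing the gap, take its position $s_i$ at the start and $z_i$ at the end of the gap.
\item Replace the entire in-gap portion of every crossing path by the shortest paths from \cref{thm:tedw-n-times-k}.
\end{enumerate}
Afterwards, the only moves of any path at time steps of this gap are the sequentially scheduled ones, so disjointness on $E_{\mathrm{move}}$ holds. Your arguments for the length bound and for the presence of the entry vertices and long waiting edges in $\IO_\gapsize(G, D)$ then go through unchanged.
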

\begin{proof}
    The reverse direction holds by definition.
    For the other, let $\set{p_i}_{i = 1}^k$ be the set of paths in $\IO(G, D)$.
    For every gap $[t_1, t_2 - 1]$ of size at least $\gapsize$, we replace the edges of each path $p_i$ during the gap with a walk that uses vertices that are also present in $\IO_\gapsize(G, D)$.
    Let $s_i$ and $z_i$ be the vertices that $p_i$ is on at time steps $t_1$ and $t_2$, respectively.
    \Cref{thm:tedw-n-times-k} with $F = E(G)$ gives a set $S$ of shortest, TED temporal paths connecting $\set{(s_i, z_i)}_{i = 1}^k$.
    Mapping the paths in $S$ to $\IO(G, D)$ and using them to replace the subpath of each $p_i$ in the gap, we get a new set of paths that covers $D$.
    As $S$ contains only shortest paths, the length of each $p_i$ does not increase.
    Moreover, as the paths in $S$ take at most $\gapsize$ time steps, all $\set{p_i}_{i = 1}^k$ now use vertices from $\IO_\gapsize(G, D)$ during the gap $[t_1, t_2 - 1]$.
    As the paths of $S$ are TED, the paths in $\IO_\gapsize(G, D)$ remain edge disjoint on $E_{\mathrm{move}}$.
    Repeating this replacement for all gaps yields the desired set of paths.
\end{proof}

\begin{theorem}
    \label{thm:len-fpt-kh}
    $\lenTEDSC$ parameterized by $k + h$ is in $\FPT$.
\end{theorem}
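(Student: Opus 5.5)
We adapt \cref{alg:life-fpt-kh}, replacing the static expansion by the compressed In-Out expansion and the homogeneous length bound by per-pair bounds.
Throughout we assume $\cardinality D \le k \cdot h$ by \cref{thm:D-le-kh}.

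\textbf{Enumeration.} We enumerate all assignments $A\colon D \to [k]$, giving $k^{\cardinality D} \le k^{kh}$ candidates. We discard $A$ whenever some class $D_i$ contains two demands at the same time step, or $\cardinality{D_i} > h$, since the demand edges alone already contribute $\cardinality{D_i}$ to $\len(p_i)$. For each $i$ we additionally enumerate a split of the remaining length budget $h - \cardinality{D_i}$ among the $\cardinality{D_i} - 1$ gaps between consecutive demands of $D_i$. These are compositions of a number at most $h$ into at most $h$ parts, so there are $2^{\O(h)}$ per walk and $2^{\O(kh)}$ in total, which is still FPT.

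\textbf{The $\lenEDP$ instance.} We build $H = \IO_\gapsize(G, D)$ with $\gapsize = (n-1) \cdot k$ and remove the moving edges $(u_t^\vout, v_{t+1}^\vin)$ of all demands. For every consecutive pair $(u^j, v^j, t^j), (u^{j+1}, v^{j+1}, t^{j+1})$ in $D_i$ we add the terminal pair $(v^j{}^\vin_{t^j+1}, u^{j+1}{}^\vout_{t^{j+1}})$. Its length bound is $2 b_{i,j} + 1$, where $b_{i,j}$ is the guessed number of moving steps: each move costs $2$ in $\IO$, and one extra waiting edge is needed from an in-vertex to an out-vertex. The exact constant is a bookkeeping detail to check.

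Waiting edges may be shared by many walks, so we apply $\PS(H, k, E_{\mathrm{wait}})$, which doubles all bounds. Only $k$ parallel copies are needed, because the pairs belonging to one walk $i$ are traversed in disjoint time windows. This means a given waiting edge of $H$ (which goes from layer $t$ to layer $t' \ge t$) can lie on at most one segment per walk, though this should be checked carefully since waiting edges span intervals. We then solve the resulting heterogeneous instance with \cref{thm:het-len-edp-fpt}, using at most $\cardinality D \le kh$ pairs and bounds $\O(h)$. This yields running time $2^{\O(kh \cdot h)}\,\poly(n, \cardinality D)$, which is FPT in $k + h$.

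\textbf{Correctness.} For soundness, \cref{thm:ps-paths-exist-iff} turns the solution paths into paths in $H$ that are disjoint on moving edges. Inserting the demand edges concatenates the segments of walk $i$ into one temporal walk, of length at most $\cardinality{D_i} + \sum_j b_{i,j} \le h$. Temporal edge disjointness follows from disjointness on $E_{\mathrm{move}}$, together with the fact that each demand edge is used by exactly one walk.

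For completeness, take a feasible schedule and apply \cref{thm:io-paths-iff-compressed} to move it into $\IO_\gapsize$ without increasing lengths. Reading off the assignment $A$ and the per-segment move counts gives exactly one of the enumerated guesses. Splitting each walk at its demands then gives segment paths meeting the bounds, and mapping them through $\PS$ gives a feasible $\lenEDP$ solution.

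\textbf{Main obstacle.} The delicate part is the interaction of compression with the segmentation. Within a gap, \cref{thm:io-paths-iff-compressed} reroutes whole walks and never increases length. However, a walk's segment crossing a gap must still fit its guessed budget $b_{i,j}$, and the rerouting replaces subpaths by shortest paths, so it never increases any segment's move count. I would verify this first. I would also handle segments that start or end inside the retained part of a gap, by checking that the in/out vertices of all relevant layers survive in $\IO_\gapsize$.
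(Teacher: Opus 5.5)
Your proof is correct and follows essentially the same route as the paper. You enumerate $A\colon D \to [k]$ together with per-segment move budgets (the paper encodes these as remaining-capacity labels $A'\colon D \to [h]$, you as compositions of $h - \cardinality{D_i}$), take $\IO_\gapsize(G, D)$ without the demand edges, apply $\PS$ to the waiting edges, and solve the resulting heterogeneous $\lenEDP$ instance with \cref{thm:het-len-edp-fpt}. Your argument that $k$ parallel copies suffice despite more than $k$ terminal pairs, and your out-vertex target $(u')^{\vout}_{t'}$ with bound $2b+1$ before doubling, make precise bookkeeping the paper only sketches; the paper's stated in-vertex target $(u')^{\vin}_{t'}$ would literally forbid waiting at $u'$ just before the next demand, so your choice is the right one.
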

\begin{proof}
    The algorithm is a modification of \Cref{alg:life-fpt-kh}.
    In addition to iterating over all assignments $A\colon D \to [k]$, there is a second nested loop iterating over all assignments $A'\colon D \to [h]$.
    The value $A'(d)$ fixes the number of remaining steps that the walk covering $d$ should have right before traversing the temporal edge of $d$.
    If there are demands $d = (u, v, t), d' = (u', v', t') \in D$ with $A(d) = A(d')$ and $t < t'$, but $A'(d') > A'(d)$, we continue to the next iteration.
    Such an assignment would mean that the number of remaining steps of walk $i$ has increased from when demand $d$ was covered to later when $d'$ was covered.
    In turn, we do not perform the check of Line~\ref*{line:fpt-kh-check-lifespan}, so a walk can be assigned to demands that are more than $h$ time steps apart.

    Next, instead of assigning $H$ to be $\SE_\gapsize(G, D)$ we use $\IO_\gapsize(G, D)$.
    Similar to before, the moving edges corresponding to demands are removed from $H$, and we set $H' \gets \PS(H, k, E(H) \setminus E_{\mathrm{move}})$.
    When constructing the terminal pairs $\mathcal S$, the terminal pair from $d = (u, v, t)$ to $d' = (u', v', t')$ should be from $v_{t+1}^\vin$ to $(u')_{t'}^\vin$ and have a length bound of $4(A'(d) - A'(d') - 1)$.
    The factor of $4$ corresponds to the doubling of length from $\G$ to $\IO_\gapsize(G, D)$ and another doubling due to the $\PS$ transformation.
    Finally, instead of solving a $\lenEDP$ instance, we solve heterogeneous $\lenEDP$ in $H'$ with the terminal pairs $\mathcal S$ using \Cref{thm:het-len-edp-fpt}.

    The proof for correctness is analogous to that of \Cref{thm:life-fpt-kh}, with the additional use of \cref{thm:io-paths-iff-compressed}.
    For the running time, there are $k^{\cardinality D}$ and $h^{\cardinality D}$ possible values for the assignments $A$ and $A'$, respectively.
    The heterogeneous $\lenEDP$ instance is a graph of polynomial size, the number of terminal pairs is at most $\cardinality D$, and each length bound is at most $4h$.
    By \Cref{thm:D-le-kh,thm:het-len-edp-fpt} we get the desired running time.
\end{proof}

Note that modifying the algorithms from \cref{thm:life-fpt-kh,thm:len-fpt-kh} yields $\XP$ algorithms for constrained $\TEDSC$ parameterized by $\cardinality D$.
We omit specifying the details for these two algorithms and instead move on to the next section in which we present an $\XP$ algorithm for the strictly smaller parameter $k$ instead.
A sketch for how these algorithms for the parameter $\cardinality D$ work is as follows:
For $\lifeTEDSC$, we reduce to (unbounded) $\EDP$ on DAGs parameterized by $k$ and use the $\XP$ algorithm by \textcite{fortune1980directed}.
For $\lenTEDSC$, we solve the $\lenEDP$ instance using a modified version of the $\XP$ algorithm that additionally incorporates length bounds.
As guessing $A$ and $A'$ takes $k^{\O(\cardinality D)}$ and $h^{\O(\cardinality D)}$ time respectively, the overall running time is $\XP$ in $\cardinality D$ too (recall that in $\lenTEDSC$, \cref{thm:length-h-le-nD} allows us to assume that $h$ is polynomial in the input size).

\subsection{XP Algorithm using State Graphs}
\label{sec:k-xp}
In \Cref{sec:fpt-kh} we saw that both $\lenTEDSC$ and $\lifeTEDSC$ are in $\FPT$ when parameterized by $k + h$.
Recalling that we assume $k < \cardinality D \le k \cdot h$, \cref{thm:hard-by-D-planar-dag,thm:life-fpt-kh,thm:len-fpt-kh} state that the gap for fixed parameter tractability is between $\cardinality D$ and $k + h$.
Assuming $\FPT \ne \W[1]$, \Cref{thm:hard-by-D-planar-dag} implies that there is no $\FPT$ algorithm for constrained $\TEDSC$ parameterized by $\cardinality D$.
However, it might still be possible to construct $\XP$ algorithms, as those would only be ruled out by a corresponding $\paraNP$-hardness result.
To this end, we know from \Cref{thm:h5-hard} that parameterizing by $h$ cannot yield an $\XP$ algorithm unless $\P = \NP$.

In this section, we show that both constrained $\TEDSC$ variants parameterized by $k$ are in $\XP$.
By our usual assumption $k < \cardinality D$, this algorithm also works for parameterizing by $\cardinality D$.
The main concept of the algorithm is the \emph{state graph} $\St(\mathcal I)$ of an instance $\mathcal I$.
Every vertex of $\St(\mathcal I)$ corresponds to a configuration of walks in $G$ at a time step $t$.
Edges in the state graph correspond to valid state transitions between two successive time steps.
As before, we compress the state graph to only a bounded number of steps around relevant time steps.

First, we formally define the state graph.
After the definition, we show that traversing this graph corresponds to constructing schedules, which also gives some intuition for the definition.
\begin{definition}
    \label{def:state-graph}
    Let $\mathcal I = (G, D, k, h)$ be a $\lenTEDSC$ or $\lifeTEDSC$ instance.
    The \emph{state graph} $\St(\mathcal I)$ is a directed graph where $V(\St(\mathcal I))$ is the set of states of $\mathcal I$.
    A state is a tuple $(t, U, A, p, c)$ where $t \in [\lifetime(D) + 1]$ is a time step, $U \subseteq [k]$ is a set of indices of so-far unused walks, $A \subseteq [k] \setminus U$ is a set of indices of active walks, $p\colon A \to V(G)$ is a position profile of the active walks, and $c\colon A \to [0, h]$ is a capacity profile of the active walks.
    We say that a state is an \emph{initial state} if $t = 1$, and a \emph{terminal state} if $t = \lifetime(D) + 1$.
    The state graph contains an edge from a state $(t_1, U_1, A_1, p_1, c_1)$ to a state $(t_2, U_2, A_2, p_2, c_2)$ if and only if
    all of the following hold:
    \begin{enumerate}
        \item\label{item:state-time-delta} $t_2 = t_1 + 1$.
        \item $U_2 \subseteq U_1$.
        \item $A_2 \subseteq A_1 \cup U_1$.
        \item
              For every $i \in A_1 \cap A_2$ it holds that
              \begin{enumerate}
                  \item $p_1(i) = p_2(i)$ or $(p_1(i), p_2(i)) \in E(G)$.
                  \item If $\mathcal I$ is a $\lenTEDSC$ instance then $c_2(i) = c_1(i) - d_G(p_1(i), p_2(i))$.
                  \item If $\mathcal I$ is a $\lifeTEDSC$ instance then $c_2(i) = c_1(i) - (t_2 - t_1)$.
              \end{enumerate}
        \item
              Let $\E(p_1, p_2) = \set{(p_1(i), p_2(i), t_1) \with i \in A_1 \cap A_2 \land p_1(i) \ne p_2(i)}$ denote the multiset of temporal edges used in the transition.
              \begin{enumerate}
                  \item Every temporal edge occurs at most once in $\E(p_1, p_2)$.
                  \item For every demand $d = (u, v, t) \in D$ with $t = t_1$, $d \in \E(p_1, p_2)$. \qedhere
              \end{enumerate}
    \end{enumerate}
\end{definition}
Intuitively, the first condition ensures that state transitions are only forwards in time, the second and third ensure that each walk goes from \emph{unused} over \emph{active} to \emph{finished}, the fourth ensures that the capacity of each walk decreases correctly, and the fifth ensures that the walks are temporally edge disjoint and cover all demands.

\begin{lemma}
    \label{thm:st-path-iff}
    Every schedule for a constrained $\TEDSC$ instance $\mathcal I$ corresponds to a path in $\St(\mathcal I)$ from some initial state to some terminal state, and vice versa.
\end{lemma}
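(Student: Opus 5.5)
We prove both directions by a direct simulation: we read off a state from a schedule at every time step, and conversely we reconstruct walks from the sequence of states along a path.

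Throughout, we fix an indexing of walks by $[k]$. A schedule with fewer than $k$ walks keeps the surplus indices in $U$ forever, which is consistent with condition~2.

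\textbf{Forward direction.} Let $S=\set{p_i}_{i \in [k]}$ be a feasible schedule, padded with unused indices if necessary. Write $\firstt(p_i)$ for the first time step of $p_i$ and $e_i$ for the time at which it reaches its final vertex. For each $t\in[\lifetime+1]$ we define the state $(t,U_t,A_t,p_t,c_t)$ as follows.
\begin{itemize}
    \item $U_t$ is the set of $i$ with $t<\firstt(p_i)$.
    \item $A_t$ is the set of $i$ with $\firstt(p_i)\le t\le e_i$.
    \item $p_t(i)$ is the vertex occupied by $p_i$ at time $t$.
    \item $c_t(i)$ is $h$ minus the number of moves made by $p_i$ before $t$ (for $\lenTEDSC$), or $h-(t-\firstt(p_i))$ (for $\lifeTEDSC$).
\end{itemize}
Then $c_t(i)\ge 0$ because $\len(p_i)\le h$, respectively $\lifespan(p_i)\le h$.

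We check the edge conditions between consecutive states in turn.
\begin{itemize}
    \item Conditions~1--3 follow from monotonicity of unused, then active, then finished.
    \item Condition~4 holds because a strict walk either stays put or takes one edge per step. In the length case $d_G(p_1(i),p_2(i))\in\set{0,1}$ counts exactly that move; in the lifespan case the capacity drops by one each step.
    \item Condition~5(a) is temporal edge disjointness.
    \item Condition~5(b) is $D\subseteq\E(S)$. Here we note that a walk taking a demand edge at time $t_1$ is active at both $t_1$ and $t_1+1$.
\end{itemize}

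\textbf{Reverse direction.} Given a path of states from an initial to a terminal state, we define $p_i$ as the sequence of positions $p_t(i)$ over the maximal interval of times $t$ with $i\in A_t$. Conditions~2 and~3 make this interval contiguous: once $i$ leaves $A$ it is not in $U$, so it cannot re-enter $A$. Condition~4(a) makes $p_i$ a strict temporal walk, possibly with waiting steps, in $\G(G,D)$. Conditions~5(a) and~5(b) give TED and coverage of $D$. Finally, the capacities never become negative since $c\colon A\to[0,h]$, and they decrease by exactly the number of moves (respectively elapsed steps). Hence $\len(p_i)\le h$, respectively $\lifespan(p_i)\le h$.

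The main obstacle is the bookkeeping at the boundaries of activity. We must make sure that a walk is in $A$ at both endpoints of every transition in which it moves, since $\E(p_1,p_2)$ only ranges over $A_1\cap A_2$. We must also make sure that the initial capacity is consistent. Because initial capacities of newly activated walks are unconstrained by the edge conditions, I would normalize $c=h$ at activation in the forward direction. In the reverse direction I would only use that the capacity stays in $[0,h]$, which gives the bound regardless of the starting value.

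A second subtlety is the lifespan accounting. The transition count while active is $e_i-\firstt(p_i)$, which equals the lifespan, so the bound on $c$ gives exactly $\lifespan\le h$. Walks that are activated but never move yield trivial walks; we either discard them or allow them, since they do not affect feasibility.
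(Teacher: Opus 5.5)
Your proposal is correct and follows the same direct simulation as the paper: in one direction you read a state off the schedule at every time step, and in the other you read each walk off the moves of its index while it is active. Your observation that the edge conditions leave the capacity of a newly activated walk unconstrained, so the reverse direction should rely only on $c(i) \in [0,h]$, is slightly more careful than the paper's claim that $c(i)$ is initially equal to $h$.
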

\begin{proof}
    Suppose that there is such a path $P$ in $\St(\mathcal I)$.
    Then this induces a schedule $S$ where the temporal edges of each walk $i \in [k]$ are exactly those of the form $(p_1(i), p_2(i), t)$ when $i \in A_1 \cap A_2$ and $p_1(i) \ne p_2(i)$.
    Throughout the states of $P$, each walk $i$ first is in the unused set $U$, then at some point transitions to the active set $A$, and finally is removed from $A$ and never occurs in either of these sets again.
    Moreover, as $c(i)$ is initially equal to $h$ and then decreases according to the rules of $\lenTEDSC$ / $\lifeTEDSC$, the length / lifespan of walk $i$ is at most $h$.
    Finally, the last set of conditions ensures that the walks of $S$ are TED and cover all demands $D$.
    For the other direction, every schedule $S$ induces a corresponding path through $\St(\mathcal I)$ with the individual states matching the positions and remaining capacities of the walks at each time step.
\end{proof}

Thus, finding such a path is the same as solving $\mathcal I$.
However, constructing $\St(\mathcal I)$ naively yields too many vertices to still find paths in time $\cardinality{\mathcal I}^{f(k)}$.
Thus, we define the compressed state graph $\St_\gapsize(\mathcal I)$ for a given gap size $\gapsize \in \N$.
\begin{definition}
    \label{def:compressed-state-graph}
    The \emph{compressed state graph} $\St_\gapsize(\mathcal I)$ of a constrained $\TEDSC$ instance $\mathcal I$ is obtained from $\St(\mathcal I)$ as follows:
    \begin{enumerate}
        \item
              Let $[t_1, t_2 - 1]$ be a gap of size at least $\gapsize$.
              We omit all states at time steps $t$ with $t \in [t_1 + \gapsize, t_2 - 1]$.
        \item
              We relax Condition~\ref*{item:state-time-delta} of \cref{def:state-graph} to allow transitions from a time step $t$ to $t'$ if $t'$ is the smallest time step after $t$ at which a state exists.
        \item
              If $\mathcal I$ is a $\lifeTEDSC$ instance, then we omit states $(t, U, A, p, c)$ unless they fulfill the following property:
              For all $i \in A$ we have $c(i) \in \set{h - (t - t_0) \with t_0 \in R}$, where $R$ denotes the set of relevant time steps (see \cref{def:relevant-time}). \qedhere
    \end{enumerate}
\end{definition}
The first step ensures that the number of time steps is $\O(\cardinality D \cdot \gapsize)$, which is polynomial for $\gapsize = (n - 1) \cdot k$.
The second step adds state transitions to skip the states that were removed in the first step.
The third step ensures that the number of possible values $c$ is bounded by $\cardinality D$, not only by $h$.
Notably, for $\lenTEDSC$ the upper bound of $h$ suffices because $h$ is polynomially bounded by \cref{thm:length-h-le-nD}.

\begin{theorem}
    \label{thm:life-len-k-xp}
    There is an algorithm that, given a $\lenTEDSC$ or $\lifeTEDSC$ instance $\mathcal I$, decides whether a feasible schedule exists in time $\cardinality{\mathcal I}^{\O(k)}$.
\end{theorem}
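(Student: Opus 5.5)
The plan is to build the compressed state graph $\St_\gapsize(\mathcal I)$ with $\gapsize = (n-1)\cdot k$ and decide whether some initial state reaches some terminal state. This is a single reachability search, linear in the size of the graph, so the whole argument comes down to two things: bounding $\cardinality{V(\St_\gapsize(\mathcal I))}$ by $\cardinality{\mathcal I}^{\O(k)}$, and showing that compression preserves feasibility.

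\textbf{Size bound.} Compression leaves $\O(\cardinality D \cdot \gapsize) = \O(\cardinality D\cdot n k)$ time steps. For a fixed time step the state is determined by three pieces of data:
\begin{itemize}
\item the pair $(U, A)$, for which there are at most $3^k$ choices;
\item the positions $p$, for which there are at most $n^k$ choices;
\item the capacities $c$, for which there are at most $(h+1)^k$ choices.
\end{itemize}
For $\lenTEDSC$ we may assume $h < n\cdot\cardinality D$ by \cref{thm:length-h-le-nD}. For $\lifeTEDSC$, the third compression rule restricts each $c(i)$ to at most $\cardinality R \le \cardinality D$ values. In both cases the number of states is $\cardinality{\mathcal I}^{\O(k)}$.

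To generate the edges, we take each ordered pair of states and check the conditions of \cref{def:state-graph} in polynomial time. Across a compressed gap the transition is relaxed. There we only require that each active walk can move from $p_1(i)$ to $p_2(i)$, and that the capacity decreases by $d_G(p_1(i),p_2(i))$ for $\lenTEDSC$, or by $t'-t$ for $\lifeTEDSC$. Squaring the number of states keeps the edge count at $\cardinality{\mathcal I}^{\O(k)}$.

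\textbf{Correctness.} \Cref{thm:st-path-iff} shows that schedules correspond to initial–terminal paths in the uncompressed $\St(\mathcal I)$. It remains to transfer this to $\St_\gapsize(\mathcal I)$, in two steps.
\begin{enumerate}
\item \textbf{From a schedule to a path.} Take any feasible schedule. For each long gap, reroute the walks exactly as in \cref{thm:io-paths-iff-compressed}. By \cref{thm:tedw-n-times-k}, the positions at the start of the gap can be joined to the positions at the end by TED shortest paths inside $\gapsize$ steps. This never increases length, and for lifespan it changes nothing, since start and end times are kept.
\item \textbf{Restricting capacities in the lifespan variant.} We may assume every walk starts at a relevant time step: a walk that starts moving before its first demand can be delayed, or trimmed, so that its first temporal edge is a demand. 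Then $c(i) = h-(t-t_0)$ with $t_0\in R$, so the rerouted schedule yields a path that uses only retained states.
\item \textbf{From a path back to a schedule.} Each relaxed gap transition is expanded via \cref{thm:tedw-n-times-k}, again using only shortest paths. Because distances are consumed exactly, the capacity bookkeeping stays valid.
\end{enumerate}

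\textbf{Main obstacle.} The delicate point is the trimming argument for the lifespan variant. A walk may begin by travelling non-demand edges in order to reach its first demand. Its start time $t_0$ is then not relevant, so the walk would be cut off by the third compression rule.

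My first attempt is to define the capacity from the time $t_0$ of the walk's \emph{first covered demand}. The state would treat the walk as already present, with full capacity $h$, until that demand. This needs care: the walk's prefix before $t_0$ must not be counted against its lifespan. Two fallbacks are available.
\begin{itemize}
\item Argue that walks activate at $t_0$ with a free repositioning. This is problematic, because a prefix is required to reach the demand's tail vertex.
\item Widen the allowed set of capacity values to $\set{h-(t-t_0) \with t_0 \in R-[0,n]}$, that is, allow start times up to $n$ steps before a relevant time step, since a shortest approach path has length below $n$. This still gives a polynomial number of capacity values and preserves the bound.
\end{itemize}
I would adopt the second fallback if the definition as stated turns out to be too restrictive.
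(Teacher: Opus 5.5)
Your overall plan matches the paper: build $\St_\gapsize(\mathcal I)$ with $\gapsize=(n-1)k$, bound its size by $\cardinality{\mathcal I}^{\O(k)}$, and transfer \cref{thm:st-path-iff} to the compressed graph using \cref{thm:tedw-n-times-k} together with the assumption that every walk starts at a relevant time step. The step that fails as written is your relaxed gap transition.

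The paper relaxes only Condition~1 of \cref{def:state-graph}. So across the skip from time $t_1+\gapsize-1$ to $t_2$ a walk still traverses at most one edge, and the skip is replicated in $\St(\mathcal I)$ by one move followed by waiting. You instead keep the first $\gapsize$ steps of each gap (you reroute ``exactly as in \cref{thm:io-paths-iff-compressed}''), yet you let every active walk jump to any reachable vertex. The skipped stretch can then span as few as two time steps, for example when the gap has size $\gapsize+1$. \Cref{thm:tedw-n-times-k} needs $\gapsize$ free steps and cannot realize several long jumps in that window, so ``expand each relaxed gap transition via \cref{thm:tedw-n-times-k}'' breaks. Any one of three fixes works; just don't mix the two compression schemes.
\begin{itemize}
\item Restrict the jump as the paper does. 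This suffices for the forward direction, since after rerouting no walk moves at any time step in $[t_1+\gapsize, t_2-1]$.
\item Keep only the first layer of each long gap, so the jump really spans at least $\gapsize$ steps.
\item Re-route over the whole gap, using the triangle inequality for the length capacities.
\end{itemize}
Also, your claim that ``start and end times are kept'' under rerouting holds only for walks active throughout the gap. So trim the walks before rerouting, not after.

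Your ``main obstacle'' does not exist. There are no depots in $\TEDSC$: a temporal walk may begin at any vertex, and in $\St(\mathcal I)$ a walk moving from $U$ into $A$ may take any position. The prefix of a walk before its first covered demand therefore contains no demand edges and can simply be deleted. This preserves temporal edge disjointness and coverage, and does not increase length or lifespan. The trimmed walk then starts at the tail of that demand, at a relevant time step, with capacity $h$. Your assertion that ``a prefix is required to reach the demand's tail vertex'' is false. Your step~2 is already exactly the paper's argument, and the widened capacity set $R-[0,n]$ is unnecessary, though harmless.
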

\begin{proof}
    Constructing the compressed state graph $\St_\gapsize(\mathcal I)$ with $\gapsize = (n - 1) \cdot k$ for an instance $\mathcal I$ yields a graph with the number of vertices $\cardinality{V(\St_\gapsize(\mathcal I))}$ bounded by
    \[
        \begin{bracealign}
            \underbrace{\cardinality D \cdot n \cdot k}_t
            \cdot
            \underbrace{2^k}_U
            \cdot
            \underbrace{2^k}_A
            \cdot
            \underbrace{n^k}_p
            \cdot
            \underbrace{C^k}_c
        \end{bracealign} \le \cardinality{\mathcal I}^{\O(k)}
    \]
    where $C = h + 1 \in \O(n \cdot \cardinality D)$ for $\lenTEDSC$ (see \cref{thm:length-h-le-nD}) and $C \le \cardinality D$ for $\lifeTEDSC$.
    By \cref{thm:st-path-iff}, a schedule exists if and only if $\St(\mathcal I)$ contains a path from an initial state to a terminal state.
    To prove that such a path is also witnessed by $\St_\gapsize(\mathcal I)$, observe that by \cref{thm:tedw-n-times-k} we can assume that no temporal walk of a schedule moves at a time step $t \in [t_1 + \gapsize, t_2 - 1]$ where $[t_1, t_2 - 1]$ is a gap.
    Furthermore, we can assume that the first temporal edge of each walk is at a relevant time step, so the remaining capacity of each walk at each time step is contained in the set of allowed values for $c$.
    For the other direction, a state transition that skips multiple time steps can be replicated by multiple transitions in $\St(\mathcal I)$ in which no temporal walk moves.
    As the number of edges in $\St_\gapsize(\mathcal I)$ is at most $(\cardinality{\mathcal I}^{\O(k)})^2$, the existence of a path from an initial state to a terminal state can be checked in the desired time.
\end{proof}

\subsection{Polynomial Time Algorithm for Fixed Bidirected Stars}
\label{sec:star-algorithm}
Our results in \cref{sec:fpt-kh,sec:k-xp} suggest that $\lenTEDSC$ and $\lifeTEDSC$ are equally complex.
In particular, we have not observed any differences in the parameterized complexity of these two problems with respect to the inherent parameters $k$, $h$, and $\cardinality D$.
We did, however, notice that constructing $\FPT$ algorithms parameterized by $k + h$ requires more work for $\lenTEDSC$ than for $\lifeTEDSC$.
Furthermore, the strongest hardness result we saw so far, \Cref{thm:length-w-hard-on-p3}, holds only for $\lenTEDSC$ and there is no analog of this reduction for $\lifeTEDSC$.

We confirm the intuition that $\lenTEDSC$ is computationally more difficult than $\lifeTEDSC$.
In particular, we construct an algorithm that forbids the existence of a hardness result like \Cref{thm:length-w-hard-on-p3} for $\lifeTEDSC$, unless $\P = \NP$.
To contrast \Cref{thm:length-w-hard-on-p3}, the goal is to construct an $\FPT$ algorithm for the parameter $k$ on the bidirected $P_3$ graph.
In this section, we present an $\XP$ algorithm for $\lifeTEDSC$ parameterized by $n$ on bidirected star graphs.
We thus exceed our goal in two ways:
First, the algorithm terminates not only in $\FPT$ time by $k$, but in polynomial time on any \emph{fixed} bidirected star graph.
Second, it works not only for $P_3$ but also for the entire class of bidirected star graphs.
For the rest of this section, assume that $G$ is a bidirected star graph as defined below.
\begin{definition}[Bidirected star graph]
    A \emph{bidirected star graph} is a graph $G$ with a \emph{center vertex} $c \in V(G)$ such that $E(G) = \bigcup_{v \in V(G) \setminus \set c} \set{(c, v), (v, c)}$.
    The vertices $V(G) \setminus \set c$ are called the \emph{leaves} of $G$.
\end{definition}

\Cref{fig:star-graph} shows an example of a bidirected star graph.
\begin{figure}[t]
    \centering
    \includegraphics[height=3.5cm]{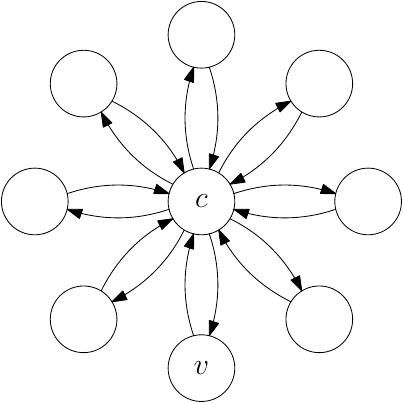}
    \caption{
        Bidirected star graph with $9$ vertices.
        The vertex~$c$ is the center vertex, the vertex~$v$ is a leaf.
    }
    \label{fig:star-graph}
\end{figure}
Our algorithm is based on the state graph algorithm from \cref{sec:k-xp}, which is $\XP$ by $k$.
We show how to limit the set of states that we need to explore.
Specifically, we limit the possible choices for the sets of unused ($U$) and active ($A$) temporal walks in each state.
Our approach works in three steps:
\begin{enumerate}
    \item
          We define the \emph{concentration} property and present the $\Concentrate$ procedure, which transforms arbitrary schedules into concentrated ones.
          Intuitively, if a schedule is concentrated, no walk idles on a leaf for a long time.
    \item
          We use the concentration property in the procedure $\Sparsify$, which also ensures that no excessive number of temporal walks are waiting at the center vertex.
          Combined with concentration, this means that the temporal walks do not idle for long times anywhere on $G$.
          Thus, the temporal walks need to frequently traverse temporal edges of $\G$ but also stay temporally edge disjoint.
          Consequently, the number of temporal walks active at any given time step is bounded by a function of the number of edges of $G$.
    \item Independently of the previous two steps, we argue in \cref{thm:st-U-prefix} that only linearly many choices for $U$ need to be considered.
\end{enumerate}
Notably, although we describe these modifications as procedures, they are not part of the final algorithm, so they do not have to terminate in the desired running time.
Instead, we show that they terminate after a finite number of steps, thereby proving the existence of a schedule with the desired properties.
By this exchange argument, it suffices to explore only the part of the state graph corresponding to schedules that fulfill these properties.

We begin with the formal definition of concentration.
Recall that for a temporal walk $p$, $\firstt(p)$ denotes the time step of the first temporal edge of $p$.
Thus, $\firstt(p) + h$ is the first time step at which $p$ cannot traverse temporal edges anymore, because then the lifespan of $p$ would exceed $h$.

\begin{definition}[Concentrated schedule]
    \label{def:concentrated-schedule}
    A schedule $S$ is \emph{concentrated until time step $t$} if the following holds for all $p \in S$:
    If $p$ enters a leaf $v$ using a temporal edge $(c, v, t^\vout)$ with $t^\vout \le t$, then it returns to $c$ using $(v, c, t^\vin)$ at some time step $t^\vin \in [t^\vout + 1, t^\vout + 2]$ or it holds that $t^\vout + 2 \ge \min(\lifetime + 1, \firstt(p) + h)$.
    If $S$ is concentrated until time step $t = \lifetime$, then $S$ is \emph{concentrated}.
\end{definition}

We say that a temporal walk $p \in S$ takes a \emph{violating temporal edge} $(c, v, t^\vout)$ if the above conditions are violated for this choice of variables.
A concentrated schedule can be obtained from an arbitrary schedule $S$ using the $\Concentrate$ procedure.
For the rest of this section, it helps to view the temporal walks $p \in S$ as sets of temporal edges rather than a sequence of visited vertices and time steps.
All procedures add or remove temporal edges from $\E(p)$ while ensuring that this set still induces a connected sequence of temporal edges.

Intuitively, to establish concentration, we do the following:
Whenever a temporal walk $p \in S$ takes a violating edge $(c, v, t)$, we want to fix this by packing more temporal edges into $p$.
If we add the temporal edge $(v, c, t + 1)$ to $p$, we resolve the violation.
However, if $p$ would return to the center $c$ at a later time step $\tnext$, then adding $(v, c, t + 1)$ disconnects $p$.
To fix this, we need to add one more edge such that $p$ comes to $v$ again at time step $\tnext - 1$.
This ensures that $p$ is back in time to take the temporal edge $(v, c, \tnext)$.

Both temporal edges $(v, c, t + 1)$ and $(c, v, \tnext - 1)$ might already be used by some other temporal walk $q \in S$, so simply adding these temporal edges to $\E(p)$ would result in $p$ and $q$ not being TED anymore.
We define two auxiliary procedures $\ShiftBackwards$ and $\ShiftForwards$ that allocate these temporal edges for $p$, that is, they modify the schedule $S$ such that the desired temporal edges are free.
The objective of both of these procedures is to create tight pairs of temporal edges $(c, v, t)$ and $(v, c, t+1)$ that can be used by the same temporal walk.
The precondition for both procedures is that a given walk $p$ already occupies one of these two temporal edges, and the result is that the other is no longer occupied by any walk in $S$.
If the desired temporal edge is the first temporal edge of another walk $q$, then it is safe to remove it from $q$.
Otherwise, we can recursively allocate a new temporal edge for $q$, such that the desired temporal edge for $p$ is free.
Both $\ShiftBackwards$ and $\ShiftForwards$ work by the same principle, but $\ShiftBackwards$ is slightly simpler because we can use the fact that the schedule $S$ is already concentrated until $t - 1$ which lets us exactly deduce the time step of the previous temporal edge of $q$.

Although it is easy to see that both $\ShiftBackwards$ and $\ShiftForwards$ terminate after a finite number of recursive invocations, it is not obvious why $\Concentrate$ does not iterate indefinitely.
In particular, each iteration fixes one walk at one time step but may create additional violations, possibly at earlier time steps.
We prove that $\Concentrate$ terminates after the presentation of the pseudocode.
\begin{procedure}
    \SetupAlgoPrePost
    \AlgoPost{$S$ is concentrated}

    \While{$S$ is not concentrated}{
        let $t \in [\lifetime]$ be minimal such that $S$ is not concentrated until time step $t$ \;
        let $p \in S$ be the temporal walk taking the violating temporal edge $(c, v, t)$ \;

        call \ShiftBackwards($S$, $p$, $v$, $t + 1$) \;
        add $(v, c, t + 1)$ to $p$ \; \label{line:concentrate-add-after-shift-back}
        \If{there is $(v, c, t') \in \E(p)$ with $t + 2 \le t'$}{
            let $\tnext = \min\set{t' \in [t + 2, \lifetime] \with (v, c, t') \in \E(p)}$ \;
            call \ShiftForwards($S$, $p$, $v$, $\tnext - 1$) \;
            add $(c, v, \tnext - 1)$ to $p$ \; \label{line:concentrate-add-after-shift-forw}
        }
    }

    \caption{Concentrate($S$)}
\end{procedure}

\begin{procedure}
    \SetupAlgoPrePost
    \AlgoPrec{$S$ is concentrated until $t - 2$}
    \AlgoPrec{$(c, v, t - 1) \in \E(p) \land (v, c, t) \notin \E(p)$}
    \AlgoPost{$(v, c, t) \notin \E(S)$}

    \If{$(v, c, t) \notin \E(S)$}{
        \Return \; \label{line:line:shiftbackwards-empty-found}
    }
    let $q \in S \setminus \set p$ with $(v, c, t) \in \E(q)$ \;
    \If{$(v, c, t)$ is the first temporal edge of $q$}{
        remove $(v, c, t)$ from $q$ \;
        \Return \;
    }
    by preconditions we have $(c, v, t - 2) \in \E(q)$ \;
    call \ShiftBackwards($S$, $q$, $v$, $t-1$) \;
    remove $(v, c, t)$ from $q$ \;
    add $(v, c, t-1)$ to $q$ \; \label{line:shiftbackwards-add-after-recurse}

    \caption{ShiftBackwards($S$, $p$, $v$, $t$)}
\end{procedure}

\begin{procedure}
    \SetupAlgoPrePost
    \AlgoPrec{$(c, v, t) \notin \E(p) \land (v, c, t + 1) \in \E(p)$}
    \AlgoPost{$(c, v, t) \notin \E(S)$}

    \If{$(c, v, t) \notin \E(S)$}{
        \Return \;
    }
    let $q \in S \setminus \set p$ with $(c, v, t) \in \E(q)$ \;
    \If{$(c, v, t)$ is the last temporal edge of $q$}{
        remove $(c, v, t)$ from $q$ \;
        \Return \;
    }
    let $t' = \min\set{t' \in [\lifetime] \with (v, c, t') \in \E(q) \land t < t'}$ \;
    \tcp*[l]{we know that $t + 1 < t'$, because $(v, c, t + 1) \in \E(p)$}
    call \ShiftForwards($S$, $q$, $v$, $t' - 1$) \;
    remove $(c, v, t)$ from $q$ \;
    add $(c, v, t' - 1)$ to $q$ \; \label{line:shiftforwards-add-after-recurse}

    \caption{ShiftForwards($S$, $p$, $v$, $t$)}
\end{procedure}

We begin by arguing that the modifications made by the presented procedures $\Concentrate$, $\ShiftBackwards$, and $\ShiftForwards$ are sound.

\begin{lemma}
    \label{thm:concentrate-sound}
    Let $S$ be a feasible schedule and $p \in S$.
    All modifications made to $p$ in $\Concentrate$ ensure that $\E(p)$ still induces a connected sequence of temporal edges in $\G$.
\end{lemma}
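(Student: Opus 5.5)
We argue by tracking, for each modification made in $\Concentrate$, $\ShiftBackwards$, and $\ShiftForwards$, that the modified walk is still a valid strict temporal walk. Every modification is one of four kinds: adding a temporal edge, removing the first temporal edge of a walk, removing the last temporal edge of a walk, or replacing one temporal edge by another. Since $G$ is a bidirected star, every walk alternates between the center $c$ and leaves. Consequently, a walk is a connected sequence of temporal edges exactly when its temporal edges, sorted by time, alternate between leaving $c$ and entering $c$, with consecutive edges agreeing on the intermediate leaf and having strictly increasing times. We check this invariant locally around each modified edge, by induction over the sequence of modifications.

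First, the removals. Deleting the first or the last temporal edge of a walk leaves a contiguous subsequence, so connectivity is preserved. A walk that becomes empty can be discarded, which does not increase $\cardinality S$.

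Next, the additions in $\Concentrate$. In Line~\ref*{line:concentrate-add-after-shift-back}, $p$ takes $(c,v,t)$, and its next temporal edge (if any) is either some $(v,c,\tnext)$ with $\tnext \ge t+2$ or nothing. Here we use that $(c,v,t)$ is violating: $p$ does not return at time $t+1$ or $t+2$ unless the exception holds, and $(v,c,t+1)\notin\E(p)$ by choice. Inserting $(v,c,t+1)$ then produces $c\to v\to c$. If no later return exists, $p$ simply ends at $c$ at time $t+2$ and the walk is connected. If a later return exists, the sequence is momentarily broken, because $p$ sits at $c$ after time $t+2$ while its next edge starts at $v$ at time $\tnext$. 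Line~\ref*{line:concentrate-add-after-shift-forw} inserts $(c,v,\tnext-1)$, which restores alternation: $(v,c,t+1)$, then $(c,v,\tnext-1)$, then $(v,c,\tnext)$, with strictly increasing times since $\tnext-1 \ge t+1 > t$. Moreover, by minimality of $\tnext$, no edge of $p$ lies strictly between times $t$ and $\tnext$, so nothing else interleaves. The postconditions of the shift procedures guarantee that the added edges are free, but that concerns TED-ness rather than connectivity. What we need here is only that the shift procedures do not modify $p$ itself, which holds because they only modify walks $q \ne p$.

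Finally, the replacements inside the shift procedures. In $\ShiftBackwards$, $q$ contains $(c,v,t-2)$ and $(v,c,t)$, and by the precondition (concentration until $t-2$) these are consecutive in $q$, with $q$ idling at $v$ during step $t-1$. Replacing $(v,c,t)$ by $(v,c,t-1)$ keeps $q$ going $c\to v\to c$ and makes it arrive at $c$ earlier. Its next edge leaves $c$ at some time $\ge t+1$, so the walk stays connected. The recursive call only touches walks other than $q$ before this edit, and $(v,c,t-1)$ is freed by the recursive postcondition. $\ShiftForwards$ is symmetric: $(c,v,t)$ is replaced by $(c,v,t'-1)$, where $t'$ is the next return of $q$ to $c$. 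The previous edge of $q$ enters $c$ at a time $\le t$, and $q$ now waits at $c$ until $t'-1$, then goes to $v$ and returns at $t'$.

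The main obstacle is justifying in $\ShiftBackwards$ that $(c,v,t-2)$ and $(v,c,t)$ are consecutive in $q$, i.e.\ that $q$ does not leave $v$ at time $t-1$. This follows because the edge $(v,c,t-1)$ is exactly the one being allocated: if $q$ used it, then $q$ would be in $c$ at time $t$, contradicting $(v,c,t)\in\E(q)$. Here we also have to invoke the concentration precondition carefully for $q$ entering $v$ at time $t-2$.
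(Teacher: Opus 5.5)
Your decomposition is the same as the paper's. Every edit is one of three kinds: a removal of a first or last edge, an in-slot replacement inside $\ShiftBackwards$/$\ShiftForwards$, or the paired insertion of $(v,c,t+1)$ and $(c,v,\tnext-1)$ in $\Concentrate$. Each is checked locally.

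One justification you rely on is false, though. You claim the shift procedures never modify $p$, and that the recursive call in $\ShiftBackwards$ only touches walks other than $q$. In fact each recursion level excludes only the walk of its immediate caller, so the original walk can be hit deeper down. Concretely, suppose
\begin{itemize}
\item $p$ starts with $(v,c,t-1)$ followed by the violating $(c,v,t)$,
\item $q$ uses $(c,v,t-1),(v,c,t+1)$, and
\item $q'$ uses $(c,v,t-2),(v,c,t)$.
\end{itemize}
This schedule is TED and concentrated until $t-1$. The chain $\ShiftBackwards(S,p,v,t+1)\to\ShiftBackwards(S,q,v,t)\to\ShiftBackwards(S,q',v,t-1)$ then ends by deleting $(v,c,t-1)$ from $p$.

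The conclusion you need still holds, but for a different reason: a time-window argument. $\ShiftBackwards(\cdot,\cdot,v,t+1)$ only removes or inserts edges $(v,c,s)$ with $s\le t+1$. It can touch $p$ only for $s\le t-1$, because $p$ uses time $t$ for $(c,v,t)$ and $(v,c,t+1)\notin\E(p)$. Symmetrically, $\ShiftForwards(\cdot,p,v,\tnext-1)$ touches $p$ only at times $>\tnext$. Hence $p$ is unchanged on $[t,\tnext]$ when the two edges are inserted. Its edits elsewhere are first/last removals or in-slot replacements, which you already handle. The same argument, shifted, repairs your remark about $q$ in $\ShiftBackwards$, where deeper levels can touch $q$ only at times $\le t-3$.

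There are also three smaller issues. First, your bound ``$\tnext\ge t+2$, hence $\tnext-1\ge t+1>t$'' is too weak. If $\tnext-1=t+1$, the inserted edges $(v,c,t+1)$ and $(c,v,\tnext-1)$ would share a time step, so the walk would not be strict. You need $\tnext\ge t+3$. This follows from the violation of $(c,v,t)$: there is no return at $t+1$ or $t+2$, and $\ShiftBackwards$ adds nothing to $p$ at times $\ge t$. The paper states this as $\tnext>t+2$.

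Second, a violating edge is one for which the exception fails, not ``unless the exception holds.''

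Third, you never check that the inserted $(v,c,t+1)$ is a temporal edge of $\G$ at all, i.e.\ that $t+1\le\lifetime$. The paper derives this from $t+2<\lifetime+1$, which holds for any violating edge.
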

\begin{proof}
    Whenever a walk $p$ is modified, then one of the following holds:
    \begin{enumerate}
        \item A temporal edge $(u, v, t)$ is removed and is the first or last temporal edge of $p$.
        \item A temporal edge $(u, v, t)$ is added as the first or last temporal edge of $p$, and the endpoints of the temporal edges match the rest of $p$.
        \item
              A temporal edge $(u, v, t)$ is replaced by another temporal edge $(u, v, t')$.
              Let $t_<$ and $t_>$ be the time steps of the previous and next temporal edges of $p$ relative to $(u, v, t)$.
              Then it holds that $t' \in [t_< + 1, t_> - 1]$, so $p$ stays connected.
        \item
              The temporal edges $(v, c, t + 1)$ and $(c, v, \tnext - 1)$ are added in $\Concentrate$.
              Before this, $p$ does not contain any temporal edges during $[t+1, \tnext - 1]$, and as $(c, v, t)$ is a violating temporal edge, we have $\tnext > t + 2$.
              Thus, adding the two temporal edges leaves $p$ connected.
    \end{enumerate}
    The temporal edges added are on static edges of the form $(c, v)$ or $(v, c)$, which are both in $E(G)$.
    In most cases, the time step of an added temporal edge is between the time steps of two temporal edges that existed before the addition.
    The only place where this does not hold is in Line~\ref*{line:concentrate-add-after-shift-back} of $\Concentrate$.
    However, in this case, we know that $t + 1 \le \lifetime$ because the chosen time step $t$ satisfies $t + 2 < \lifetime + 1$ by \cref{def:concentrated-schedule}.
    Therefore, all added temporal edges $(u, v, t)$ satisfy $(u, v) \in E(G)$ and $t \in [\lifetime]$.
\end{proof}

Next, we prove that applying $\Concentrate$ to a feasible schedule returns another feasible schedule.
The following lemmas argue about a single iteration of $\Concentrate$.
Once we establish that $\Concentrate$ terminates in \cref{thm:concentrate-potential-inc}, overall correctness follows by induction.

\begin{lemma}
    \label{thm:concentrate-monotone}
    Let $S$ be a feasible schedule and $S^*$ be the result of running one iteration of $\Concentrate$ on $S$.
    Then $\E(S) \subseteq \E(S^*)$ and $S^*$ is TED.
\end{lemma}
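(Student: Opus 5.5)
We prove both claims by induction over the recursion of $\ShiftBackwards$ and $\ShiftForwards$, using a common invariant for the two auxiliary procedures. The invariant for $\ShiftBackwards(S, p, v, t)$ is: if $S$ is TED on entry and the preconditions hold, then on return $S$ is still TED, $(v, c, t) \notin \E(S)$, and every temporal edge that was in $\E(S)$ on entry is still in $\E(S)$, with the single possible exception of $(v, c, t)$ itself. The same invariant, with $(c, v, t)$ as the freed edge, is used for $\ShiftForwards$. Both procedures terminate, since each recursive call strictly decreases (for $\ShiftBackwards$) or strictly increases (for $\ShiftForwards$) the time argument, and the latter stays within $[\lifetime]$.

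\textbf{The shift procedures.} In $\ShiftBackwards$ there are three cases.
\begin{enumerate}
    \item The early return when $(v, c, t)$ is free: nothing changes.
    \item The edge is the first temporal edge of $q$: only $(v, c, t)$ is removed. Since $S$ is TED, $q$ is the unique walk using it.
    \item The recursive case. By the induction hypothesis, the call frees $(v, c, t-1)$ and loses at most that edge. Then $(v, c, t)$ is removed from $q$ and $(v, c, t-1)$ is added to $q$.
\end{enumerate}
Two points must be checked in the recursive case. First, $(v, c, t-1)$ is the edge that was possibly lost, and it is now held by $q$ again, so no previously covered edge is lost. Second, $(v, c, t-1)$ was free right before being added, so $S$ stays TED.

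The subtle point is that the recursive call really applies to $q$. Its precondition requires $(c, v, t-2) \in \E(q)$, which follows from concentration until $t-2$ together with $(v,c,t) \in \E(q)$ not being the first edge of $q$. Its precondition also requires $(v, c, t-1) \notin \E(q)$, which holds because a walk uses at most one temporal edge per time step, and $q$ uses $(c,v,t-2)$ and $(v,c,t)$ with $q$ sitting at $v$ at time $t-1$. Moreover, the precondition ``concentrated until $t-3$'' is inherited. $\ShiftForwards$ is handled symmetrically, using the minimal $t'$ and the comment that $t'> t+1$, since $p$ occupies $(v,c,t+1)$ and the schedule is TED.

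\textbf{One iteration of $\Concentrate$.} First, $\ShiftBackwards$ frees $(v,c,t+1)$, possibly losing only that edge. Line~\ref*{line:concentrate-add-after-shift-back} then adds $(v,c,t+1)$ to $p$, so the set of covered edges has not shrunk and the schedule is still TED. The same holds for the second pair of steps with $(c,v,\tnext-1)$. Lemma~\ref{thm:concentrate-sound} guarantees that each walk remains a connected temporal walk.

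Combining these, $\E(S)\subseteq\E(S^*)$, so $D\subseteq\E(S^*)$ is preserved, and $S^*$ is TED.

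\textbf{Main obstacle.} The delicate step is verifying the preconditions of the recursive calls; in particular, that concentration until $t-2$ forces $q$ to have entered $v$ exactly at $t-2$. Here I would argue the following. Let $(c,v,t_0)$ be $q$'s entering edge before $(v,c,t)$. Since it is not violating (as $t_0 \le t-2$), $q$ returns to $c$ by $t_0+2$. This forces $t_0 \in \{t-2, t-1\}$. The case $t_0 = t-1$ is excluded, because then $q$ would use $(c,v,t-1)$, which is held by $p$.

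The exceptional clause $t_0+2 \ge \min(\lifetime+1,\firstt(q)+h)$ must also be excluded. It cannot apply, because $q$ still moves at time $t \le t_0+2$, which is below the lifespan limit.
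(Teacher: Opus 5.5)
Your proof is correct and follows the same idea as the paper's: every edge removed inside $\ShiftBackwards$ or $\ShiftForwards$ is re-added by the caller immediately afterward, and every added edge has been freed by a preceding shift call. You formalize this as an induction invariant over the recursion and also check the recursive preconditions that the paper only asserts in the pseudocode. Two small fixes are needed: the inequality in your last line should read $t_0+2 \le t < \min(\lifetime+1,\firstt(q)+h)$, and your invariant should also record that a call with time argument $t$ only touches temporal edges at time steps $\le t$ (resp.\ $\ge t$ for $\ShiftForwards$), which is what guarantees that $q$ still holds $(v,c,t)$ when the caller removes it.
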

\begin{proof}
    Every time that a temporal edge is removed from $S$ in $\ShiftBackwards$ or $\ShiftForwards$, the calling procedure adds the same temporal edge to another walk right afterward.
    Moreover, whenever a temporal edge is added to a walk, then this is preceded by a call to $\ShiftBackwards$ or $\ShiftForwards$ to ensure that the added edge is not occupied by another walk already.
\end{proof}

\begin{lemma}
    \label{thm:concentrate-lifespan-constraint}
    Let $S$ and $S^*$ be defined as in \cref{thm:concentrate-monotone}.
    Then all temporal walks $p^* \in S^*$ satisfy $\lifespan(p^*) \le h$.
\end{lemma}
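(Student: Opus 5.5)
The plan is to fix one iteration of $\Concentrate$ and go through every change it makes to a single walk. I will show that none of these changes can push the walk's lifespan above $h$. Throughout I use the convention implicit in Line~\ref*{line:fpt-kh-check-lifespan} of \cref{alg:life-fpt-kh}: a walk whose temporal edges have time steps in $[\firstt(p), t_{\mathrm{last}}]$ has lifespan $t_{\mathrm{last}} + 1 - \firstt(p)$. The starting schedule $S$ is feasible, so every walk begins with lifespan at most $h$. It therefore suffices to show that each elementary modification either leaves the time window $[\firstt(p), t_{\mathrm{last}}]$ unchanged, shrinks it, or extends it only to a point where the bound can be checked directly. The modifications are exactly those listed in the proof of \cref{thm:concentrate-sound}, and I treat them in that order.

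\emph{Removals.} $\ShiftBackwards$ removes the first temporal edge of some $q$, and $\ShiftForwards$ removes the last temporal edge of some $q$. In both cases the window of $q$ can only shrink, so its lifespan does not increase.

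\emph{Replacements.} In $\ShiftBackwards$ (Line~\ref*{line:shiftbackwards-add-after-recurse}), $(v,c,t)$ is replaced by $(v,c,t-1)$. The precondition guarantees $(c,v,t-2)\in\E(q)$, so the new edge lies strictly after an existing edge of $q$. Hence $\firstt(q)$ is unchanged, and $t_{\mathrm{last}}$ either stays the same or decreases by one. In $\ShiftForwards$ (Line~\ref*{line:shiftforwards-add-after-recurse}), $(c,v,t)$ is replaced by $(c,v,t'-1)$, where $(v,c,t')\in\E(q)$ is a later edge. So $t_{\mathrm{last}}$ is unchanged, and $\firstt(q)$ either stays the same or increases. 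In both cases the window does not grow. Because the procedures are recursive, I would phrase this as an induction on the recursion depth: every invocation only removes or replaces edges in these window-nonincreasing ways.

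\emph{Additions in $\Concentrate$.} This is the main step, since it is the only place where a walk can be extended at its end. The edge $(c,v,\tnext-1)$ added in Line~\ref*{line:concentrate-add-after-shift-forw} lies strictly between $(c,v,t)$ and $(v,c,\tnext)$, both of which are already in $p$, so the window of $p$ is unchanged. The edge $(v,c,t+1)$ added in Line~\ref*{line:concentrate-add-after-shift-back} may become the new last edge of $p$. Here I use that $(c,v,t)$ is a violating temporal edge. By \cref{def:concentrated-schedule} this means $t+2 < \min(\lifetime+1, \firstt(p)+h)$, and in particular $t+2 < \firstt(p)+h$. The first edge of $p$ is at time at most $t$, and neither addition occurs earlier than $t$, so $\firstt(p)$ does not change. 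The new lifespan is therefore at most $(t+1)+1-\firstt(p) = t+2-\firstt(p) < h$.

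The part that needs the most care is ensuring that the recursive shifts never move an edge in front of the first edge of a walk, or behind its last edge. This is exactly what the witnesses $(c,v,t-2)$ in $\ShiftBackwards$ and $(v,c,t')$ in $\ShiftForwards$ provide. Combining the three cases, every walk in $S^*$ satisfies $\lifespan(p^*)\le h$.
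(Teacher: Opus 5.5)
Your proof is correct and takes the same route as the paper. The paper likewise checks each place where a temporal edge is added: the edge at $t+1$ in $\Concentrate$ is bounded via the violation condition, and the other three additions lie between existing edges of the walk. Your window-shrinking invariant is a useful extra; strictly speaking, $\firstt(p)$ can increase if a recursive $\ShiftBackwards$ call reaches $p$ itself, but that only helps your bound.
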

\begin{proof}
    We check all places where temporal edges are added to temporal walks in $\Concentrate$.
    Initially, all $p \in S$ satisfy $\lifespan(p) \le h$.
    In Line~\ref*{line:concentrate-add-after-shift-back} of $\Concentrate$, we know that $p$ violates concentration, so we can add a temporal edge at time step $t+1$ without exceeding the lifespan bound of $h$.
    In Line~\ref*{line:concentrate-add-after-shift-forw} of $\Concentrate$, $p$ contains a temporal edge at the time step $\tnext$, so adding the temporal edge at $\tnext - 1$ does not increase the lifespan of $p$.
    Next, in Line~\ref*{line:shiftbackwards-add-after-recurse} of $\ShiftBackwards$, we know that $q$ previously contained temporal edges at the time steps $t - 2$ and $t$, so adding one at time step $t - 1$ does not increase the lifespan of $q$.
    Analogously, in Line~\ref*{line:shiftforwards-add-after-recurse} of $\ShiftForwards$, $q$ previously contained temporal edges at time step $t < t' - 1$ and $t'$, so adding one at time step $t' -  1$ does not increase the lifespan of $q$.
\end{proof}

On the one hand, \cref{thm:concentrate-monotone} guarantees that if $S$ covers $D$ then after running $\Concentrate(S)$, $S$ still covers $D$.
Combined with \cref{thm:concentrate-lifespan-constraint}, this means that the modified schedule is still a feasible solution.
On the other hand, \cref{thm:concentrate-monotone} is the basis for the termination proof of $\Concentrate$.
If every iteration would yield $\E(S) \subsetneq \E(S^*)$, then we could conclude that $\Concentrate$ terminates because $\E(S^*)$ cannot exceed $\E(\G)$, which is finite.
However, for some iterations, we have $\E(S) = \E(S^*)$.
This occurs only if the start of some walk is delayed, which can happen only a finite number of times.
\begin{definition}
    Let $S$ be a feasible schedule.
    Order the temporal walks $p \in S$ by $\firstt(p)$ as $p_1, p_2, \dots, p_{\cardinality S}$.
    Let $\firstt_i(S) = \firstt(p_i)$ if $i \in [\cardinality{S}]$ and $\firstt_i(S) = \lifetime + 1$ otherwise.
    We define the \emph{potential} of $S$ as the tuple
    \[
        \Phi(S) = \langle \cardinality{\E(S)}, \firstt_1(S), \firstt_2(S), \dots, \firstt_k(S) \rangle.
    \]
    Given two schedules $S$ and $S'$, we compare $\Phi(S)$ and $\Phi(S')$ lexicographically with respect to $\cardinality{\E(S)}$, and element-wise with respect to $\firstt_1(S), \firstt_2(S), \dots, \firstt_k(S)$.
    Formally, we define $\Phi(S) \le \Phi(S')$ if $\cardinality{\E(S)} < \cardinality{\E(S')}$, or $\cardinality{\E(S)} = \cardinality{\E(S')}$ and $\firstt_i(S) \le \firstt_i(S')$ for all $i \in [k]$.
\end{definition}

\begin{lemma}
    \label{thm:concentrate-potential-inc}
    Let $S$ and $S^*$ be defined as in \cref{thm:concentrate-monotone}.
    Then $\Phi(S) < \Phi(S^*)$.
\end{lemma}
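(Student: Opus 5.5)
We show that one iteration of $\Concentrate$ either strictly enlarges $\E(S)$, or keeps $\E(S)$ the same while delaying the first temporal edge of at least one walk and delaying none. By \cref{thm:concentrate-monotone} we already have $\E(S) \subseteq \E(S^*)$, so $\cardinality{\E(S)} \le \cardinality{\E(S^*)}$. If the inequality is strict, we are done by the lexicographic definition of $\Phi$. So assume $\E(S) = \E(S^*)$.

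The first step is to account for every edge added or removed during the iteration. Whenever $\ShiftBackwards$ or $\ShiftForwards$ recurses, the removal of $(v,c,t)$ (resp. $(c,v,t)$) from $q$ is compensated: the caller adds exactly that temporal edge to its own walk right afterward, and $q$ receives a new edge at $t-1$ (resp. $t'-1$). The net change is determined only at the bottom of the recursion chains. Each of the two calls in $\Concentrate$, followed by the addition in Lines~\ref*{line:concentrate-add-after-shift-back} and~\ref*{line:concentrate-add-after-shift-forw}, therefore adds one new temporal edge to $\E(S)$, unless the chain ended by removing an edge from some walk $q$. So $\E(S)=\E(S^*)$ forces every chain invoked in this iteration to terminate in the branch ``remove the first temporal edge of $q$'' (for $\ShiftBackwards$) or ``remove the last temporal edge of $q$'' (for $\ShiftForwards$). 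The line \ref*{line:concentrate-add-after-shift-back} addition is always a genuinely new edge $(v,c,t+1)$ for $p$. Hence at least the $\ShiftBackwards$ chain must end by deleting the first temporal edge of some walk $q$.

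The second step is to translate this into the $\firstt$-coordinates. Deleting the first edge of $q$ strictly increases $\firstt(q)$, or makes $q$ empty, which we treat as $\firstt=\lifetime+1$. We then check that no walk's first time decreases:
\begin{itemize}
  \item In $\ShiftBackwards$, a non-terminal step replaces $(v,c,t)$ by $(v,c,t-1)$ in $q$, but $q$ already contains $(c,v,t-2)$, so its first edge is unchanged.
  \item In $\ShiftForwards$, edges only move later or are deleted at the end, so first times cannot decrease.
  \item In $\Concentrate$, $p$ gains edges only after $t$, and $p$ already uses $(c,v,t)$.
\end{itemize}
Thus the multiset of first times weakly increases pointwise, with at least one strict increase. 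After sorting, $\firstt_i(S)\le\firstt_i(S^*)$ for all $i$, with strict inequality somewhere, since sorted order statistics are monotone under pointwise increases. This gives $\Phi(S)<\Phi(S^*)$.

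The main obstacle is the bookkeeping in the first step: proving rigorously that the recursion telescopes, so that the net change in $\cardinality{\E(S)}$ is exactly ``$+1$ per chain unless the chain ends in a first/last-edge deletion''. A subtlety here is that the $\ShiftForwards$ chain could instead delete a last edge. That case does not change $\firstt$, unless the walk becomes empty, in which case $\firstt$ rises to $\lifetime+1$. Either way it is harmless, because the $\ShiftBackwards$ part already supplies the needed strict change.
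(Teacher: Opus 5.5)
Your proposal is correct and follows the same route as the paper: split on whether $\E(S)\subsetneq\E(S^*)$, and in the equality case argue that the deepest \ShiftBackwards{} call must delete the first temporal edge of some walk while no operation makes any walk start earlier. Your step-by-step bookkeeping is in fact slightly more careful than the paper's. It does not rely on the paper's remark that no edge is removed from $p$, which is not quite accurate: deeper \ShiftBackwards{} levels can reassign edges of $p$, and the delayed walk can even be $p$ itself. Strictness of some order statistic then follows because the sum of the sorted first times strictly increases.
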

\begin{proof}
    By \cref{thm:concentrate-monotone}, we have $\E(S) \subseteq \E(S^*)$.
    If $\E(S) \subsetneq \E(S^*)$, the potential increases, so assume $\E(S) = \E(S^*)$ going forward.
    In this case, all added edges were previously removed from another temporal walk.
    As no edge is removed from $p$ in $\Concentrate$, the first edge of some other walk $q$ must have been removed in the deepest recursive call of $\ShiftBackwards$, so the start of $q$ is delayed.
    Moreover, none of the changes made in $\Concentrate$ result in a walk starting earlier in $S^*$ than in $S$.
\end{proof}

\begin{lemma}
    \label{thm:concentrate-terminates}
    $\Concentrate$ terminates.
\end{lemma}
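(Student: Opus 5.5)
Two things have to be shown: each iteration of $\Concentrate$ performs only finitely many steps, and the while loop is executed only finitely often.

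For a single iteration, the recursive calls of $\ShiftBackwards$ and $\ShiftForwards$ must stop. Each recursive call of $\ShiftBackwards$ works at time step $t-1$ when called with $t$, so the argument strictly decreases. It must stop by time step $1$ at the latest, since a temporal edge at time $1$ is necessarily the first edge of its walk. Symmetrically, each recursive call of $\ShiftForwards$ is made with $t'-1 > t$, so the argument strictly increases. It is bounded by $\lifetime$, and a temporal edge at the final time step is necessarily the last edge of its walk. So the recursion depth of either procedure is at most $\lifetime$, and one iteration of $\Concentrate$ consists of finitely many elementary modifications. By \cref{thm:concentrate-sound,thm:concentrate-monotone,thm:concentrate-lifespan-constraint}, the result $S^*$ of an iteration is again a feasible schedule, so these lemmas can be applied again in the next iteration. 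This gives an induction along the iterations.

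For the outer loop, I would use the potential $\Phi$ and \cref{thm:concentrate-potential-inc}: every iteration strictly increases $\Phi$. It then suffices to show that $\Phi$ takes only finitely many values and that $<$ is a strict partial order, so that no infinite strictly increasing chain exists. For finiteness, $\cardinality{\E(S)} \le \cardinality{\E(\G)} = m \cdot \lifetime$. Each $\firstt_i(S)$ lies in $[1, \lifetime+1]$. Also, the number of walks never grows, because the procedures only add edges to or remove edges from existing walks; a walk that loses all its edges is simply treated as absent, contributing $\lifetime+1$. Hence $\Phi$ ranges over at most $(m\lifetime+1)(\lifetime+1)^k$ tuples.

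The main obstacle is that the order on $\Phi$ is not a plain lexicographic order: it is lexicographic in the first coordinate and componentwise in the remaining ones. I must check that this is a strict partial order. Irreflexivity and transitivity hold because the relation is the product of a strict order on $\cardinality{\E(S)}$ with the componentwise order on $\firstt_1,\dots,\firstt_k$, ordered lexicographically. For any strictly increasing chain, $\cardinality{\E(S)}$ is nondecreasing along the chain, and it can increase at most $m\lifetime$ times. Between two increases, every step raises some $\firstt_i$ strictly while lowering none, so $\sum_i \firstt_i$ increases by at least $1$. Since this sum is bounded by $k(\lifetime+1)$, each such stretch has length at most $k(\lifetime+1)$.

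The total number of iterations is therefore at most $(m\lifetime+1)\cdot k(\lifetime+1)$, which proves termination. I would also confirm that the sorted order of the walks in $\firstt_i$ is consistent with the componentwise comparison. Delaying one walk's start, while no other walk starts earlier, makes the sorted vector of start times componentwise no smaller. This is a standard fact about sorted vectors and justifies the sum argument.
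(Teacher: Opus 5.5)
Your proposal is correct and follows the paper's argument: each iteration strictly increases $\Phi$ by \cref{thm:concentrate-potential-inc}, and $\Phi$ ranges over a bounded, finite set of values, so the loop runs only finitely often. The extra points you verify are details the paper leaves implicit, and your treatment of each of them is sound. These are the termination of the \ShiftBackwards/\ShiftForwards recursion, the fact that the mixed lexicographic/componentwise order admits no infinite ascending chain, and the monotonicity of the sorted start-time vector.
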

\begin{proof}
    By \cref{thm:concentrate-potential-inc}, the potential $\Phi(S)$ of the schedule $S$ increases in every iteration, but it cannot exceed $\langle \cardinality{\E(\G)}, \lifetime + 1, \dots, \lifetime + 1 \rangle$.
    Therefore, there can only be finitely many iterations.
\end{proof}

We have thus established that if there is a feasible schedule, there is also a concentrated one.
This already allows us to limit the set of states considered in the state graph when running \cref{thm:life-len-k-xp} on bidirected stars:
As walks need to return to the center after a constant number of time steps, only a linear number of walks can be on leaves at any given time step.
However, the number of walks waiting at the center vertex remains unbounded, and we still need to store the remaining capacity of all these walks.
This amounts to $\cardinality{D}^k$ possible states, which (for constant $n$) is more than the $\cardinality{\mathcal I}^{f(n)}$ we are aiming for.

Thus, the next step is to reduce the number of walks on the center vertex $c$.
We define the $\Sparsify$ procedure, which produces a schedule in which only a bounded number of walks are active at each time step.
Intuitively, to use the temporal walks as effectively as possible, each one should be traversing $\G$ as long as possible.
When introducing $p \in S$ at the time step $\firstt(p)$, then it should stay in the set of active walks $A$ of the state graph for as many time steps as possible.
As the lifespan of $p$ must not exceed $h$, the last time step at which $p$ can be in the set $A$ of a state is $\firstt(p) + h$.
This motivates the following definition.
\begin{definition}
    Let $S$ be a feasible schedule and $t \in [\lifetime]$.
    The subset of temporal walks of $S$ that should be considered active at time step $t$ is defined as
    \[
        S_t = \set{p \in S \with t \in [\firstt(p), \firstt(p) + h]}. \qedhere
    \]
\end{definition}

To reduce the number of states in the state graph to $\XP$ in $n$, we need to show that it suffices to search for schedules $S$ which satisfy $\cardinality{S_t} \le f(n)$ for all $t \in [\lifetime]$ for some function $f$.
We construct such schedules in the $\Sparsify$ procedure defined below.
The core idea of this procedure is to use the concentration property to delay the start of new temporal walks so that the size of $S_t$ is at most $f(n)$.
Specifically, let
\[
    f\colon n \mapsto 22 (n - 1).
\]
The following lemma is the core of the $\Sparsify$ operation, and the details of its proof are the reason for the specific choice of $f$.
For the rest of this section, the only important thing is that $f(n) \in \O(n)$.
To prove the lemma, we use the fact that every temporal walk starts its life by traversing a temporal edge, and as $G$ contains only $m = 2 \cdot (n - 1)$ edges, at most $m$ temporal walks can start their life at a given time step.
Note that for small values of $h$ (that is, for $h < f(n) / m$), this fact suffices to show that $\cardinality{S_t} \le f(n)$ because $f(n) / m$ time steps are needed to create $f(n)$ temporal walks, but by the time all walks are created, some of them are already more than $h$ time steps old.
For larger values of $h$, it may happen that $\cardinality{S_t} > f(n)$.
However, the following lemma shows that whenever this is the case, then there is some walk $p$ that is alive and idling on $c$ during the interval $[t-1, t+2]$.
Intuitively, as $p$ is idling, there is no point in introducing a new walk at this time step.
\begin{lemma}
    \label{thm:concentrated-walk-available}
    Let $S$ be a concentrated schedule and let $t$ be a minimal time step such that $\cardinality{S_t} > f(n)$.
    Then there exists a temporal walk $p \in S_t$ such that
    \begin{enumerate}[a)]
        \item $\firstt(p) \le t - 2$ and $\firstt(p) + h \ge t + 3$,
        \item the last temporal edge of $p$ before time step $t-1$ leads to the center $c$, and
        \item $p$ does not contain a temporal edge during $[t-1, t+2]$.\qedhere
    \end{enumerate}
\end{lemma}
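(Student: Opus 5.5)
We prove the statement by counting: most walks in $S_t$ satisfy a), and at most a bounded number of those can fail b) or c).

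\textbf{Step 1: walks satisfying a).} Recall that $G$ has $m = 2(n-1)$ edges and that each walk starts by traversing a temporal edge. Since walks are TED, at most $m$ walks have $\firstt(p) = t'$ for any fixed $t'$. Every $p \in S_t$ has $\firstt(p) \le t$, so at most $2m$ walks of $S_t$ have $\firstt(p) \in \set{t-1, t}$.

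For the other condition in a), a walk with $\firstt(p) + h \le t+2$ lies in $S_t$ only if $\firstt(p) \in [t-h, t+2-h]$. That range contains three time steps, so at most $3m$ walks of $S_t$ violate $\firstt(p) + h \ge t+3$.

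Hence at least $\cardinality{S_t} - 5m$ walks satisfy a). Minimality of $t$ is not needed for this step; it is only used to guarantee $\cardinality{S_{t'}} \le f(n)$ for $t' < t$, should that be convenient.

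\textbf{Step 2: removing walks that move near $t$.} Let $p$ satisfy a), and suppose $p$ contains a temporal edge during $[t-1, t+2]$. There are four time steps and $m$ edges, and the walks are TED. So at most $4m$ walks use a temporal edge in this window, and all remaining walks satisfy c).

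\textbf{Step 3: the position of an idle walk.} Take $p$ satisfying a) and c), with $\firstt(p) \le t-2$. Then $p$ has a temporal edge at some time $< t-1$; let $e$ be the last such edge. Suppose, for contradiction, that $e = (c, v, t^\vout)$ enters a leaf.

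Since $p$ is idle on $[t-1, t+2]$, the walk does not return to $c$ at $t^\vout+1$ or $t^\vout+2$. For $t^\vout = t-2$ this holds because $p$ is idle on $[t-1,t]$. For $t^\vout < t-2$ it holds because $e$ is the last edge before $t-1$ and $p$ is idle afterwards up to $t+2$.

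Concentration then forces $t^\vout + 2 \ge \min(\lifetime+1, \firstt(p)+h)$. By a), $\firstt(p) + h \ge t+3 > t^\vout + 2$, so this case is impossible. Hence $t^\vout + 2 \ge \lifetime + 1$, i.e.\ $t^\vout \ge \lifetime - 1$, which with $t^\vout \le t-2$ gives $t \ge \lifetime + 1$. This contradicts $t \in [\lifetime]$. Therefore $e$ leads to $c$, which gives b).

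Here $h$ counts lifespan, so a walk may stay on $c$ while its lifespan is still running.

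\textbf{Conclusion and main obstacle.} The walks surviving all three steps number at least $\cardinality{S_t} - 9m = \cardinality{S_t} - 18(n-1) > 22(n-1) - 18(n-1) > 0$. So such a $p$ exists.

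The delicate point is Step 3: it must rule out walks idling on a leaf. This only works because concentration exempts a walk only near its lifespan end or near $\lifetime$, and a) rules out the first. I expect the exact boundary cases to need care, in particular the window endpoints and the boundary case $t^\vout + 2 = \lifetime + 1$. The constant $22$ leaves slack, of $4(n-1)$, for looser counting in Steps 1 and 2.
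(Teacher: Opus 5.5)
Your proof is correct and uses the paper's counting argument: at most $5m$ walks violate a), and at most $4m$ walks move during $[t-1,t+2]$. The one difference is that you derive b) from a), c) and concentration, whereas the paper discards a further $2m$ walks that entered a leaf at $t-3$ or $t-2$; this gives you the sharper bound $9m$ and explicitly handles the $\lifetime+1$ exemption in the concentration definition, which the paper's proof leaves implicit.
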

\begin{proof}
    We count all possible ways that a walk $p \in S_t$ could violate one of the above conditions.
    If this count does not exceed $f(n)$, there must be at least one walk that satisfies all conditions.
    Recall that as $G$ contains $m = 2(n-1)$ edges and every temporal walk starts its life by traversing a temporal edge, at most $m$ temporal walks can start their life at any given time step.

    Let $p \in S_t$.
    By definition, $\firstt(p) \le t$ and $\firstt(p) \ge t - h$.
    To violate condition (a), we must have $\firstt(p) \ge t - 1$ or $\firstt(p) + h \le t + 2$, which eliminates up to $5m$ walks where $\firstt(p) \in [t-1, t] \cup [t-h, t-h+2]$.
    Otherwise, to violate condition (b), suppose that the last temporal edge of $p$ before $t-1$ leads to a leaf $v$.
    As $S$ is concentrated and $\firstt(p) \ge t+3-h$, we know that $p$ must return to $c$ within two time steps.
    For this to happen no earlier than time step $t-1$, the edge to $v$ must have been traversed at a time step in $[t-3, t-2]$, which eliminates $2m$ candidates.
    Finally, if $p$ fulfills conditions (a) and (b), to violate condition (c), an edge must be traversed in the interval $[t-1, t+2]$, for which there are only $4m$ temporal edges available.
    In total, this means that at most $11m$ temporal walks from $S_t$ violate at least one condition.
    As $\cardinality{S_t} > 11m$, there must be at least one $p \in S_t$ that fulfills all conditions.
\end{proof}

In the $\Sparsify$ procedure, we check whether there is a time step $t \in [\lifetime]$ at which $\cardinality{S_t}$ exceeds $f(n)$.
If $t$ is the earliest time step at which this is the case, then there must be a temporal walk $p \in S$ that traverses its first temporal edge at time step $t$.
By \cref{thm:concentrated-walk-available}, however, $p$ is not yet needed.
We delay the start of $p$ to decrease $\cardinality{S_t}$.
This process is repeated until $\cardinality{S_t} \le f(n)$ holds for all $t \in [\lifetime]$.
Notably, after modifying $p$, $S$ may not be concentrated anymore, so at the end of each iteration, we use the $\Concentrate$ procedure to restore concentration.

\begin{procedure}
    \SetupAlgoPrePost
    \AlgoPost{$S$ is concentrated}
    \AlgoPost{$\forall t \in [\lifetime]: \cardinality{S_t} \le f(n)$}

    call $\Concentrate(S)$ \;
    \While{$\exists t \in [\lifetime]: \cardinality{S_t} > f(n)$}{
        let $t \in [\lifetime]$ be minimal such that $\cardinality{S_t} > f(n)$ \;
        let $p \in S$ be a walk with $\firstt(p) = t$ \tcp*{exists because $\cardinality{S_{t-1}} < \cardinality{S_t}$}
        by \cref{thm:concentrated-walk-available} there is $q \in S$ waiting on $c$ during $[t-1, t+2]$ \;
        \If{the first temporal edge of $p$ is $(c, v, t)$}{
            \If{$\exists t' \in [t+1, t+2]: (v, c, t') \in \E(p)$}{
                remove $(c, v, t)$ and $(v, c, t')$ from $p$ \;
                add $(c, v, t)$ and $(v, c, t')$ to $q$ \; \label{line:sparsify-q-add-from-p}
            }
            \Else(\tcp*[h]{$S$ is concentrated, so $t + 2 \ge \lifetime + 1$}){
                remove $(c, v, t)$ from $p$ \;
                add $(c, v, t$) to $q$ \; \label{line:sparsify-q-add-late}
            }
        }
        \Else(\tcp*[h]{the first temporal edge of $p$ is $(v, c, t)$}){
            call \ShiftForwards($S$, $p$, $v$, $t-1$) \;
            remove $(v, c, t)$ from $p$ \;
            add $(c, v, t-1)$ and $(v, c, t)$ to $q$ \; \label{line:sparsify-q-add-allocated-and-p}
        }
        call $\Concentrate(S)$ \;
    }

    \caption{Sparsify($S$)}
\end{procedure}

\begin{lemma}
    \label{thm:sparsify-sound}
    Let $S$ be a feasible schedule and $p \in S$.
    All modifications made to $p$ in $\Sparsify$ ensure that $\E(p)$ still induces a connected sequence of temporal edges in $\G$.
\end{lemma}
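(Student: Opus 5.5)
The plan mirrors the proof of \cref{thm:concentrate-sound}: list every place in $\Sparsify$ where a temporal edge is added to or removed from a walk, and check that the walk's edge set still induces a connected sequence of temporal edges in $\G$. Modifications made inside the calls to $\Concentrate$, and inside $\ShiftForwards$ when it is invoked by $\Sparsify$, are already covered by \cref{thm:concentrate-sound}. The argument there was local to each modification and did not depend on the caller, except for the preconditions of $\ShiftForwards$, which I verify at the call site. Three kinds of changes remain: removals from $p$, additions to $q$, and the call $\ShiftForwards(S,p,v,t-1)$.

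\textbf{Removals from $p$.} In the first branch, $(c,v,t)$ is the first temporal edge of $p$. If $(v,c,t')$ with $t'\in[t+1,t+2]$ lies in $\E(p)$, then $p$ has no other temporal edge strictly between $t$ and $t'$. For $t'=t+1$ this is trivial. For $t'=t+2$, an edge at $t+1$ would have to leave $v$, and the only such edges lead to $c$, contradicting the choice of $t'$ as the return time. So $(c,v,t)$ and $(v,c,t')$ form a prefix of $p$ that starts and ends at $c$, and removing it leaves a connected suffix, or the empty walk. In the else-branch, and in the branch where the first edge is $(v,c,t)$, we remove only the first edge of $p$, which preserves connectivity.

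\textbf{Additions to $q$.} I use \cref{thm:concentrated-walk-available}: $q$'s last temporal edge before $t-1$ ends at $c$, and $q$ has no temporal edge in $[t-1,t+2]$. Hence $q$ sits at $c$ throughout this window, whichever it resumes afterwards. In Line~\ref*{line:sparsify-q-add-from-p}, the pair $(c,v,t),(v,c,t')$ with $t'\le t+2$ is a closed excursion from $c$ inside the idle window, so inserting it keeps $q$ connected. In Line~\ref*{line:sparsify-q-add-late}, concentration gives $t+2\ge \lifetime+1$, so $q$ has no later edges and $(c,v,t)$ becomes its final edge, appended at $c$. In Line~\ref*{line:sparsify-q-add-allocated-and-p}, the pair $(c,v,t-1),(v,c,t)$ is again an excursion from $c$ within $[t-1,t+2]$.

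\textbf{The call to $\ShiftForwards$ and the remaining obstacle.} For $\ShiftForwards(S,p,v,t-1)$ the precondition holds: $(v,c,t)\in\E(p)$ is $p$'s first edge, so $(c,v,t-1)\notin\E(p)$. The recursion modifies only other walks, and soundness for them follows as in \cref{thm:concentrate-sound}. The step I expect to need the most care is checking that the edges added to $q$ lie inside the idle window of \cref{thm:concentrated-walk-available}, and that $t-1\ge 1$ holds for the second branch. The latter follows from part (a), since $\firstt(q)\le t-2$. The window check needs a short case analysis on whether $q$ still has edges after $t+2$ and where it resumes. Finally, every added edge is of the form $(c,v)$ or $(v,c)$, so it lies in $E(G)$, and its time step lies in $[\lifetime]$.
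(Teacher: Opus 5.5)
Your proposal is correct and follows essentially the same route as the paper. Both defer the modifications inside $\ShiftForwards$ and $\Concentrate$ to \cref{thm:concentrate-sound}, observe that only initial edges of $p$ are removed, and check that every edge added to $q$ lies in its idle window $[t-1,t+2]$ at $c$, or becomes its final edge when $t+2\ge\lifetime+1$. You also make explicit several details the paper leaves implicit: that the removed pair $(c,v,t),(v,c,t')$ is a contiguous prefix of $p$, that the $\ShiftForwards$ precondition holds, and that $t-1\ge 1$. One small slip: deeper recursive calls of $\ShiftForwards$ exclude only the immediate caller, so they may also shift later edges of $p$ itself. This is harmless, since those shifts are again covered by \cref{thm:concentrate-sound} and leave $p$'s first edge $(v,c,t)$ intact.
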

\begin{proof}
    It is already shown in \cref{thm:concentrate-sound} that $\ShiftForwards$ maintains the desired properties.
    It remains to check all modifications made directly in $\Sparsify$.
    Let $p \in S$ be the temporal walk chosen in an iteration of $\Sparsify$ and $q \in S$ the temporal walk chosen using \cref{thm:concentrated-walk-available}.
    In all three cases of $\Sparsify$, the temporal edges that are removed from $p$ are the first temporal edges of $p$, so $p$ stays connected.
    The temporal edges added to $q$ are all contained in the time interval~$[t-1, t+2]$ during which $q$ does not contain any temporal edges before the addition.
    In Lines~\ref*{line:sparsify-q-add-from-p} and~\ref*{line:sparsify-q-add-allocated-and-p}, adding both temporal edges ensures that $q$ is on $c$ at time step~$t+3$, so $q$ is still connected to its next temporal edge afterward.
    In Line~\ref*{line:sparsify-q-add-late}, only a single temporal edge is added to $q$, so $q$ is on the vertex $v$ at time step $t+1$.
    However, in this case, we know that as $S$ is concentrated and $p$ does not contain a temporal edge during $[t+1, t+2]$, we must have $t + 2 \ge \lifetime + 1$.
    Therefore, $q$ does not contain any temporal edges at the time steps after $t+1$.
\end{proof}

\begin{lemma}
    \label{thm:sparsify-works}
    Let $S$ be a feasible, concentrated schedule and $S^*$ be the result of running one iteration of $\Sparsify$ on $S$.
    Then $\E(S) \subseteq \E(S^*)$, $S^*$ is TED, $\Phi(S) < \Phi(S^*)$, and for all $p^* \in S^*$ we have $\lifespan(p^*) \le h$.
\end{lemma}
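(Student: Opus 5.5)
We verify the four claims one at a time for a single iteration of the while loop of $\Sparsify$, excluding the closing call to $\Concentrate$. That call preserves every property in the statement by \cref{thm:concentrate-monotone,thm:concentrate-lifespan-constraint,thm:concentrate-potential-inc}, so we can compose with it at the end. Fix the minimal $t$ with $\cardinality{S_t} > f(n)$, a walk $p$ with $\firstt(p) = t$, and the walk $q$ given by \cref{thm:concentrated-walk-available}. We split into the three branches of the procedure.

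\textbf{Coverage.} For $\E(S) \subseteq \E(S^*)$, every temporal edge removed from $p$ is re-added to $q$ immediately afterward. In the third branch, the call to $\ShiftForwards$ only moves edges between walks or drops the last edge of some walk; the dropped edge is $(c, v, t-1)$, which is then added to $q$. This mirrors the argument of \cref{thm:concentrate-monotone}.

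\textbf{Temporal edge disjointness.} The edges added to $q$ either come from $p$, and so are free once removed, or were freed by $\ShiftForwards$. Since $q$ has no temporal edge in $[t-1, t+2]$ by \cref{thm:concentrated-walk-available}(c), no edge is duplicated within $q$ either.

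\textbf{Lifespan.} Walk $p$ only loses initial edges, so its lifespan does not grow. All edges added to $q$ lie in $[t-1, t+2]$, and \cref{thm:concentrated-walk-available}(a) gives $\firstt(q) \le t-2$ and $\firstt(q) + h \ge t+3$, so $q$ stays within its lifespan window. Edges added inside $\ShiftForwards$ do not increase lifespan, as argued in \cref{thm:concentrate-lifespan-constraint}.

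\textbf{Potential.} This is the step I expect to be the main obstacle. If $\cardinality{\E(S^*)} > \cardinality{\E(S)}$ we are done. Otherwise $\E(S^*) = \E(S)$, and we must show the start vector weakly increases componentwise and strictly in some coordinate.
\begin{enumerate}[a)]
    \item Walk $p$ lost its first edge, so $\firstt(p)$ strictly increases. If $p$ becomes empty, its value becomes $\lifetime+1$.
    \item Walk $q$ gains edges only at times $\ge t-1 > \firstt(q)$, so its start is unchanged.
    \item Inside $\ShiftForwards$, walks only lose last edges or have an edge shifted later, so no walk starts earlier.
\end{enumerate}
Because the multiset of start times weakly increases with one strict increase, the sorted vector $\langle \firstt_1, \dots, \firstt_k \rangle$ also increases componentwise with a strict increase somewhere. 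This needs the standard fact that sorting is monotone under pointwise increase of multisets; handling ties in the ordering and walks becoming empty requires some care here.
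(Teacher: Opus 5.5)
Your proposal is correct and follows the same route as the paper: verify coverage, temporal edge disjointness, the lifespan bound, and the potential increase by checking only the modifications made directly in $\Sparsify$, and delegate $\ShiftForwards$ and the closing $\Concentrate$ call to \cref{thm:concentrate-monotone,thm:concentrate-lifespan-constraint,thm:concentrate-potential-inc}. Your version is more explicit than the paper's sketch. In particular, you spell out why the sorted start vector increases componentwise, with emptied walks counting as $\lifetime + 1$; the paper asserts this in a single sentence.
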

\begin{proof}
    The proofs for all four statements are analogous to those of \cref{thm:concentrate-monotone,thm:concentrate-potential-inc,thm:concentrate-lifespan-constraint}.
    In particular, all statements are already shown for $\ShiftForwards$, so it only remains to check the modifications made directly in $\Sparsify$.
    When a temporal edge is removed from $S$, it is added back to another walk.
    Combined with \cref{thm:concentrate-monotone}, $\E(S) \subseteq \E(S^*)$ follows.
    Similarly, when a temporal edge is added to a walk, it was previously removed from another walk or freed using $\ShiftForwards$.
    Next, throughout $\Sparsify$, no walk is modified to start earlier than before, so the potential $\Phi(S)$ cannot decrease.
    However, each iteration delays the start of the chosen temporal walk $p$, thereby increasing the potential.
    Finally, when adding edges to the walk $q$ in each branch of $\Sparsify$, we know by choice of $q$ that doing so does not make $q$ violate the lifespan constraint.
\end{proof}

\begin{lemma}
    \label{thm:sparsify-terminates}
    $\Sparsify$ terminates.
\end{lemma}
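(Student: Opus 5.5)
We show that $\Sparsify$ terminates with the same potential argument used for \cref{thm:concentrate-terminates}, applied to the whole run rather than to one loop iteration. The state of the procedure is the current schedule $S$. Every operation it performs is either one iteration of the while-loop body (the modification of $p$ and $q$) or one iteration of $\Concentrate$. We show that each such operation strictly increases $\Phi(S)$ with respect to the order defined above. We also show that $\Phi(S)$ ranges over a finite set. Together these bound the total number of operations.

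First, finiteness. Throughout, $\E(S) \subseteq \E(\G)$, so $\cardinality{\E(S)} \le \cardinality{\E(\G)}$. Each $\firstt_i(S)$ lies in $[1, \lifetime + 1]$. The number of walks never increases: walks only lose or gain temporal edges, and a walk that loses all its edges is treated as unused, so its $\firstt_i$ becomes $\lifetime + 1$. Hence $\Phi(S)$ takes values in a finite set of tuples.

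Second, we check that the order on potentials is acyclic. Suppose $\Phi(S) < \Phi(S')$. Then either $\cardinality{\E(S)}$ strictly increases, or it stays equal and the vector $(\firstt_i)_i$ weakly increases componentwise and changes in at least one entry. So the sum $\cardinality{\E(S)} \cdot (k(\lifetime+1)+1) + \sum_i \firstt_i(S)$ strictly increases. Therefore any chain of strict increases has length at most the number of values this sum can take, which is finite.

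Third, strict increase. \Cref{thm:concentrate-potential-inc} shows that every iteration of $\Concentrate$ strictly increases $\Phi$. In particular, $\Concentrate$ terminates, by \cref{thm:concentrate-terminates}, which covers the initial call and the call at the end of each loop body. \Cref{thm:sparsify-works} shows that the modification in the loop body strictly increases $\Phi$, provided $S$ is feasible and concentrated at the start of the body. This precondition holds by induction. The first $\Concentrate$ call makes $S$ concentrated and, by \cref{thm:concentrate-monotone,thm:concentrate-lifespan-constraint}, keeps it feasible. Each loop body ends with a $\Concentrate$ call that restores concentration. \Cref{thm:sparsify-works} together with \cref{thm:sparsify-sound} shows that feasibility is preserved in between.

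The main subtlety is that $\Concentrate$ may undo the progress of the body, for example by changing the start time of another walk. This does not matter, because the argument never needs the number of while-loop iterations to be small. It only needs every elementary step to increase a single global potential, and both kinds of step do so. One point still has to be checked: the case where a walk $p$ loses its only edges in the body of $\Sparsify$, so it becomes empty and its $\firstt$ jumps to $\lifetime+1$. This fits the potential, since the first-start vector still weakly increases and $\cardinality{\E(S)}$ does not decrease. With that, the total number of operations is finite, and $\Sparsify$ terminates.
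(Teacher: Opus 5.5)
Your argument is correct and is the same potential argument the paper uses. The paper's proof simply says it is analogous to \cref{thm:concentrate-terminates}, using \cref{thm:sparsify-works} to get that the bounded potential $\Phi$ strictly increases in every iteration, and you additionally spell out what the paper leaves implicit: the absence of infinite strictly increasing chains, emptied walks counting as $\lifetime+1$, and termination of the nested $\Concentrate$ calls. (Strictly, \cref{thm:sparsify-works} is stated for a whole iteration including the closing $\Concentrate$ call rather than the loop body alone, but your argument goes through under either reading.)
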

\begin{proof}
    Analogous to \cref{thm:concentrate-terminates} using \cref{thm:sparsify-works}.
\end{proof}

Using \cref{thm:sparsify-works,thm:sparsify-terminates}, we know that $\Sparsify$ turns any feasible schedule into one with a bounded number of active walks at every time step.
Thus, when searching for solutions to a $\lifeTEDSC$ instance, we can limit the search space to just the states that are relevant for such schedules.
Before we present the algorithm, we make one more observation, allowing us to further reduce the number of states.
\begin{lemma}
    \label{thm:st-U-prefix}
    Let $\mathcal I$ be a $\lenTEDSC$ or $\lifeTEDSC$ instance and $S$ a feasible schedule for $\mathcal I$.
    For every path $P$ in $\St(\mathcal I)$ corresponding to $S$, there is a path $P'$ corresponding to the same schedule $S$, such that for all states $(t, U, A, p, c) \in V(P')$ the set $U$ is of the form $U = [k']$ for some $k' \in [0, k]$.
\end{lemma}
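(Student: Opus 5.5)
The plan is to relabel the walks. A path $P$ in $\St(\mathcal I)$ together with the schedule $S$ assigns each walk index $i \in [k]$ the time step at which it leaves the unused set $U$. Call this the activation time $a_i$; set $a_i = \infty$ if $i$ stays in $U$ along all of $P$. Along $P$, the sets $U$ form a non-increasing chain by Condition~2 of \cref{def:state-graph}. I will choose a permutation $\pi$ of $[k]$ that sorts the indices by non-decreasing $a_i$, breaking ties arbitrarily but consistently. Then $P'$ is obtained by renaming every index $i$ to $\pi(i)$ in every state: in $U$ and in $A$, and in the domains of the profiles $p$ and $c$.

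First I check that $P'$ is again a path in $\St(\mathcal I)$ corresponding to $S$. Every condition in \cref{def:state-graph} is invariant under a consistent renaming of indices. Conditions~1 to~3 involve only the time steps and set inclusions, which a bijection preserves. Condition~4 is stated pointwise per index, so it holds for $\pi(i)$ exactly when it held for $i$. Condition~5 uses the multiset $\E(p_1,p_2)$ of temporal edges, which is unchanged by renaming. Hence every edge of $P$ maps to an edge of $\St(\mathcal I)$, and initial and terminal states map to initial and terminal states. The induced schedule, constructed as in \cref{thm:st-path-iff}, is $S$ with its walks relabeled, which is the same set of temporal walks.

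Second, I check the prefix property. Fix a state $(t, U, A, p, c)$ on $P$. Here $U$ consists exactly of the indices with $a_i > t$, taking the convention that an index leaves $U$ at the transition into the first state where it is absent. Since $\pi$ sorts the indices by $a_i$, the indices with $a_i \le t$ are mapped to an initial segment $[k - |U|]$ of $[k]$. So, as written, the renamed unused set is a suffix of $[k]$, not a prefix. I fix this by sorting in the opposite order: map the indices with the latest activation times (including $a_i = \infty$) to the smallest labels. Then the renamed $U$, which is the set of not-yet-activated indices, is exactly $[|U|]$. The set $\{i : a_i > t\}$ is an up-set in the activation order, so its image is a prefix of length $|U|$ at every time step. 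Ties cause no trouble, because indices with equal $a_i$ enter or leave $U$ together.

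The main obstacle is bookkeeping rather than mathematics: pinning down the activation time of indices that never become active, and confirming that a single global permutation works simultaneously for every state. The latter holds because membership in $U$ along $P$ is monotone in $t$ by Condition~2.
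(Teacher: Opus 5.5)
Your proof is correct, but it uses a different sorting key from the paper. The paper enumerates the walks of $S$ in decreasing order of their first move time $\firstt(p_i)$ and relabels by that order, so its key depends only on the schedule. You instead sort the indices by the time $a_i$ at which they leave $U$ along the given path $P$, latest first, with $a_i=\infty$ for indices that never leave.

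Because your key is read off $P$ itself, the prefix property follows for every path $P$ from Condition~2 alone (monotonicity of $U$). This includes paths in which an index enters $A$ and idles before its first temporal edge, leaves $U$ without ever becoming active, or is never used because $\cardinality{S}<k$. Your tie handling is also sound: any fixed bijection works, since indices with equal $a_i$ are always together inside or outside $U$.

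The paper's one-line argument implicitly identifies membership in $U$ with ``has not yet taken its first temporal edge.'' That holds for the natural path in which each walk leaves $U$ exactly at its first move. It is not literally true for an arbitrary path corresponding to $S$, where the activation order along $P$ can disagree with the order of the $\firstt(p_i)$. In exchange, the paper's version is shorter and gives an ordering that depends only on $S$. Your check that every condition of the state-graph definition is invariant under a global bijection of indices is what both arguments rely on.

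In a write-up, state the decreasing order of $a_i$ from the outset, rather than first sorting increasingly and then reversing.
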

\begin{proof}
    Enumerate the temporal walks in $S$ as $\set{p_i}_{i = 1}^k$ such that they are in decreasing order by $\firstt(p_i)$.
    Thus, if a walk $p_j$ is still unused at time step $t$, then all walks $p_i$ with $i < j$ are still unused too.
    The path $P'$ is obtained by replacing every state of $P$ with the equivalent state in which the indices of the temporal walks are permuted in this order.
\end{proof}

\begin{theorem}
    There is an algorithm that solves $\lifeTEDSC$ on bidirected star graphs in $\cardinality{\mathcal I}^{\O(n)}$ time.
\end{theorem}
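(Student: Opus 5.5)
We run the compressed state graph algorithm of \cref{thm:life-len-k-xp} with $\gapsize = (n-1)\cdot k$, but explore only a restricted family of states. By \cref{thm:st-path-iff}, a feasible schedule exists if and only if $\St(\mathcal I)$ contains a path from an initial to a terminal state. We therefore need two things. First, we show that some feasible schedule is witnessed inside the restricted state set. Second, we count the restricted states as $\cardinality{\mathcal I}^{\O(n)}$.

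For existence, we start from any feasible schedule $S$ and apply $\Sparsify$. By \cref{thm:sparsify-works,thm:sparsify-terminates}, together with the soundness lemmas \cref{thm:concentrate-sound,thm:sparsify-sound}, the result is a feasible, concentrated schedule with $\cardinality{S_t} \le f(n) = 22(n-1)$ for all $t$.

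Next we relate the active set $A$ of a state at time $t$ to $S_t$. A walk that has started but has $t > \firstt(p) + h$ can no longer move, so we may treat it as finished, i.e.\ removed from $A$. Hence we only need states with $A \subseteq S_t$, and therefore $\cardinality A \le f(n)$.

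We must also check that the gap compression of \cref{def:compressed-state-graph} keeps such a schedule representable. The rerouting of \cref{thm:tedw-n-times-k} inside a gap only moves walks that are already active and does not change $\firstt$, so the bound on $\cardinality{S_t}$ survives. Alternatively, the restricted state set only needs to bound $\cardinality A$, and compression does not enlarge $A$.

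Finally, by \cref{thm:st-U-prefix} we may relabel walks so that $U = [k']$ for some $k'$. Every walk outside $U \cup A$ is finished, so once $U$ is a prefix, the state is determined by $k'$, the subset $A$, and the profiles $p$ and $c$ on $A$.

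For the count, a naive choice of $A$ as a subset of $[k]$ of size at most $f(n)$ gives $k^{\O(n)}$ options, which is fine since $k < \cardinality D$. There are $k+1$ choices for $k'$ and $n^{f(n)}$ position profiles. For capacities, item~3 of \cref{def:compressed-state-graph} restricts each $c(i)$ to at most $\cardinality D$ values, giving $\cardinality D^{f(n)}$ options. There are $\O(\cardinality D \cdot n k)$ time steps. The total is $(n + k + \cardinality D)^{\O(n)} = \cardinality{\mathcal I}^{\O(n)}$ states, so building the graph and running a reachability search costs $\cardinality{\mathcal I}^{\O(n)}$ time.

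The main obstacle is justifying that restricting $A$ to the walks of $S_t$ loses nothing. The state-graph semantics of \cref{def:state-graph} force a walk to stay in $A$ until it is removed and then never return. We must therefore argue that removing a walk as soon as $t > \firstt(p)+h$ is valid, and that walks not yet in $S_t$ can stay in $U$. Both follow from the lifespan bound and from the fact that a walk enters $A$ exactly at its first move. We also need to verify that the item-3 capacity restriction holds for sparsified schedules. This holds because every walk can be assumed to start at a relevant time step: otherwise, trim its leading non-demand edges, which preserves feasibility, TED-ness and the bound $\cardinality{S_t} \le f(n)$, since starts are only delayed.
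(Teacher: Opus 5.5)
Your proposal has the same skeleton as the paper's proof. You restrict $\St_\gapsize(\mathcal I)$ to states with $\cardinality{A}\le f(n)$ and $U=[k']$, justify this with $\Sparsify$ and \cref{thm:st-U-prefix}, and count the states exactly as the paper does.

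The real difference is how you make the sparsified schedule compatible with gap compression. The paper stays inside the concentrated schedule: it deletes consecutive non-demand excursions $(c,v,t'),(v,c,t'')$ and then trims each walk to begin and end with a demand. You instead reroute inside each long gap with \cref{thm:tedw-n-times-k}, as in the general proof of \cref{thm:life-len-k-xp}. Your route works and is arguably more robust. Excursion deletion plus trimming only guarantees no moves in $[t_1+2,t_2-3]$. A walk may still have to step out to a leaf $w$ at time $t_2-2$ or $t_2-1$ to take a demand $(w,c,t_2)$ or $(w,c,t_2+1)$, and $\St_\gapsize$ omits those time steps for long gaps. Rerouting instead pushes every move into the first $\gapsize$ steps of the gap. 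It does require trimming trailing non-demand edges as well as leading ones. Otherwise a walk whose last move lies inside a gap could be rerouted to finish later and exceed $h$. After two-sided trimming, every walk that moves inside a gap is active on both sides of it, so rerouting changes neither its first nor its last move.

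One step is justified incorrectly. Trimming leading edges does \emph{not} preserve $\cardinality{S_t}\le f(n)$ ``since starts are only delayed''. Delaying $\firstt(p)$ shifts the window $[\firstt(p),\firstt(p)+h]$ to the right, so $p$ can newly enter $S_t$ for later $t$; the paper asserts the same monotonicity. The fix is to bound $A$ directly. Let each walk enter $A$ at its first move and leave right after its last move $\omega(p)$, which the transition rules permit. Then $p\in A_t$ only for $t\in[\firstt(p),\omega(p)+1]$. Trimming only shrinks this interval and rerouting preserves it. For the schedule returned by $\Sparsify$, the interval lies inside $[\firstt(p),\firstt(p)+h]$ because $\lifespan(p)\le h$. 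Hence $\cardinality{A_t}$ is at most $\cardinality{S_t}$ of the sparsified schedule, so at most $f(n)$, and the rest of your argument goes through.
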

\begin{proof}
    Let $\mathcal I = (G, D, k, h)$ be a $\lifeTEDSC$ instance on a bidirected star.
    Set $\gapsize = (n - 1) \cdot k$.
    To decide whether $\mathcal I$ is feasible, we run the algorithm from \Cref{thm:life-len-k-xp}, but instead of constructing the compressed state graph $\St_\gapsize(\mathcal I)$ from \cref{def:compressed-state-graph} completely, we consider a subset of vertices.
    Let $f\colon \N \to \N$ with $f(n) \in \O(n)$ be defined as above and define $\St^\star(\mathcal I)$ as follows:
    The graph $\St^\star(\mathcal I) \subseteq \St_\gapsize(\mathcal I)$ should contain only those states $(t, U, A, p, c) \in V(\St_\gapsize(\mathcal I))$ satisfying $\cardinality{A} \le f(n)$ and $U = [k']$ for some $k' \in [0, k]$.
    Thus, the number of states $\cardinality{V(\St^\star(\mathcal I))}$ is bounded by
    \[
        \begin{bracealign}
            \underbrace{\cardinality D \cdot n \cdot k}_t
            \cdot
            \underbrace{(k + 1)}_U
            \cdot
            \underbrace{k^{f(n)}}_A
            \cdot
            \underbrace{n^{f(n)}}_p
            \cdot
            \underbrace{\cardinality{D}^{f(n)}}_c
        \end{bracealign} \le \cardinality{\mathcal I}^{\O(n)}.
    \]
    As $\St^\star(\mathcal I) \subseteq \St(\mathcal I)$, the proof of \Cref{thm:life-len-k-xp} already shows that if our algorithm outputs YES, then a feasible schedule exists.
    For the other direction, it suffices to show that the existence of any feasible schedule implies that a schedule exists which is witnessed by $\St^\star(\mathcal I)$.
    Suppose that $\mathcal I$ is feasible and let $S$ be a feasible schedule.
    After applying $\Sparsify$ to $S$, we can assume that $S$ is concentrated and satisfies $\cardinality{S_t} \le f(n)$ for all $t \in [\lifetime]$.
    By \cref{thm:sparsify-sound,thm:sparsify-works,thm:sparsify-terminates}, this modified schedule is still a feasible solution.
    We perform two modifications to $S$ to ensure that $S$ is witnessed in the compressed state graph $\St_\gapsize(\mathcal I)$.
    Both of these modifications only remove temporal edges, so for all $t \in [\lifetime]$ the size of $S_t$ does not increase.
    \begin{enumerate}
        \item
              Whenever a temporal walk $p \in S$ contains two successive temporal edges $(c, v, t^\vout), (v, c, t^\vin) \notin D$, then remove both of them from $p$.
              Let $[t_1, t_2 - 1]$ be a gap of size at least $\gapsize$.
              We show that, after this step, $p$ does not contain a temporal edge during $[t_1 + 2, t_2 - 3]$, unless it is the first or last temporal edge of $p$.
              Suppose that $p$ contains such a temporal edge.
              Then, combined with the previous or next temporal edge of $p$, $p$ contains two successive temporal edges of the form $(c, v, t^\vout)$ and $(v, c, t^\vin)$.
              As $S$ was concentrated before this step, we have $t^\vin \le t^\vout + 2$, so $t^\vout, t^\vin \in [t_1, t_2 - 1]$.
              Thus, both $t^\vout$ and $t^\vin$ are contained in the gap, so neither of the two temporal edges is contained in $D$, and they are both removed in this step.
        \item
              Trim the start and end of each temporal walk $p \in S$ such that the first and last temporal edge of $p$ is a demand in the draft schedule $D$.
              After this modification, for each gap $[t_1, t_2 - 1]$ of size at least $\gapsize$, no temporal walk moves during the interval $[t_1 + 2, t_2 - 1]$.
    \end{enumerate}
    Therefore, $S$ is witnessed by $\St_\gapsize(\mathcal I)$ and satisfies $\forall t \in [\lifetime]: \cardinality{S_t} \le f(n)$.
    This justifies the restriction of $\St^\star(\mathcal I)$ to only those states that satisfy $\cardinality{A} \le f(n)$.
    As a final step, we apply \cref{thm:st-U-prefix} to $S$, which justifies further restricting the states to only those with $U = [k']$ for some $k' \in [0, k]$.
    Thus, if a feasible schedule exists, then one exists that is witnessed by just the states of $\St^\star(\mathcal I)$.
\end{proof}

\section{Approximating Constrained TEDSC}\label{sec:approx}
So far we studied the schedule completion problem as a decision problem.
In this section, we switch perspective and investigate the corresponding search problem in which the output of an algorithm should be a schedule using the smallest possible number of walks.

\begin{problem}[]{\lang{Min-$\chi$-TEDSC} ($\chi \in \set{\len, \lifespan}$)}
    \Input & Static graph $G$, draft schedule $D$, integer $h$.            \\
    \Prob  & A schedule $S$ with $D \subseteq \E(S)$, $\chi(p) \le h$ for all $p \in S$, and of minimal size $\cardinality S$.
\end{problem}

We provide an approximation algorithm with regard to $\cardinality S$.
The main idea for the algorithm is to use the compressed static expansion $\SEcompressed{\gapsize}(G,D)$ from \cref{sec:unconstrained} and ask a slightly more involved question:
Instead of computing a flow of value $k$, we compute a \emph{minimum cost flow} of value $k$ for a suitable cost function.
This flow can then be transformed into a schedule of at most $2k - k/h$ walks.
The cost functions for $\lenTEDSC$ and $\lifeTEDSC$ are defined as follows.
\begin{definition}[$\len$-cost]
    The $\len$-cost of an edge $e = (u_t, v_{t'})$ is $c_\len(e) = d_G(u, v)$.
\end{definition}
\begin{definition}[$\lifespan$-cost]
    The $\lifespan$-cost of an edge $e = (u_t, v_{t'})$ is $c_\lifespan(e) = t' - t$.
\end{definition}
Observe that the cost of a path according to $c_\lifespan$ / $c_\len$ in the static expansion equals exactly the lifespan / length of the corresponding temporal walk in $\G$.
In particular, the $\ell$-cost of a waiting edge is 0; the $\ell$-cost of a moving edge in the uncompressed static expansion is 1; and the $\ell$-cost of a moving edge in the compressed static expansion is exactly the distance of the shortest static path between the corresponding vertices in $G$.
This motivates the following lemma.

\begin{lemma}
    \label{thm:approx-flow-exists}
    Let $\chi \in \set{\len, \lifespan}$ and let $S$ be a schedule for the \lang{$\chi$-TEDSC} instance $(G, D, k, h)$.
    Let $P_S$ be the set of static $s$-$z$-paths in $(\SEcompressed{(n-1)\cdot k}(G,D), r)$ from \Cref{def:compressed-tedsc-network} corresponding to $S$.
    Then $P_S$ induces a feasible flow $f_S$ of value $\cardinality S$ with
    \[
        c_\chi(f_S) = \sum_{p \in S} \chi(p)\le k \cdot h.
    \]
\end{lemma}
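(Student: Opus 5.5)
The plan is to prove the statement path by path. First, show that the static paths $P_S$ form a feasible flow of value $\cardinality S$. Then show that the $c_\chi$-cost of the path corresponding to each $p \in S$ is exactly $\chi(p)$. Summing over the paths gives $c_\chi(f_S) = \sum_{p \in S} \chi(p)$, and the final bound follows because $\cardinality S \le k$ and $\chi(p) \le h$ for every $p \in S$.

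\emph{Feasibility.} Set $\gapsize = (n-1) \cdot k$ and let $f_S(e)$ be the number of paths of $P_S$ that use the edge $e$. Each path leaves $s$ once and enters $z$ once, so $f_S$ satisfies flow conservation and has value $\cardinality S$. The range constraints follow as in the proof of \cref{thm:tec-poly-n-tau}.
\begin{itemize}
    \item Since $S$ is TED, every moving edge carries flow at most $1$.
    \item Since $D \subseteq \E(S)$, every demand edge carries flow exactly $1$.
    \item Waiting edges, compressed gap edges and the edges at $s$ and $z$ have range $(0,\infty)$, so they impose no constraint.
\end{itemize}
For $P_S$ to live in the compressed network, I will read ``corresponding to $S$'' as the mapping used in \cref{thm:tedsc-poly-n-d}. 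Whenever $p$ crosses a gap $[t_1+1, t_2)$ of size at least $\gapsize$, its part from $u_{t_1+1}$ to $v_{t_2}$ is replaced by the single compressed edge $(u_{t_1+1}, v_{t_2})$. This edge exists because $u$ reaches $v$ in $G$, namely along the part of $p$ inside the gap.

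\emph{Cost of one path.} For $p \in S$ with corresponding path $P_p$, I split $P_p$ into three segments: the idle prefix from $s$ to $v_{\firstt(p)}$, the active segment between the first and the last temporal edge of $p$, and the idle suffix to $z$.

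For $\chi = \lifespan$, the cost $c_\lifespan(u_t, v_{t'}) = t'-t$ telescopes along the active segment, including across compressed edges, because such an edge from layer $t_1+1$ to layer $t_2$ costs exactly the number of skipped layers. The active segment therefore costs $t_{\mathrm{end}} - \firstt(p) = \lifespan(p)$.

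For $\chi = \len$, moving edges cost $1$ and waiting edges cost $0$. A compressed edge from $u$ to $v$ costs $d_G(u,v)$, which is at most the number of moves $p$ makes inside that gap. I will argue that the number of moves equals $d_G(u,v)$: by \cref{thm:tedw-n-times-k} the part of a schedule inside a long gap may be taken to consist of shortest paths. This is the same exchange used when the compressed network is related to schedules in \cref{thm:tedsc-poly-n-d}, and it gives equality rather than mere inequality. The active segment then costs $\len(p)$.

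\emph{Main obstacle: the idle segments.} Under $c_\len$ the idle prefix and suffix consist only of waiting edges and edges at $s$ and $z$, so they cost $0$. Under $c_\lifespan$, however, a literal reading would charge $1$ per waiting layer. I expect this to be the delicate point. My first attempt is to fix the correspondence so that only the active window of $p$ is charged. Concretely, I would identify the flow's cost with the cost of the part of each path between its first and last moving edge, consistent with the paper's remark that the cost of a path equals the lifespan of the corresponding walk. I would check that this convention is exactly what the min-cost flow of the approximation algorithm minimizes. If that convention is not available, the fallback is to assign cost $0$ to waiting edges that lie outside every walk's active window, and to verify that the per-path sums still equal $\lifespan(p)$.

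With each path costing $\chi(p)$, summing over $S$ gives $c_\chi(f_S) = \sum_{p \in S} \chi(p) \le \cardinality S \cdot h \le k \cdot h$.
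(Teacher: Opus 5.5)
Your feasibility argument and your treatment of the $\len$ case are sound. The $\lifespan$ case, however, has a genuine gap, and it cannot be closed inside the network as defined.

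In $\SEcompressed{\gapsize}(G,D)$ the source is joined only to layer $1$ and the sink only to layer $\lifetime+1$. So every $s$-$z$ path crosses all layers. Since $c_\lifespan(u_t,v_{t'})=t'-t$ telescopes, also over compressed edges, every such path costs exactly $\lifetime$ (taking the edges at $s$ and $z$ to cost $0$). Hence $c_\lifespan(f_S)=\cardinality{S}\cdot\lifetime$ for every schedule. This differs from $\sum_{p\in S}\lifespan(p)$ whenever some walk waits before its first or after its last move, and it can exceed $k\cdot h$. For example, take the bidirected edge $ab$, $D=\set{(a,b,1),(b,a,2),(a,b,9)}$ and $k=h=2$. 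The walk using $(a,b,1),(b,a,2)$ and the walk using $(a,b,9)$ form a feasible schedule, yet $c_\lifespan(f_S)=18>4$.

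Neither of your remedies helps.
\begin{itemize}
\item Your first remedy is not available. The quantity $c_\lifespan(f_S)=\sum_e f_S(e)\,c_\lifespan(e)$ is fixed edge by edge, independently of any path decomposition. You cannot choose to charge only the active part of each path, and that is not what a min-cost flow minimises.
\item The fallback fails because waiting edges have range $(0,\infty)$. A single edge $(v_t,v_{t+1})$ can carry a walk inside its active window together with a walk that has not started yet, and any fixed cost on it mischarges one of them. Such a cost would also depend on $S$, which the algorithm does not know.
\end{itemize}
The paper's own justification, the remark that the cost of a path equals $\chi$ of its walk, passes over exactly this point.

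The missing idea is to change the network rather than the cost. Add edges $(s,v_t)$ and $(v_t,z)$ with range $(0,\infty)$ and cost $0$ for every layer $t$ that survives compression. Then trim each walk so that its first and last temporal edges are demands. This keeps $S$ feasible and does not increase any $\chi(p)$. It also guarantees that layers $\firstt(p)$ and $t_{\mathrm{end}}+1$ are not removed by compression. A walk covering no demand becomes $s\to v_t\to z$ of cost $0$, so the value stays $\cardinality S$. Now route $p$ as $s\to u_{\firstt(p)}\to\dots\to v_{t_{\mathrm{end}}+1}\to z$. Your telescoping argument gives cost exactly $\lifespan(p)$ for the trimmed walk, and your feasibility argument carries over unchanged.

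A smaller point for $\len$: for the given $S$, a compressed edge costs $d_G(u,v)$, which can be smaller than the number of moves $p$ makes in that gap. The exchange via \cref{thm:tedw-n-times-k} replaces $S$ by a different schedule. So what you actually prove is $c_\len(f_S)\le\sum_{p\in S}\len(p)$. That inequality is all the bound $k\cdot h$, and the approximation algorithm, need, so state it that way.
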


This means that if the minimum-cost flow has cost more than $k \cdot h$ then there is no feasible schedule.
In the other direction, a flow of cost at most $k \cdot h$ does not immediately yield a feasible schedule, since its total cost may be concentrated on only a few paths with cost larger than $h$.
To deal with this, we expand each compressed path to a path in the uncompressed static expansion $\SE(G,D)$ and split it greedily into segments of cost at most $h$, each of which becomes one walk in the schedule.

\begin{definition}
    For a static path $p$ in $\SE(G,D)$ we define its \emph{segment set} $\seg(p)$ as follows.
    If $c_\chi(p) = 0$ then $\seg(p) = \emptyset$.
    Else if $c_\chi(p) \le h$ then $\seg(p) = \set{p}$.
    Otherwise, let $p'$ be the longest prefix of $p$ that still satisfies $c_\chi(p') \le h$, and let $r$ be the path obtained by removing the edges of $p'$ from $p$.
    We set $\seg(p) = \set{p'} \cup \seg(r)$.
\end{definition}

In particular, every edge in $\SE(G,D)$ has $c_\chi$-cost at most $1$: lifespan-cost is exactly $1$ per edge, and length-cost is $0$ for waiting edges and $1$ for moving edges.
Therefore, every segment $p' \in \seg(p)$ except the last one satisfies $c_\chi(p') = h$.
Using this observation, we can bound the number of walks we generate by splitting.

\begin{lemma}
    \label{thm:approx-split-loss}
    Let $f$ be a feasible flow of value $k$ in $(\SEcompressed{(n-1)\cdot k}(G,D), r)$ such that $c_\chi(f) \le k \cdot h$.
    Let $P_f$ be a set of $k$ static $s$-$z$-paths inducing $f$ and $Q_f$ be the corresponding $k$ paths in $\SE(G,D)$ obtained by taking each walk in $P_f$ and (a) removing $s$ and $z$ and (b) expanding gaps according to \Cref{thm:tedw-n-times-k}.
    Consider the set $W_f$ of temporal walks in $\G(G,D)$ corresponding to $Q_f$.
    Then the set of segments
    \[
        W_f' = \bigcup_{q \in Q_f} \seg(q)
    \]
    viewed as temporal walks in $\G$, satisfies $\cardinality{W_f'} \le 2k - \frac k h$.
\end{lemma}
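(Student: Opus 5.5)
The plan is a counting argument. Each path $q \in Q_f$ contributes exactly $\lceil c_\chi(q)/h \rceil$ segments. The total cost of $Q_f$ is at most $k \cdot h$. Summing integer ceilings over $k$ paths then gives the bound $2k - k/h$.

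First I will show that passing from $P_f$ to $Q_f$ does not change the $c_\chi$-cost of any path, so $\sum_{q \in Q_f} c_\chi(q) = c_\chi(f) \le k \cdot h$. The edges incident to $s$ and $z$ are removed; I take their cost to be $0$, as fixed in the cost convention. For the other edges there are three cases.
\begin{itemize}
\item Uncompressed edges are identical in both graphs and keep their cost.
\item For $c_\lifespan$, a compressed edge $(u_{t_1+1}, v_{t_2})$ costs $t_2 - t_1 - 1$. Any expansion in $\SE(G,D)$ between these two layers consists of exactly $t_2 - t_1 - 1$ edges of cost $1$, so the cost is unchanged.
\item For $c_\len$, the compressed edge costs $d_G(u,v)$. \Cref{thm:tedw-n-times-k} expands it into a shortest $u$-$v$-path padded with waiting edges, whose $\len$-cost is again $d_G(u,v)$.
\end{itemize}
Since $f$ has value $k$, $\cardinality{Q_f} = k$.

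Next I will show that $\cardinality{\seg(q)} = \lceil c_\chi(q)/h \rceil$ for every static path $q$ in $\SE(G,D)$, by induction on $c_\chi(q)$. If $c_\chi(q) = 0$ the segment set is empty, and if $0 < c_\chi(q) \le h$ it has one element. Otherwise, every edge of $\SE(G,D)$ costs $0$ or $1$, so the longest prefix $p'$ with $c_\chi(p') \le h$ has cost exactly $h$. The remainder $r$ has cost $c_\chi(q) - h > 0$, and the induction hypothesis applies to $r$. The one point to check is that "longest" absorbs trailing zero-cost (waiting) edges into $p'$. Consequently every segment has positive cost, and the recursion produces neither empty segments nor an extra trailing one.

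Finally, for integers $c \ge 0$ and $h \ge 1$ we have $\lceil c/h \rceil \le (c + h - 1)/h$. Therefore
\[
\cardinality{W_f'} \le \sum_{q \in Q_f} \frac{c_\chi(q) + h - 1}{h} \le \frac{k h + k(h-1)}{h} = 2k - \frac{k}{h}.
\]

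The main obstacle I expect is the cost-preservation step under gap expansion. It needs the expanded paths from \Cref{thm:tedw-n-times-k} to be shortest paths (for $\len$) and to span exactly the removed layers (for $\lifespan$). It also needs the $c_\chi$ values of the compressed edges to match these quantities precisely. The counting itself is routine.
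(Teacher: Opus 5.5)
Your proof is correct. Its first half, cost preservation when expanding $P_f$ to $Q_f$, matches the paper, which asserts this in one line; your case analysis, with the explicit convention that the edges at $s$ and $z$ cost $0$, is a faithful elaboration. The difference is in the final count. The paper sorts the segments of $W_f'$ globally into big ones (cost exactly $h$) and little ones (cost between $1$ and $h-1$). It bounds $\cardinality{B} = b \le k$ and $\cardinality{L} \le \min\set{k, (k-b)h}$, then maximizes $b + \min\set{k, (k-b)h}$ over $b$, treating $b$ as a real number, and finds the optimum at $b = k - k/h$. You instead count the segments of each path exactly as $\lceil c_\chi(q)/h \rceil$ and apply the integrality bound $\lceil c/h \rceil \le (c+h-1)/h$, which sums directly to $2k - k/h$. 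Both arguments rest on the same facts: every non-final segment costs exactly $h$ because edges of $\SE(G,D)$ cost at most $1$, each path has at most one final segment, and every segment costs at least $1$. Your per-path version avoids the optimization over $b$ and shows at a glance that the bound can be tight only if every path has cost $\equiv 1 \pmod h$. The paper's aggregate view instead makes the trade-off between full and partial segments explicit. One minor point: it is not the \emph{longest}-prefix choice absorbing trailing waiting edges that makes every segment have positive cost. That already follows from $c_\chi(r) = c_\chi(q) - h > 0$ together with the base case $\seg(p) = \emptyset$ when $c_\chi(p) = 0$.
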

\begin{proof}
    We first expand $P_f$ to $Q_f$.
    For each $p_i \in P_f$ and each gap edge $(u, t_1) \to (v, t_2)$ in $p_i$, we use \Cref{thm:tedw-n-times-k}: path $i$ waits at $u$ until time $t_1 + (i-1)(n-1)$, traverses a shortest $u$-$v$-path in $G$, then waits at $v$ until $t_2$.
    This is valid since $t_2 - t_1 \ge (n-1)k$.
    Both $c_\len$ and $c_\lifespan$ are preserved by this expansion, so $\sum_{q \in Q_f} c_\chi(q) = c_\chi(f) \le k \cdot h$.
    Furthermore, the edges of the walks in $Q_f$ are distributed exactly to the walks in $W_f'$, except for the edges discarded in the case of $c_\chi(p) = 0$.
    However, any $p$ with $c_\chi(p) = 0$ traverses only waiting edges, so omitting such a segment from $W_f'$ does not alter the feasibility.

    The set $W_f'$ contains two kinds of segments, \emph{big} segments $p'$ with $c_\chi(p') = h$ and \emph{little} segments $p'$ with $c_\chi(p') < h$.
    We denote the sets of big and little segments as $B$ and $L$ respectively.
    The size of $B$ can be at most $k$ as otherwise $\sum_{p \in B} c_\chi(p) = \cardinality B \cdot h$ would exceed $k \cdot h$.
    Furthermore, the size of $L$ is at most $k$ because for each $q \in Q_f$ we have $\cardinality{L \cap \seg(q)} \le 1$.
    This already gives $\cardinality{W_f'} = \cardinality{B \cup L} \le 2k$.

    Observe that the two bounds for $\cardinality B$ and $\cardinality L$ are worst-case bounds.
    We can slightly improve the overall bound by analyzing the relation between these two terms.
    If $\cardinality B = k$ then $\cardinality L = 0$ because the cost of $B$ is already $k \cdot h$ and no walk in $L$ has cost $0$.
    Suppose that $\cardinality B = b$, then the remaining cost for the walks of $L$ is $(k - b) \cdot h$.
    As each segment costs at least $1$, this gives $\cardinality{B \cup L} \le b + \min\set{k, (k - b) \cdot h}$.
    This term is maximized at $b = k - \frac k h$ which gives $\cardinality{B \cup L} \le 2k - \frac k h$.
\end{proof}

\begin{algorithm}
    \SetupAlgoInputOutput
    \AlgoInput{\lang{Min-$\chi$-TEDSC} instance $(G, D, h)$}
    \AlgoOutput{Schedule $S$ with $\cardinality S \le (2 - h^{-1}) \OPT$}

    \For{$k = 0, 1, \dots, \cardinality D$}{
        construct $H = \SEcompressed{(n - 1)k}(G, D)$ and flow ranges $r$ according to \Cref{def:compressed-tedsc-network} \;
        \If{a feasible flow exists for $(H, r)$}{
            let $f$ be the minimum cost flow for $(H, r, c_\chi)$ \;
            \If{$c_\chi(f) \le k \cdot h$}{
                let $P$ be the set of $k$ paths in $\SEcompressed{(n-1)k}(G,D)$ inducing $f$ \;
                let $Q$ be the expansion of $P$ in $\SE(G,D)$ via \Cref{thm:tedw-n-times-k} \;
                \Return $S = \bigcup_{q \in Q} \seg(q)$ as temporal walks \;
            }
        }
    }

    \caption{\textsc{Min-$\chi$-TEDSC Approx.} ($\chi \in \set{\len, \lifespan}$)}
    \label{alg:two-approx}
\end{algorithm}

\begin{theorem}
    Given a \lang{Min-$\chi$-TEDSC} instance, \Cref{alg:two-approx}
    computes a feasible schedule $S$ with $\cardinality S \le (2 - h^{-1}) \OPT$ in polynomial time.
\end{theorem}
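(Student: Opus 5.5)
We prove three things in order: the algorithm terminates with output, its output is feasible, and its size and running time meet the bounds. Let $\OPT$ be the size of a minimum feasible schedule $S^\star$. Note $\OPT \le \cardinality D$ by \cref{thm:k-lt-D}, so the loop over $k = 0, \dots, \cardinality D$ reaches $k = \OPT$.

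\textbf{The loop stops by iteration $k = \OPT$.} Apply \cref{thm:approx-flow-exists} to $S^\star$ with $k = \OPT$. It gives a feasible flow $f_{S^\star}$ of value $\OPT$ in $(\SEcompressed{(n-1)\OPT}(G,D), r)$ with $c_\chi(f_{S^\star}) \le \OPT \cdot h$. Hence a feasible flow exists, and the minimum cost flow $f$ of value $k$ satisfies $c_\chi(f) \le c_\chi(f_{S^\star}) \le kh$, so the algorithm returns at this iteration or earlier. Let $k' \le \OPT$ be the iteration where it returns.

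\textbf{Feasibility of the output.} At iteration $k'$ the flow $f$ has value $k'$ and cost at most $k'h$. Decompose $f$ into $k'$ paths $P$; this is possible since the network is a DAG. The ranges $(1,1)$ on demand edges and $(0,1)$ on other moving edges make the corresponding walks cover $D$ and be TED; this is the argument of \cref{thm:tec-poly-n-tau} together with \cref{thm:tedsc-poly-n-d}. Expanding compressed gap edges via \cref{thm:tedw-n-times-k} keeps the walks TED, because each gap is crossed by at most $k'$ paths and $\gapsize = (n-1)k'$.

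Splitting each expanded path into segments partitions its temporal edges. It only drops zero-cost segments, which consist of waiting edges only. So the segments still cover $D$ and remain pairwise TED. Each segment has $c_\chi \le h$, and $c_\chi$ of a path equals the length or lifespan of the corresponding walk, so the $\chi$-constraint holds.

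\textbf{Size bound.} \Cref{thm:approx-split-loss} gives $\cardinality S \le 2k' - k'/h = (2 - h^{-1})k' \le (2 - h^{-1})\OPT$. Since $2 - h^{-1} > 0$ for $h \ge 1$, the bound is monotone in $k'$.

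\textbf{Running time.} This step needs the most care, because the expanded walks can span pseudo-polynomially many time steps.
\begin{itemize}
\item There are $\cardinality D + 1$ iterations.
\item Each compressed network has polynomial size by \cref{thm:tedsc-poly-n-d}.
\item Feasibility with lower bounds reduces to max flow, and a min-cost flow of fixed value is computable in polynomial time; both facts are from \cref{sec:flow-prelims}.
\end{itemize}
The expansion and segmentation must not be written out step by step over long waiting stretches. We keep them implicitly: each path is represented by its moving edges and its waiting intervals. The segment boundaries are then computed arithmetically. For $\lifespan$-cost, each boundary is $h$ time steps after the previous one. For $\len$-cost, boundaries are found by counting moves. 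Either way, $S$ is output as polynomially many walks, each described by polynomially many temporal edges plus waiting intervals.

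Two edge cases remain. The case $k = 0$ applies only when $D = \emptyset$, and then the empty schedule is correct. The possibility $c_\chi(f) = 0$ with $D \ne \emptyset$ cannot occur, since every demand edge has cost at least $1$.
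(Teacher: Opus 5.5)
Your proof is correct and follows the paper's argument. Applying \cref{thm:approx-flow-exists} to an optimal schedule makes iteration $k=\OPT$ pass both tests, so the algorithm stops at some $k^*\le\OPT$, and \cref{thm:approx-split-loss} then bounds the output by $(2-h^{-1})k^*\le(2-h^{-1})\OPT$; your explicit feasibility check and your implicit, interval-based handling of gap expansion and segmentation usefully spell out what the paper only asserts as ``also polynomial''.
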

\begin{proof}
    \Cref{alg:two-approx} iterates over at most $\cardinality D + 1$ values of $k$, and each iteration runs a min-cost flow on $(\SEcompressed{(n-1) k}(G,D), r)$, which has polynomial size.
    The expansion and segmentation steps are also polynomial, so the algorithm runs in polynomial time.
    Note that a solution with $k = \cardinality D$ always exists, so let $S^*$ be a feasible schedule with $\cardinality{S^*} = \OPT \le \cardinality D$.
    By \Cref{thm:approx-flow-exists}, $S^*$ induces a feasible flow of value $\OPT$ and cost at most $\OPT \cdot h$ in $(\SEcompressed{(n-1)\cdot k}(G,D), r)$, so both conditions of the algorithm are satisfied at iteration $k = \OPT$.
    Since the algorithm returns at the first $k$ satisfying both conditions, it terminates at some $k^* \le \OPT$.
    By \Cref{thm:approx-split-loss} the returned schedule $S$ satisfies $\cardinality S \le 2k^* - \frac{k^*}{h} \le (2 - \frac 1 h) \OPT$.
\end{proof}

\section{Conclusion}
\label{sec:conclusion}
In this paper, we studied the complexity of $\TEDSC$, $\lenTEDSC$, and $\lifeTEDSC$.
We completely explored the parameterized complexity landscape with regard to the parameters $k$, $h$, and $\cardinality D$.
For restricted graph classes, we show that the complexities of the two variants differ.
One immediate open question is whether our polynomial-time algorithm on fixed bidirected stars can be generalized to trees.
Our proof heavily relies on the structure of star graphs, so we would need a different approach to generalize it.

To make $\TEDSC$ even more realistic, we can consider extending it in different ways.
First, we currently assume that the travel time of every edge is $1$, so one natural extension is to allow different edges to have varying travel times.
Our negative results extend to this setting, however, it is not clear whether our positive results would still hold.
The second extension is based on the start and end points of the walks.
In real-world transportation networks, only a subset of the stations can serve as depots where trains can start or end their trip.
We can extend $\TEDSC$ such that as part of the input we are given a subset of the vertices where the walks can start or end, and study whether the computationally hard variants become tractable.
Next, we current assume that any number of trains can reside at the same station concurrently.
This is an oversimplification of the real-world, which is rectified by specifying a capacity for each vertex in the input.
Finally, $\TEDSC$ considers time-stamped edge traversals as the elementary requirement.
To allow for more flexible objectives, we could allow for demands between vertices that are more than one edge away from each other.
In this context, it might also be sensible to specify that a trip should happen within a given time window rather than at a specified time step.

\section*{Acknowledgments}
The authors want to thank Ben Bals for useful discussions on \cref{sec:approx}.

\bibliography{references/references.bib}

\end{document}